\newif\ifshowshcomments
\newtheorem{claim}{Claim}
\newtheorem{definition}{Definition}
\newtheorem{theorem}{Theorem}
\newtheorem{proposition}{Proposition}
\newtheorem{observation}{Observation}
\newtheorem{lemma}{Lemma}
\newtheorem{regularity assumption}{Regularity assumption}
\newtheorem{corollary}{Corollary}
\newtheorem{example}{Example}
\newcommand{\context}{\text{framing} }
\DeclareMathOperator*{\argmax}{arg\,max}
\DeclareMathOperator*{\argmin}{arg\,min}
\DeclareMathOperator*{\maximize}{maximize}
\newcommand{\eps}{\varepsilon}
\newcommand{\BS}{\text{BS}}
\newcommand{\Ss}{\mathcal{S}}
\newcommand{\I}{\mathcal{I}}
\newcommand{\OC}{\text{OF}}
\newcommand{\IBS}{\mathcal{I}_{BS}}
\newcommand{\IOC}{\mathcal{I}_{OF}}
\newcommand{\al}{a_{\ell}}
\newcommand{\atheta}{a^{\theta}}
\newcommand{\A}{\mathcal{A}}
\newcommand{\muhat}{\hat{\mu}}
\newcommand{\wtilde}{\widetilde{\omega}}
\newcommand{\atilde}{\widetilde{a}}
\newcommand{\vmax}{v^{max}}
\newcommand{\E}{\mathbb{E}}
\newcommand{\squishlist}{
   \begin{list}{$\bullet$}
    { \setlength{\itemsep}{0pt}      \setlength{\parsep}{0pt}
      \setlength{\topsep}{3pt}       \setlength{\partopsep}{0pt}
      \setlength{\leftmargin}{1.5em} \setlength{\labelwidth}{1em}
      \setlength{\labelsep}{0.5em} } }
\newcommand{\squishend}{  \end{list}  }
\title{
Information Design with Large Language Models%
\thanks{
This work features both a theoretical framework, and empirical results. For the experiments, we use GPT-5.2~\cite{agarwal2025gpt} and Claude Sonnet 4~\cite{anthropic2025claudesonnet4} for the LLM portions of the framework, and SciPy~\cite{2020SciPy-NMeth} for the analytical parts. All experiments were conducted by the Harvard University group.}}
\author
{Paul D\"{u}tting \\ Google Research
\and
Safwan Hossain\footnote{Authors listed alphabetically. Correspondence to \href{mailto:shossain@g.harvard.edu}{\texttt{shossain@g.harvard.edu}} or \href{mailto:shossain@g.harvard.edu}{\texttt{lintao@cuhk.edu.cn}}} \\ Harvard University
\and
Tao Lin$^{\text{†}}$ \\ Harvard University
\and
Renato Paes Leme \\ Google Research
\and
Sai Srivatsa Ravindranath \\ Harvard University
\and
Haifeng Xu \\ University of Chicago
\and
Song Zuo \\ Google Research}
\date{}
\begin{document}

\maketitle

\begin{abstract}
Information design is typically studied through the lens of Bayesian signaling, where signals shape beliefs purely based on their correlation with the true state of the world. However, behavioral economics and psychology emphasize that human decision-making is more complex and can depend on how information is framed. This paper formalizes a language-based notion of framing and bridges this to the popular Bayesian-persuasion model. We model framing as a possibly non-Bayesian, linguistic way to influence a receiver's prior belief, while a signaling/recommendation scheme can further refine this belief in the classic Bayesian way. A key challenge in systematically optimizing in this framework is the vast space of possible framings and the difficulty of predicting their effects on receivers. Based on growing evidence that Large Language Models (LLMs) can effectively serve as proxies for human behavior, we formulate a theoretical model based on access to a framing-to-belief mapping. This model then enables us to precisely characterize when solely optimizing framing or jointly optimizing framing and signaling is tractable. We substantiate our theoretical analysis with an empirical study that leverages LLMs to optimize over the natural-language framing space using an iterative prompt optimization method combined with analytical solvers for optimal signaling schemes.
\end{abstract}


\section{Introduction}

Information design is a core concept in microeconomics and decision theory that considers the strategic communication of information from one party (i.e., \emph{sender}) to shape the decisions of another (i.e., \emph{receiver}) \citep{bergemann2019information}. Classical information design models, such as \emph{cheap talk} \citep{crawford_strategic_1982} and \emph{Bayesian persuasion} \citep{kamenica2011bayesian}, assume that the sender can choose and reveal some correlation between the relevant \emph{states} of the world and communicated information (\emph{signal}), allowing the decision-maker to update their belief in a fully Bayesian manner. That is, the chosen quantitative correlations between states and signals are all that is relevant in influencing beliefs in such theoretical models. 

This purely Bayesian perspective, however, stands in contrast to established findings in behavioral economics and psychology, which emphasize that human decision-making is more complex and can depend on the \emph{framing} of information. The landmark work of \citet{tversky1981framing} observes that humans, asked to choose between two options to combat the outbreak of a hypothetical disease, decide differently depending on how the downstream effects of the two options are phrased (e.g., ``$30\%$ change of saving 600 people'' \emph{vs} ``70\% chance that all people will die''). To explain this seemingly irrational behavior, they argue that raw information combines with societal norms to affect the decision-maker's perception of ``acts, outcomes, and contingencies''. Closer to our setting, there is ample evidence that ``persuasion in the field'' is much richer than the existing information design models (see, e.g., the influential survey by \citet{dellavigna2010persuasion}), where (i) tangential information given before a recommendation (signal); (ii) text or phrasing used to support a recommendation; (iii) and even non-textual cues such as color, background images, or fonts, all impact the effectiveness of persuasive communication. 

To make this concrete and outline a running example, consider an advertising scenario where a clothing brand designs the slogan for a new line of products to target a group of consumers (e.g., outdoor enthusiasts or fashion forward youth). The slogan (framing) typically contains generic platitudes or information that applies to all products carried by the brand. Different slogans, which can emphasize different aspects of the brand or use specific phrasing, induce different perceptions. This mapping from slogans to perception is consumer-specific, possibly non-Bayesian and importantly, influenced by a myriad of factors that the sender cannot influence. Alongside designing a slogan (which is common for all products in this line), the clothing brand can offer different discounts for different products. We view discounts as state-dependent signals because a consumer can observe using historical trends how strongly a large discount correlates with product quality or style, and update their belief about a product accordingly. This correlation between discounts and product features is directly determined by the brand. From the clothing brand's perspective then, it may be free to optimize over both the slogan (framing) and the discounting strategy (signaling) to better persuade 
its target demographic. 

Systematically reasoning about and optimizing over 
framing
is generally challenging.  
First, the space of all framings to choose from is a vast, primarily linguistic space. 
Second, the effect of framing on a receiver is personalized and difficult to characterize rigorously. 
Fortunately, the advent of Large Language Models (LLMs) offers a new approach to addressing these twin challenges. LLMs are adept at understanding the structure of natural language. Furthermore, a growing line of work in economics (e.g., \citep{horton2023large, brand2023using, leng2024can, dillion2023can, manning2024automated}) shows that LLMs can be used to simulate human behavior 
on a host of problems, or even tuned to be agents that 
make decisions on behalf of humans \citep{fish2023generative, soumalias2025llmpowered}. Thus, it it timely to initiate a formal exploration of these challenges within the context of information design. 
We propose a theoretical model for information design with framing effect, analyze different variants of framing optimization problems under this model, and employ LLMs to explore and exploit framing effects in information design.

\subsection{Summary of Our Contributions}
\textbf{Model and Problem Formulation:} Our first contribution is a theoretical framework for information design with framing effect (Section~\ref{sec:model}) that builds on the classical \emph{Bayesian Persuasion} model \citep{kamenica2011bayesian}.
There is a sender (she) and a receiver (he). A state of the world $\omega \in \Omega$ is drawn from a distribution $\mu_0 \in \Delta(\Omega)$, with the sender observes $\omega$ and the receiver not.
The receiver has a set of actions $\A$ that affect the possibly misaligned utilities of the two players. He chooses an action $a^\star \in \A$ to maximize his expected utility given his belief.
The sender seeks to influence the receiver's action by influencing his belief over the states. 

We define a ``framing'' as a state-independent message $c$ 
in
some abstract space $C$ (in the advertising example, this is the set of all possible slogans) that affects the receiver's belief in a non-Bayesian way. The sender chooses a framing $c \in C$ before observing the state, which induces/primes the receiver to form a \emph{subjective prior belief} $\mu_c \in B \subseteq \Delta(\Omega)$ about the state, where $B$ is the space of inducible beliefs. The mapping $\ell: c \mapsto \mu_c$ from framing to induced belief is determined by nature or societal norms, not by the sender. Our approach of modeling framing as affecting belief formation, rather than perception of payoffs, is supported by the framing effect literature~\citep{ellingsen2012social, druckman2001limits}. In addition to choosing a framing $c$ to shape the receiver's prior belief $\mu_c$, the sender can also commit to a signaling scheme $\pi : \Omega \to \Delta(S)$, which is a possibly randomized mapping from states to signals. By communicating a realized signal $s$ upon observing $\omega$, the sender further refines the receiver's prior belief $\mu_c$ to a posterior $\mu_{c, s}$ via Bayes rule. As in classic Bayesian persuasion, the receiver takes an expected utility maximizing action based on his posterior belief. This model gives rise to three possible optimization problems for the sender: 
 \squishlist
    \item[{(a)}]{\em Signaling-only:} The framing $c$ (and thus the receiver's prior belief $\mu_c$) is fixed; the sender only optimizes the signaling scheme $\pi$.
    \item[{(b)}]{\em Framing-only:} The sender's signaling scheme $\pi$ is fixed; she only optimizes the framing $c$.
    \item[{(c)}]{\em Framing-signaling joint optimization:} The sender jointly optimizes both framing $c$ and signaling scheme $\pi$. 
\squishend
The first setting corresponds to facing a receiver with a fixed but possibly distinct (from the sender) prior belief. This direction has been studied by the literature on Bayesian persuasion with heterogeneuous priors (e.g., \citep{alonso_bayesian_2016}). The key thrust of our work is thus to study settings (b) and (c), both theoretically and empirically. Setting (b) is relevant when the sender is bound to an information revelation scheme they have committed to in the past (e.g., a multi-year advertising strategy or regulation restrictions), but can change the framing (e.g., endorsement, wording). Setting (c) allows the sender to choose both with full freedom. 

\textbf{Theoretical Results:}
Our theoretical investigations take an optimistic perspective that LLMs can (1) simulate framing effects and (2) search in language space. As such, we primarily consider the easier problem of directly optimizing over the numerical framing induced belief space $B$, as opposed to the natural language framing space $C$. Within this setting, \Cref{sec:context_only_optimization} investigates setting (b), optimizing framing for a fixed signaling scheme, while \Cref{sec:joint} studies setting (c), jointly optimizing framing and signaling scheme. 

Our results reveal that setting (b) poses severe challenges, even under this generous LLM-inspired setting. Specifically, we show that the sender's expected utility will generally be a discontinuous function of the receiver's framing induced prior belief $\mu_c$ (Proposition~\ref{prop:fixed-scheme-discontinuous}). This means that even slight errors in the mapping between framing and belief can have significant implications. We also show that the computation of the approximately optimal framing induced belief under a fixed signaling scheme is NP-hard (Theorem~\ref{thrm:np_hardness}). In contrast, setting (c), jointly optimizing framing and signaling scheme, is more tractable. In particular, we show that the sender's expected utility, as a function of the framing-induced pior belief $\mu_c$ and the corresponding optimal signaling scheme $\pi^*_c$, is continuous (Theorem~\ref{thrm:joint_continuous}), making the setting graceful to errors. We also provide a quasi-polynomial-time approximation scheme (QPTAS) for the joint optimization problem (Theorem~\ref{thm:qptas}), with specific instances admitting better results (Theorem~\ref{thm:joint-unconstrained}). These findings support the conclusion that the joint optimization problem tends to be the more tractable problem for systematic optimization, inspiring our experiments. Overall, our work characterizes the unique structural and optimization landscape of this generalized model of information design. 

\textbf{Empirical Results:}
Section~\ref{sec:experiment} shifts the work from a theoretical exploration to a practical one where we study the systematic co-design of natural language framing and signaling scheme. LLMs are used to (1) approximate how receivers would perceive a given framing and (2) directly search over the language-base framing space $C$ and improve based on feedback. Combining this with analytical methods for  computing the optimal signaling scheme given a belief yields an end-to-end optimization framework. We demonstrate the efficacy of our approach through a case-study based on the advertising example discussed above. The results include preliminary explorations on how well LLMs estimate beliefs from framing, and evidence that the proposed framework is effective in generating good framing and signaling candidates.

\subsection{Related Work}
\textbf{Algorithmic information design.}
The algorithmic study of information design 
has attracted significant recent interest. This literature starts from the complexity-theoretic study initiated by \citet{dughmi_algorithmic_2016}, and lately has integrated many aspects of machine learning to address unknowns in the setting \citep{castiglioni2020online, feng2022online, lin_information_2025}.
\citet{haghtalab_communicating_2024} study a communication game where the sender persuades a receiver by providing ``anecdotes'', instead of committing to recommendation strategies, assuming a known specific model of how anecdotes influence the receiver's behavior. While their anecdote effect shares a similar spirit to our framing effect, we do not assume any specific framing effect. The use of LLM allows us to model real-world framing effects that cannot be easily characterized theoretically. 


\textbf{Information design with LLMs.}
Among recent works on the interface of information design and LLMs, \citet{harris2023algorithmic} study a learning-theoretic question about learning an optimal recommendation strategy by querying an LLM that simulates a receiver with unknown prior belief.
This differs from our aim of introducing a new dimension, i.e., framing, to information design.
\citet{li_verbalized_2025} explore how Bayesian persuasion mechanisms can be implemented in natural language,
while \citet{wu_grounded_2025} develop methods for generating persuasive marketing content that is both linguistically coherent and grounded in product attributes.
Both works share our emphasis on bridging formal persuasion models with natural language generation, but differ in the focus: while they design practical systems for persuasive text generation, our work provides a theoretical framework that explicitly integrates framing into information design and studies its tractability when combined with signaling.

\textbf{Additional evidence for framing effects.}
In addition to the landmark study of \citet{tversky1981framing}, there are numerous studies in Behavioral Economics and Psychology that provide evidence for framing effects.
For instance, \citet{ellingsen2012social} show that naming the standard prisoner's dilemma game differently (e.g., as the ``Community Game'' or ``Stock Market Game'') will lead to different players behaviors despite playing the same underlying game. These behavioral studies are consistent with   ``the hypothesis that social frames are coordination devices... enter people's beliefs rather than their preferences''\citep{ellingsen2012social}.
\citet{nelson1999issue} observe similar belief influence by framing in persuasion problems. These behavioral studies motivate our principled study of using framing in information design. 

\textbf{LLMs as proxys for human behavior.}
Our work subscribes to the recent studies that use LLMs as proxies of human agents to understand the social norms they carry and/or the economic decisions they make \citep{horton2023large, brand2023using, leng2024can, dillion2023can}. Our work 
fits this general theme, but is different from these works in research questions.

\section{Model: Information Design with Framing Effect}\label{sec:model}
\paragraph{\bf Bayesian Persuasion and Signaling.} A fundamental model of information design is the classic Bayesian persuasion problem between two rational players: a \emph{sender} (persuader, with she/her pronouns) and a \emph{receiver} (decision maker, with he/him pronouns). Both players' utilities depend on a \emph{state of the world} $\omega \in \Omega$.
The receiver must take an action $a$ from some finite set $\A$, which, along with the state $\omega$, jointly determines the utilities of both players.\footnote{We use 0 index for actions and states. That is, $\A = \{a_0, \dots, a_{|\A| - 1}\}$, and $\Omega = \{\omega_0, \dots, \omega_{|\Omega| - 1}\}$.}
Formally, the sender's utility function is $u: \A \times \Omega \to \mathbb{R}$ and the receiver's utility function is $v: \A \times \Omega \to \mathbb{R}$. 
The sender does not act, but possesses an information advantage: she is assumed to perfectly observe the realized state of the world $\omega$, whereas the receiver only knows the prior distribution $\mu_0 \in \Delta(\Omega)$ of the state. 

For simplicity, we make a few mild \emph{regularity} assumptions. First, $\mu_0$ has full support, i.e., $\mu_0(\omega) > 0$ for every state $\omega \in \Omega$.
Second, every action $a \in \A$ is strictly inducible: that is, for every $a \in \A$, there is some belief $\mu \in \Delta(\Omega)$ wherein action $a$ is the unique best response action for the receiver, i.e., $\E_{\omega \sim \mu}[v(\omega, a)] > \E_{\omega \sim \mu}[v(\omega, a')]$ for all $a' \in \A\setminus \{a\}$.\footnote{If this assumption does not hold, then action $a$ is weakly dominated and can be removed from the receiver's action set. }
Third, some of our computational results are about additive approximation, which requires players' utilities to be bounded. Hence without loss of generality we  assume both players' utilities $u,v$ are normalized to be within $[0,1]$.

The standard sender-receiver interaction in Bayesian persuasion is through a sender-designed  \emph{signaling scheme} that partially reveal the realized state $\omega$ to the receiver. Formally, for a signaling space $\Ss$, the sender designs and commits to a randomized mapping $\pi: \Omega \to \Delta(\Ss)$, where $\pi(s|\omega)$ specifies the probability of sending signal $s \in \Ss$ when the realized state is $\omega$. The receiver, upon observing a signal $s$ sampled from this signaling scheme, updates his belief over $\Omega$ and takes an expected-utility-maximizing action. As a textbook assumption in this literature, ties are assumed to be broken in favor of the sender.
In the special case where the signaling space $\mathcal S$ equals $\A$, the signaling scheme $\pi: \Omega \to \Delta(\A)$ is also called a \emph{recommendation scheme}, where signals correspond to action recommendations. A recommendation scheme $\pi$ is \emph{obedient} if whenever $\pi$ recommends an action $a \in \A$ to the receiver, $a$ is indeed optimal based on the receiver's posterior belief. We will consider general signaling schemes as well as recommendation schemes in this work.

\paragraph{\bf Framing.}
We define \emph{framing} as a message that affects the receiver's belief in a possibly non-Bayesian way. For example, framing can correspond to natural language phrases that describe the sender, the signaling scheme, or the signal.
The sender selects this framing prior to state observation and it is thus state-independent from the sender's perspective. As an aside, we note that a \emph{state-dependent} framing is a strange object. A sender-determined state-to-framing correlation is (1) functionally no different than signaling and (2) may be inconsistent with the behavioral implications of that framing (e.g. correlating the color red (framing) with feeling full despite this color being known to induce sensations of hunger).


Formally, let $C$ be an abstract space of all possible framings, which may depend on the problem instance $\I$. To connect framing to belief, we assume that each framing $c \in C$ induces a \emph{subjective prior belief} $\mu_c \in \Delta(\Omega)$ of the receiver about the states.
The mapping from $c$ to $\mu_c$ is captured by a function $\ell: C \rightarrow B$, with $\mu_c = \ell(c)$, 
where $B = \{\ell(c) : c\in C \}$ is the set of all inducible prior beliefs.
Importantly, the sender cannot influence this mapping $\ell$ and it is instead dictated by personal or societal norms.
This mapping $\ell$ abstracts out the belief update procedure, which could be Bayesian or non-Bayesian. From a non-Bayesian perspective, the framing may simply form some receiver belief inherited from common sense in human languages.
To exposit the Bayesian perspective, one can imagine that the receiver has some initial belief about the state $\omega$, which was then Bayesian updated to $\mu_c$ based on certain ``societally consensed''  signaling scheme $\sigma(c | \omega)$ after observing the framing $c$. In any case, how the mapping $\ell$ was formed is not concerned in the mathematics of our model.
Instead, in Section \ref{sec:experiment}, we will empirically demonstrate that such a framing-to-belief mapping $\ell$ can be robustly simulated by LLMs, serving as a justification for this modeling primitive.
In Sections \ref{sec:context_only_optimization} and \ref{sec:joint}, we will also analyze the robustness of the optimization problem to the inaccuracy of a simulated framing-to-belief mapping, 
where we allow the simulated mapping, denoted by $\ell_\eps$, to have a small error
in the sense of $|\ell_\epsilon(c) - \ell(c)| \leq \epsilon$ for every $c \in C$. 


A natural-language framing space $C$ is discrete by nature, though it is enormous. In this case,  the corresponding inducible prior belief set $B$ is also discrete. We will also consider the relaxation to continuous framing space, 
in which case the inducible prior belief set $B$ is assumed to be a \emph{convex} subset of the simplex $\Delta(\Omega)$. That is, if two beliefs $\mu_1, \mu_2$ can be induced by some framings,
then there is a framing to induce any belief in between. 
It is valuable to consider the continuous relaxation for two reasons. First, the geometry of the inducible belief set $B$ enables more structural characterization of the optimal design. Second, while this may not precisely map to the ultimately discrete linguistic framing space in practice, the richness of human language renders the continuous approximation reasonable. 
We formalize these notions below. 
\begin{definition}[Framing Space]
    Let $C$ be a framing space, $\ell: C \to B$ be a framing-to-belief mapping, 
    and $B = \{\ell(c) : c\in C \} \subseteq \Delta(\Omega)$ be the set of inducible prior beliefs.
    In a \emph{discrete} framing space, both $C$ and $B$ are discrete. In a \emph{continuous} framing space, $C$ is continuous and $B$ is assumed to be a convex subset of the belief simplex $\Delta(\Omega)$.
\end{definition}

\paragraph{\bf Sender-Receiver Interactions and the Equilibrium.} The interaction between sender and receiver is consistent with prior literature on persuasion, except that the sender's policy now additionally has a framing strategy.
The sender chooses a framing $c \in C$ prior to state observation ($c$ is thus state-independent from the sender's perspective); any state-dependent information is conveyed by the signal $s \sim \pi(\cdot|\omega)$.
This separation of state-independent framing and state-dependent signal is a key feature of our model. The formulation of information design with framing is 
described below.  

\begin{definition}[Information Design with Framing]\label{def:context_signaling}
    An \emph{information design policy with framing} is denoted by a tuple $\big(c \in C,~ \pi:\Omega\to\Delta(\Ss)\big)$.
    The sender first commits to a tuple $\big(c,~ \pi \big)$, and upon state realization $\omega \sim \mu_0$, the receiver observes the pair $(c \in C, s \sim \pi(\cdot| \omega))$ and updates his belief from $\mu_c = \ell(c)$ to the posterior $\mu_c(\omega \,|\, s) ~ = ~ \frac{\mu_c(\omega) \pi(s \,|\, \omega)}{\sum_{\omega' \in \Omega} \mu_c(\omega') \pi(s \,|\, \omega')}, \forall \omega \in \Omega$. The receiver then takes a best-response action that maximizes his expected utility under the updated belief $\mu_c(\cdot \,|\, s)$: 
    \begin{equation} \label{eq:receiver-argmax-definition}
        a^*_{c,\pi, s} \, \in ~  \argmax_{a\in \A} \sum_{\omega \in \Omega} \mu_c(\omega \,|\, s) v(a, \omega) ~ = ~ \argmax_{a\in \A} \sum_{\omega \in \Omega} \mu_c(\omega) \pi(s \,|\, \omega) v(a, \omega).
    \end{equation}
    Therefore, under an information design policy $(c,\pi)$, the sender's ex-ante (expected) utility is: 
    \begin{equation}\label{eq:sender_ex_ante}
        U(c, \pi) ~ := ~ \E_{\omega \sim \mu_0,\, s \sim \pi(\cdot|\omega)}\big[ u(a^*_{c, \pi, s}, \omega) \big] ~ = ~ \sum_{\omega \in \Omega} \mu_0(\omega) \sum_{s \in S} \pi(s \,|\, \omega) u(a^*_{c, \pi, s}, \omega).
    \end{equation}
\end{definition}

To map the above abstract model to our running example of advertising, this means the advertiser (sender) chooses a 
slogan (framing $c$) for a line of products, 
and based on each product's feature/quality (state $\omega$) chooses a discount 
(signal $s$).
The sender's goal is to choose an optimal  
framing and signaling strategy to maximize her expected utility assuming that the receiver will best-respond to the sender's strategy.  
This corresponds to a \emph{Stackelberg equilibrium}. 

We study two variants of the sender's information design problem with framing. 
The first is the optimization of framing under a given/fixed signaling scheme, coined \emph{framing-only optimization}. An instance of such a problem is denoted by $\I = (\mu_0, u, v, \pi)$ where $\pi$ is a given signaling scheme. The second is the \emph{joint optimization} of framing $c$ and signaling scheme $\pi$. 
Here, a problem instance is $\I = (\mu_0, u, v)$. 
Note that the third possible variant, optimizing the signaling scheme $\pi$ under a fixed framing (i.e., a fixed prior belief of the receiver) is essentially captured by 
previous work on Bayesian persuasion with heterogenous priors \citep{alonso_bayesian_2016} and thus not the focus of our work. 
In all cases, the sender is selecting a strategy to maximize her ex-ante utility with a best-responding receiver. 

\begin{definition}
\label{def:stackelberg-equilibrium}
The sender's \emph{Stackelberg equilibrium (i.e., optimal)} strategy is: 
\begin{itemize}
    \item 
    In a framing-only optimization problem $\I = (\mu_0, u, v, \pi)$, ~ $c^* = \argmax_{c \in C}{ U(c, \pi) }$; 
    \item In a joint optimization problem $\I = (\mu_0, u, v)$, ~ $(c^*, \pi^*) = \argmax_{c \in C,\, \pi: \Omega \to \Delta(\Ss)}{ U(c, \pi) }$.
\end{itemize}
\end{definition}

\section{Framing-Only Optimization}
\label{sec:context_only_optimization}
This section analyzes the conceptually simpler situation where we optimize the framing while fixing the signaling scheme.  Formally, given an information design instance $\I = (\mu_0, u, v, \pi)$, the sender aims to find the optimal framing $c^*$ in the Stackelberg equilibrium (Definition \ref{def:stackelberg-equilibrium}). We first specify an important distinction. The classic literature on Bayesian persuasion typically considers obedient signaling schemes, with signal space $\Ss$ equal to the action space $\A$, due to a revelation-principle-styled argument \citep{kamenica2011bayesian, dughmi_algorithmic_2016}. But in our setting, the signaling scheme $\pi$ is given exogenously and not within the designer's control, while the receiver's prior $\mu_c$ is optimized by the designer. As such, signals from $\pi$ cannot be freely interpreted as direct action recommendations, so we consider a general (non-direct) signaling space. 

We start by showing that randomizing over framings does not increase the expected utility of the sender under \emph{any} signaling scheme $\pi$. This means that it is without loss of generality to focus on deterministic framing both here and in the forthcoming section that studies jointly optimizing $\pi$ and $c$. All missing proofs from this section, including the one below, can be found in Appendix~\ref{appendix:context_only}.


\begin{proposition}\label{prop:no_randomization}
    For any instance $\I$ with a given signaling scheme $\pi$, the optimal sender utility can always be achieved by some deterministic framing $c^*$. 
\end{proposition}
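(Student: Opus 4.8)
The plan is to exploit the fact that, for a fixed signaling scheme $\pi$, the sender's ex-ante utility is \emph{linear} in the distribution over framings, so that an extreme point --- a single deterministic framing --- is optimal. Model a randomized framing strategy as a distribution $\rho \in \Delta(C)$ chosen by the sender before the state is realized (the framing commitment in Definition~\ref{def:context_signaling} is to an element of $C$, so a randomization over it is state-independent). For a deterministic framing $c \in C$, let $U_\pi(c) := \E_{\omega \sim \mu_0,\, s \sim \pi(\cdot|\omega)}[\, u(a^*_{c,\pi,s}, \omega)\,]$ denote the sender's ex-ante utility, exactly the quantity in Eq.~\eqref{eq:sender_ex_ante} specialized to framing $c$. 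I will show that the sender's ex-ante utility under the randomized strategy $\rho$ equals $\E_{c \sim \rho}[U_\pi(c)] = \sum_{c \in C} \rho(c)\, U_\pi(c)$, a convex combination of the numbers $\{U_\pi(c)\}_{c \in C}$; hence it is at most $\sup_{c \in C} U_\pi(c)$, and any deterministic framing attaining this supremum does at least as well as $\rho$.

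The only substantive step is verifying that the receiver's induced action --- and therefore each summand $u(a^*_{c,\pi,s},\omega)$ --- is unchanged when $c$ is sampled from $\rho$ rather than fixed. This is precisely where the modeling primitive is used: once $(\rho,\pi)$ is committed and a framing $c \sim \rho$ is realized, the receiver observes the pair $(c,s)$ and, by Definition~\ref{def:context_signaling}, the realized framing $c$ induces the belief $\mu_c = \ell(c)$, which is then Bayesian-updated using $s$ exactly as in Eq.~\eqref{eq:receiver-argmax-definition}. Since $\ell$ is a deterministic map applied to the \emph{realized} framing and $\rho$ does not depend on $\omega$, neither the receiver's posterior nor his best response sees $\rho$; the receiver still plays $a^*_{c,\pi,s}$, with ties broken in the sender's favor exactly as already reflected in $U_\pi(c)$. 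Conditioning on the realized $c$ and then averaging over $c \sim \rho$ gives that the sender's ex-ante utility under $(\rho,\pi)$ equals $\sum_c \rho(c)\, U_\pi(c) \le \sup_{c \in C} U_\pi(c)$, so allowing randomization does not raise the optimal achievable value; when $C$ is finite this supremum is an attained maximum and any $c^* \in \argmax_{c \in C} U_\pi(c)$ proves the claim, and in the continuous relaxation the same bound shows a deterministic framing is optimal whenever an optimum exists. The identical argument transfers verbatim to the joint framing--signaling setting of \Cref{sec:joint}: fix $\pi$ to any optimal signaling scheme for the instance and apply the above to replace a randomized framing by a deterministic one, which is why deterministic framing is without loss there too.

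The main obstacle is conceptual rather than computational: one must pin down exactly what ``randomizing over framings'' means and confirm the receiver does not re-optimize against the mixture $\rho$. Everything hinges on the belief oracle $\ell$ being a primitive that acts on the realized framing, so that the receiver's belief is $\ell(c)$ regardless of how $c$ was drawn; this is what decouples the receiver's response from $\rho$ and makes the objective linear. Were the receiver instead modeled as running his own Bayesian inference about $\omega$ from the sender's committed, state-independent randomization, he would learn nothing about $\omega$ from it and the same conclusion would still follow, but the argument would require heavier bookkeeping --- the clean statement here is a direct consequence of treating $\ell$ as given.
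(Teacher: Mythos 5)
Your proposal is correct and matches the paper's own proof: both write the sender's ex-ante utility under a randomized framing as the convex combination $\sum_c \rho(c)\,U_\pi(c)$ of deterministic-framing utilities and bound it by the maximizer $c^*$. Your extra care in justifying that the receiver best-responds to the \emph{realized} framing (so the objective really is linear in $\rho$) makes explicit a step the paper leaves implicit, but it is the same argument.
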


As a simple corollary, for a discrete framing space, the optimal framing can be computed  in 
a polynomial time proportional to the size of the framing space, by enumerating the framing $c \in C$ and evaluating its sender utility $U(c, \pi)$. 
 \begin{corollary}
For a discrete framing space $C$, the optimal framing $c^*$ can be computed in $|C| \cdot \mathrm{poly}(|\Omega|, |\A|, |\Ss|)$ time. 
 \end{corollary}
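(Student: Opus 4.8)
The plan is to leverage Proposition~\ref{prop:no_randomization} to reduce the task to a finite enumeration, and then argue that each candidate framing can be evaluated in time polynomial in $|\Omega|$, $|\A|$, and $|\Ss|$. By Proposition~\ref{prop:no_randomization}, the optimal sender utility is attained by some \emph{deterministic} framing $c^\star \in C$. Since $C$ is discrete, it therefore suffices to compute the sender's ex-ante utility \eqref{eq:sender_ex_ante} under each $c \in C$ and return the maximizer. This is an enumeration over $|C|$ candidates, so the overall running time is $|C|$ times the per-framing evaluation cost; establishing that this per-framing cost is $\mathrm{poly}(|\Omega|, |\A|, |\Ss|)$ is the crux of the argument.

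I would establish the per-framing bound constructively. For a fixed framing $c$, first query the belief oracle once to obtain $\mu_c = \ell(c)$, a distribution over the $|\Omega|$ states. Next, for each signal $s \in \Ss$, determine the receiver's best-response action $a^*_{c,\pi,s}$ by evaluating, for every action $a \in \A$, the unnormalized expected utility $\sum_{\omega \in \Omega} \mu_c(\omega)\,\pi(s \mid \omega)\, v(a,\omega)$ as in \eqref{eq:receiver-argmax-definition} and taking the $\argmax$ over $a$. Each such inner product costs $O(|\Omega|)$ arithmetic operations, so obtaining all best responses costs $O(|\Ss| \cdot |\A| \cdot |\Omega|)$. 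Note that we need not normalize the posterior to identify the maximizer, since the normalizing constant $\sum_{\omega'}\mu_c(\omega')\pi(s \mid \omega')$ is a positive factor common to all actions for a given $s$; this is precisely the content of the second equality in \eqref{eq:receiver-argmax-definition}. Finally, assemble the ex-ante utility $\sum_{\omega \in \Omega} \mu_0(\omega)\sum_{s \in \Ss}\pi(s \mid \omega)\,u(a^*_{c,\pi,s},\omega)$ in an additional $O(|\Omega| \cdot |\Ss|)$ operations. Summing these costs yields the claimed per-framing bound, and multiplying by $|C|$ yields the stated total.

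The only point requiring genuine care is the tie-breaking convention: the receiver's best response in \eqref{eq:receiver-argmax-definition} need not be unique, and the model stipulates that ties are resolved in the sender's favor. I would handle this by, for each signal $s$, first computing the set of actions maximizing the receiver's posterior expected utility, and then selecting among that set the action maximizing the sender's expected utility $u(\cdot,\omega)$ under the same posterior. This refinement adds only $O(|\A|)$ comparisons per signal and thus does not affect the polynomial bound. Beyond this bookkeeping, there is no substantive obstacle: every step is elementary arithmetic, and the sole conceptual ingredient is the reduction to deterministic framings furnished by Proposition~\ref{prop:no_randomization}.
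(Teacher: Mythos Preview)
Your proposal is correct and takes essentially the same approach as the paper, which simply states that one enumerates all $c \in C$ and evaluates the sender's utility for each; you have merely fleshed out the per-framing evaluation cost in more detail than the paper does. One small imprecision: when breaking ties ``in favor of the sender,'' the sender's expected utility conditional on signal $s$ should be computed with respect to the sender's prior $\mu_0$ (i.e., maximize $\sum_\omega \mu_0(\omega)\pi(s|\omega)u(a,\omega)$), not the receiver's framing-induced posterior, though this does not affect the polynomial running-time bound.
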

 
 
 In practice (e.g., our running example of advertising), however, framing encompasses contextually relevant natural language expressions, hence the space $C$ is enormous. This gives rise to  a structurally more interesting question: are there algorithms that scale more gracefully with respect to the size of the framing space $C$? To study this problem, we turn to the relaxed problem that considers framing optimization in a continuous space.  


\subsection{Optimizing Framing in a Continuous Space}\label{subsec:cont_framing_only}

We turn to the continuous framing space with a convex belief space $B = \{\ell(c): c\in C\} \subseteq \Delta(\Omega)$ and consider directly optimizing in this space.
In the most optimistic case, the sender can use framing to induce \emph{any} receiver prior belief in the entire probability simplex: $B = \Delta(\Omega)$. This means that the framing space $C$ is no longer a parameter, and the instance size is completely determined by $|\Omega|$, $|\A|$, and $|\Ss|$. 
We will show that the continuous framing-only optimization problem is bi-directional equivalent to a classic problem in algorithmic game theory, \emph{Bayesian Stackelberg game}, but unfortunately the problem is NP-hard, even when $B = \Delta(\Omega)$. 



Formally, we directly optimize in the prior belief space $B = \Delta(\Omega)$. For any framing-induced receiver prior belief $\mu_c \in B$, let $U_\pi(\mu_c) = U(c, \pi)$ be the expected utility of sender, with fixed signaling scheme $\pi$. 
The framing-only optimization problem is to find an optimal $\mu^*_c \in B$ to solve: 
\begin{align} 
  \max_{\mu_c \in B=\Delta(\Omega)} & U_{\pi}(\mu_c) \, = \, \sum_{\omega \in \Omega}\sum_{s \in \Ss}{\mu_0(\omega)\pi(s|\omega)u(a^*_{\mu_{c}, s}, \omega)} \label{eq:framing-only-optimization-objective}\\
  \text{s.t.} \quad & a^*_{\mu_{c}, s} \in \argmax_{a 
\in \mathcal{A}}{\sum_{\omega' \in \Omega}{\mu_{c}(\omega')\pi(s|\omega')v(a, \omega')}}, ~~ \forall s\in \Ss. \nonumber 
\end{align}

We now connect 
the framing-only optimization problem 
to a classic problem in algorithmic game theory, \emph{Bayesian Stackelberg game} (Definition~\ref{definition:bsg}).
We then show in Proposition~\ref{prop:framing_to_bsg_reduction} that 
our framing-only optimization problem can be converted to finding the equilibrium of this game. 
\begin{definition}[Bayesian Stackelberg game]\label{definition:bsg}
    A \emph{Bayesian Stackelberg Game (BSG)} consists of a leader with action space $\mathcal{A}_{\ell}$ and a follower with action space $\mathcal{A}_{f}$ and private type $\theta$ drawn from a known distribution $P \in \Delta({\Theta})$, and type-dependent leader utility $u_{\ell}(a_{\ell}, a_{f}, \theta)$ and follower utility $u^{\theta}_{f}(a_{\ell}, a_f)$.
    The leader commits to a mixed strategy $x \in \Delta(\A_\ell)$, and the follower best responds. The leader's optimal (Stackelberg equilibrium) utility is given by:
    \begin{align*}
        \max_{x \in \Delta{(\mathcal{A}_{\ell})}} ~ & \sum_{\theta \in \Theta}{P(\theta)}\sum_{\al \in \A_\ell}{x(\al)u_{\ell}(a_{\ell}, a^*_f(\theta, x), \theta)} \\
        \mathrm{s.t.} \quad & a^*_f(\theta, x) \in \argmax_{a_f \in \mathcal{A}_f}{\sum_{a_{\ell} \in \A_\ell }x(a_{\ell})u^\theta_f(a_{\ell}, a_f)}, \quad \forall \theta \in \Theta. 
    \end{align*}
\end{definition}

\begin{proposition}\label{prop:framing_to_bsg_reduction}
The continuous framing-only optimization problem \eqref{eq:framing-only-optimization-objective} can be reduced to finding the Stackelberg equilibrium of a Bayesian Stackelberg game. 
\end{proposition}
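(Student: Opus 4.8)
The plan is to build an explicit, polynomial-size BSG whose data encodes the continuous framing-only instance $\I = (\mu_0, u, v, \pi)$ with $B = \Delta(\Omega)$, and under which the sender's induced belief $\mu_c$ is exactly the leader's committed mixed strategy. First I would fix the dictionary: the leader's action set is the state space, $\A_\ell = \Omega$; the follower's action set is the receiver's action set, $\A_f = \A$; and the follower's type set is the signal space, $\Theta = \Ss$. With this identification a leader mixed strategy $x \in \Delta(\A_\ell) = \Delta(\Omega)$ is literally a candidate induced belief $\mu_c$, and the realized follower type will play the role of the signal the receiver observes.

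For the follower's type-dependent utility I would take $u^s_f(\omega, a) := \pi(s \mid \omega)\, v(a, \omega)$, which lies in $[0,1]$. Then, for each type $s$, the follower's best response to $x = \mu_c$ is $\argmax_{a \in \A}\sum_{\omega} \mu_c(\omega)\,\pi(s\mid\omega)\, v(a,\omega)$, which is exactly the receiver's best response $a^*_{\mu_c, s}$ of Eq.~\eqref{eq:receiver-argmax-definition}; the leader-favorable tie-breaking standard for BSGs coincides with the sender-favorable tie-breaking implicit in the outer $\max$ of Eq.~\eqref{eq:framing-only-optimization-objective}, so the two best-response maps agree for every signal/type.

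I expect the only real obstacle to be matching the two \emph{objectives}: in Eq.~\eqref{eq:framing-only-optimization-objective} the sender scores her payoff against the \emph{true} prior $\mu_0$, whereas in a BSG the leader's payoff is averaged with weights given by her own mixed strategy $x = \mu_c$. The resolution I have in mind is to make the leader's utility independent of the leader's own realized action: pick any full-support type distribution $P \in \Delta(\Ss)$ (e.g., uniform, or the $\mu_0$-induced signal marginal $P(s)=\sum_\omega \mu_0(\omega)\pi(s\mid\omega)$ after discarding never-sent signals) and set $u_\ell(\omega, a, s) := \frac{1}{P(s)}\sum_{\omega'} \mu_0(\omega')\,\pi(s\mid\omega')\, u(a, \omega')$ for all $\omega$. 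Because $\sum_{\omega} x(\omega) = 1$, the $x$-weighting in the leader's objective collapses, so the leader's expected utility at $x = \mu_c$ equals $\sum_{s} P(s)\,u_\ell(\cdot,\, a^*_{\mu_c, s},\, s) = \sum_{s}\sum_{\omega'} \mu_0(\omega')\pi(s\mid\omega')\,u(a^*_{\mu_c, s}, \omega') = U_{\pi}(\mu_c)$. Hence the leader's Stackelberg-optimal mixed strategy in the constructed game is precisely an optimal induced belief $\mu_c^{*}$, and reading it off the equilibrium solves Eq.~\eqref{eq:framing-only-optimization-objective}. I would finish by noting that all the game's parameters are computable in $\mathrm{poly}(|\Omega|, |\A|, |\Ss|)$ time, so this is a genuine polynomial-time reduction.
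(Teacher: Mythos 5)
Your construction is exactly the one in the paper: leader actions are the states, follower types are the signals with marginal $P(s)=\sum_\omega \mu_0(\omega)\pi(s\mid\omega)$, follower utility $\pi(s\mid\omega)v(a,\omega)$, and a leader utility that is independent of the leader's own action so that the $x$-weighting collapses and the objective reduces to $U_\pi(\mu_c)$. The proof is correct and essentially identical to the paper's, with your remarks on tie-breaking and on discarding never-sent signals being harmless refinements.
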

\begin{proof}
We convert a continuous framing-only optimization problem to a BSG. 
Consider the BSG where the leader's action space is the space of states $\Omega$, 
the follower's type space is $\Ss$,
and 
the probability of type $s$ is $P(s) = \sum_{\omega \in \Omega} \mu_0(\omega) \pi(s|\omega)$.
The follower's utility function is $v^s(a, \omega) = \pi(s|\omega) v(a, \omega)$, and the leader's utility function is $ \tilde u(a, s) = \sum_{\omega \in \Omega} \frac{\mu_0(\omega)\pi(s|\omega)}{P(s)}u(a, \omega)$. Note that the leader's utility $\tilde u(a, s)$ depends on the follower's action $a$ and type $s$, but not the leader's action $\omega$. 
By definition, the leader's optimization problem in the BSG is
\begin{align}
  \max_{x \in \Delta(\Omega)} ~ & U_{P}(x) \, = \, \sum_{s \in \Ss} P(s) \tilde u(a^*(s, x), s)  \\
  \text{s.t.} \quad & a^*(s, x) \in \argmax_{a 
\in \mathcal{A}}{\sum_{\omega' \in \Omega}{x(\omega') \tilde v^s(a, \omega')}}, ~~ \forall s\in \Ss, \nonumber 
\end{align}
namely, 
\begin{align}
  \label{eq:BSG-formulation-objective}
  \max_{x \in \Delta(\Omega)} ~~ & U_{P}(x) \, = \, \sum_{s \in \Ss} P(s) \sum_{\omega \in \Omega} \frac{\mu_0(\omega)\pi(s|\omega)}{P(s)}u(a^*(s, x), \omega)  \\
  \text{s.t.} \quad & a^*(s, x) \in \argmax_{a 
\in \mathcal{A}}{\sum_{\omega' \in \Omega}{x(\omega') \pi(s|\omega')v(a, \omega')}}, ~~ \forall s\in \Ss. \nonumber 
\end{align}
We note that the optimization problems \eqref{eq:BSG-formulation-objective} and \eqref{eq:framing-only-optimization-objective} are equivalent, where $x$ corresponds to $\mu_c$ and $a^*(s, x)$ corresponds to $a^*_{\mu_c, s}$. Therefore, we conclude that the continuous framing-only optimization problem can be reduced to a Bayesian Stackelberg game.
\end{proof}

The reduction from framing-only optimization to Bayesian Stackelberg games suggests that one approach to finding the optimal framing is to use existing computational methods for BSG (e.g., \citep{paruchuri2008playing}). 
However, \citet{conitzer2006computing} show that a family of BSGs are computationally intractable (NP-hard).
Our Theorem \ref{thrm:np_hardness} proves that any BSG in that family can be converted to a continuous framing-only optimization problem. 
Combining with Proposition~\ref{prop:framing_to_bsg_reduction}, our analyses show that framing-only optimization is (bi-directional) equivalent to a subset of BSGs that are known to be computationally hard.
Hence, the framing-only optimization problem is NP-hard and not easily solvable even with access to a perfect mapping $\ell$ from framing to belief. 



\begin{theorem}[NP-hardness] \label{thrm:np_hardness}
    For the continuous framing-only problem $\I = (\mu_0, u, v, \pi)$, 
    there is no additive Fully Polynomial-Time Approximation Scheme (FPTAS) for computing the optimal framing-induced prior belief $\mu_c^* \in B$ in \eqref{eq:framing-only-optimization-objective}, unless P = NP. This hardness holds even when $B = \Delta(\Omega)$. 
\end{theorem}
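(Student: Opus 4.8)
The plan is to leverage the bidirectional connection to Bayesian Stackelberg games. Proposition~\ref{prop:framing_to_bsg_reduction} already shows framing-only optimization reduces \emph{to} a BSG; for the hardness I need the reverse embedding, so that the known NP-hardness of a family of BSGs (due to \citet{conitzer2006computing}) transfers to our problem. Concretely, I would take an arbitrary BSG instance from the hard family of \citet{conitzer2006computing} --- with leader actions $\A_\ell$, follower types $\Theta$ with prior $P$, and utilities $u_\ell, u_f^\theta$ --- and construct a framing-only instance $\I = (\mu_0, u, v, \pi)$ whose continuous optimization problem~\eqref{eq:framing-only-optimization-objective} over $B = \Delta(\Omega)$ computes exactly the leader's optimal commitment utility. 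The natural dictionary, read off from the proof of Proposition~\ref{prop:framing_to_bsg_reduction}, is: states $\Omega \leftrightarrow$ leader actions $\A_\ell$ (so the induced belief $\mu_c$ plays the role of the leader's mixed strategy $x$), signals $\Ss \leftrightarrow$ follower types $\Theta$, and receiver actions $\A \leftrightarrow$ follower actions $\A_f$.

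The main steps are as follows. First, I would invert the utility/probability transformations in Proposition~\ref{prop:framing_to_bsg_reduction}: set $\mu_0$ to be (essentially) uniform on $\Omega$, choose $\pi(s \mid \omega)$ so that $P(s) = \sum_\omega \mu_0(\omega)\pi(s\mid\omega)$ matches the BSG type distribution, and then solve $\pi(s\mid\omega) v(a,\omega) = \tilde v^s(a,\omega) \propto u_f^\theta(a_\ell, a_f)$ for the receiver utility $v$, and similarly recover the sender utility $u(a,\omega)$ from $\tilde u(a,s) = \sum_\omega \frac{\mu_0(\omega)\pi(s\mid\omega)}{P(s)} u(a,\omega)$ matching $u_\ell$. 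Second, I would verify feasibility of this inversion: the $\pi(s\mid\omega)$ values must form a valid signaling scheme (nonnegative, summing to $1$ over $s$), and the recovered $u, v$ must lie in $[0,1]$ after the affine normalization the model permits --- this is where one must be careful, since dividing by $\pi(s\mid\omega)$ to recover $v$ forces $\pi(s\mid\omega) > 0$, so I would add a small uniform ``background'' probability mass to every signal to keep $\pi$ strictly positive, and check that this perturbation only changes the objective by a controllable amount (or, better, design $\pi$ so that the BSG structure is reproduced exactly). Third, I would also need to honor the paper's standing regularity assumptions --- full support of $\mu_0$ (automatic for (near-)uniform $\mu_0$) and strict inducibility of every action in $\A$ --- which may require adding a few dummy states/signals or slightly perturbing $v$; I would argue such perturbations are benign because an FPTAS with error $\eps$ on the perturbed instance still pins down the original BSG value closely enough to decide the underlying NP-hard problem. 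Fourth, combining the forward reduction (Proposition~\ref{prop:framing_to_bsg_reduction}) with this backward reduction establishes the equivalence, and since the hard BSG family of \citet{conitzer2006computing} does not even admit additive approximation within a fixed constant (or, at minimum, exact solution is NP-hard and the gap instances rule out an FPTAS), the same inapproximability transfers: no additive FPTAS for~\eqref{eq:framing-only-optimization-objective} unless $\mathrm{P} = \mathrm{NP}$.

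The hard part will be the bookkeeping around \emph{exactness} of the embedding under the model's constraints: the transformations in Proposition~\ref{prop:framing_to_bsg_reduction} are stated in the ``forward'' direction where they are manifestly well-defined, but inverting them requires $\pi$ to be strictly positive and requires the reconstructed payoff matrices to survive normalization into $[0,1]$ without collapsing the payoff gaps that drive the hardness --- in an additive-approximation setting, shrinking all payoffs by a large factor would make the NP-hard decision threshold smaller than $\eps$ and destroy the reduction. I would therefore aim to keep the scaling factors bounded by a constant (independent of instance size), so that a hypothetical additive-$\eps$ FPTAS with $\eps$ polynomially small translates back to a sufficiently accurate estimate of the BSG leader value. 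A secondary subtlety is that \citet{conitzer2006computing}'s hardness is typically phrased for exact optimal commitment; I would either cite (or reprove via a standard gap-amplification/padding argument) a version with an inverse-polynomial additive gap, which is exactly what rules out an FPTAS. Once these normalization and gap issues are handled, the rest is the routine substitution sketched above, and the claimed statement --- including the strengthening to $B = \Delta(\Omega)$, which is the ``easiest'' belief space and hence makes the hardness most striking --- follows immediately.
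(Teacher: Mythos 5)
There is a genuine gap, and it sits exactly where you wave your hands: the ``natural dictionary'' (states $\leftrightarrow$ leader actions, signals $\leftrightarrow$ types, receiver actions $\leftrightarrow$ follower actions) cannot reproduce the hard BSG instances, because the image of the forward reduction in Proposition~\ref{prop:framing_to_bsg_reduction} is a \emph{strict} subclass of BSGs. In the forward map the type-$s$ follower's utility has the product form $\tilde v^s(a,\omega) = \pi(s\mid\omega)\,v(a,\omega)$: the type enters only through a factor that is independent of the follower's action. Inverting this under your dictionary requires $\pi(s_\theta\mid a_\ell)\,v(a_f,a_\ell) = u_f^\theta(a_\ell,a_f)$, hence the ratio $u_f^\theta(a_\ell,a_f)/u_f^{\theta'}(a_\ell,a_f) = \pi(s_\theta\mid a_\ell)/\pi(s_{\theta'}\mid a_\ell)$ must be independent of $a_f$. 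In the \citet{conitzer2006computing} hard family every type gets utility $1$ from action $a_0$, which forces that ratio to equal $1$ and therefore forces $u_f^\theta(a_\ell,a_1)$ to be type-independent --- i.e., the only BSGs you can embed this way are single-type Stackelberg games, which are solvable in polynomial time. So the obstruction is structural (the receiver's utility $v(a,\omega)$ cannot depend on the signal, while the follower's utility in a BSG depends on the type), not a matter of positivity of $\pi$ or of surviving the $[0,1]$ normalization, which is where your proposal locates the difficulty.

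The paper's proof overcomes this with a gadget construction that your proposal does not contain: the receiver's action space is blown up to one copy of the follower's actions \emph{per type}, $\{a^\theta_0,a^\theta_1\}_{\theta\in\Theta}$ (so $|\A| = |\Theta||\A_f|+2$, not $|\A_f|$), and extra states $\omega_\theta$ with small prior mass and utility $-M$ for ``wrong-type'' actions force the receiver, upon seeing signal $s_\theta$, to choose only among $a^\theta_0,a^\theta_1$; a dummy state $\tilde\omega$ carrying prior mass $1-\varepsilon$ with $\pi(s_\theta\mid\tilde\omega)=P(\theta)$ recreates the BSG objective $\sum_\theta P(\theta)u_\ell(\cdot)$, and dummy actions $\tilde a_1,\tilde a_2$ with large penalties $N,K$ pin down the weight the optimal $\mu_c$ must place on the auxiliary states. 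Your secondary points are sound --- the inapproximability does need the quantitative gap from the Independent-Set reduction of \citet{conitzer2006computing}, and the normalization to $[0,1]$ must not shrink that gap below inverse-polynomial, both of which the paper tracks explicitly --- but without the type-indexed action copies and penalty states the reduction does not get off the ground.
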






\begin{proof}[Proof Sketch]
    The full proof is technical and given in Appendix \ref{proof:np-hardness}; we sketch the high-level intuition here. 
    \citet{conitzer2006computing} show that a family of BSGs where the followers have binary actions is NP-hard. 
    We show that any such BSG can be cast into a framing-only optimization problem $\I$ with $|\Omega| = |\A_{\ell}| + |\Theta| + 1$ states, $|\A| = |\Theta||\A_f| + 2$ actions, $|\Ss| =  |\Theta|$ signals, and a continuous framing space $B = \Delta(\Omega)$. Specifically, we create a state for each leader action, $\omega_{\al}$, and each follower type $\omega_{\theta}$. We create receiver actions for each binary action a follower of a type $\theta$ can take -- i.e. $a^\theta_i$ for all $\theta$. When the receiver sees a signal $s_{\theta}$ (which is proxying type $\theta$ in BSG), we want them to only consider actions $\atheta_0, \atheta_1$, which should directly correspond to follower $\theta$'s utility in taking action 0 or 1 in the BSG. Since the receiver's utility in the information design problem does not explicitly depend on type $\theta$, we use the states $\omega_{\theta}$ to achieve this effect. The receiver is heavily penalized for taking an action $a^{\theta'}_{*}$ at state $\omega_{\theta}$. We carefully construct the sender's utilities and add additional dummy states and actions to ensure that (1) on the optimal prior $\mu_c^*$, the sender places sufficient weight on states corresponding to $\omega_{\theta}$ to ensure that the receiver takes this type-consistent action, and (2) the information design objective captures the type-dependent Bayesian Stackelberg objective. The inapproximability stems from a more careful analysis of the original result of \cite{conitzer2006computing}.
\end{proof}

\subsection{The Effect of Framing on  Sender Utility}\label{sec:effect-framing-only}
Next, we analyze how much the sender can improve her utility via manipulating framing with a given signaling scheme $\pi$.  To build intuition for this section's result, we start with an example to demonstrate how much a belief shift induced by framing can lead to a meaningful improvement in the sender's utility -- is a substantial belief change necessary?

The following variant of the well-known prosecutor-judge example in \citet{kamenica2011bayesian} shows that the answer is \emph{No}, meaning that a small change in framing can significantly change the sender's utility.
Specifically, consider a Bayesian persuasion setting where a defendant may be either \emph{innocent} or \emph{guilty} (two possible states of the world), and the judge (receiver) decides whether to \emph{acquit} or \emph{convict} the defendant, receiving utility $1$ for the just action and 0 otherwise. The prosecutor (sender) observes the defendant's true state and can signal accordingly to maximize their utility, which is $1$ for a conviction regardless of the state. Now suppose the prosecutor, when innocent, always recommends acquittal, and when guilty, recommends either action with probability $0.5$. If the prosecutor's prior belief over the states is $[0.67, 0.33]$ and the judge shares this belief (as is the case in Bayesian Persuasion), then the prosecutor achieves utility $0$. If, however, the prosecutor can use framing (style of argument/language) to slightly alter the judge's prior belief to $[\tfrac{2}{3}, \tfrac{1}{3}]$, then the prosecutor's utility under the same signaling scheme jumps to $0.66$! This is because the sender's utility as a function of the receiver's belief 
is not continuous for the given signaling scheme. This discontinuity highlights the power of leveraging framings: for a fixed signaling scheme, slightly altering the receiver's belief by framing can have a major impact on the receiver's action and hence the sender's utility. 

To formalize this, let $U_\pi(\mu)$ denote the sender's expected utility \eqref{eq:framing-only-optimization-objective} under fixed signaling scheme $\pi$ when the receiver's prior belief is $\mu$. In the prosecutor-judge instance, this function is discontinuous at $\mu = [\tfrac{2}{3}, \tfrac{1}{3}]$. We show below that such discontinuities occur in general instances almost surely for a large class of signaling schemes, specifically, schemes in which some signal $s$ is sent with positive probability at every state. In other words, schemes not satisfying this condition must be ``very revealing'': upon observing any signal $s$, the receiver can rule out at least some state(s) with full confidence. The proof of this result is in Appendix~\ref{proof:fixed-scheme-discontinuous}. 

\begin{proposition}[Discontinuous sender utility]
\label{prop:fixed-scheme-discontinuous}
Consider any signaling scheme $\pi$ in which there exists a signal $s_0 \in \Ss$ such that $\pi(s_0|\omega) > 0$ for every state  $\omega \in \Omega$. Then the sender's expected utility $U_\pi(\mu)$ as a function of the receiver's prior belief $\mu$ is \emph{generally discontinuous} in the following sense: for any $\mu_0$, for $u$ and $v$ sampled from any continuous distribution over utility functions, $U_\pi(\mu)$ is discontinuous in $\mu \in \Delta(\Omega)$ with probability 1.
This holds even if 
$u(a, \omega)$ is independent of $\omega$.
\end{proposition}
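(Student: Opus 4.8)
The plan is to show that $U_\pi(\mu)$ is discontinuous by exhibiting, almost surely, a belief at which the receiver's best response changes while the jump in the sender's resulting utility is nonzero. The key observation is that for the distinguished signal $s_0$ that is sent with positive probability at every state, the receiver's posterior given $s_0$ ranges over (the relative interior of) the full simplex as the prior $\mu$ ranges over $\Delta(\Omega)$ --- indeed, $\mu_c(\cdot\mid s_0)$ is a continuous, surjective-onto-interior reparametrization of $\mu$ because all the weights $\pi(s_0\mid\omega)$ are strictly positive. Hence any discontinuity in the map ``posterior $\mapsto$ receiver action $\mapsto$ sender payoff'' at $s_0$ pulls back to a discontinuity of $U_\pi$ in $\mu$.

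First I would set up the posterior map: fix $s_0$ and write $q(\mu) = \mu_c(\cdot \mid s_0) \in \Delta(\Omega)$, noting $q$ is continuous and its image contains an open subset of $\Delta(\Omega)$. Second, I would invoke the regularity assumption that every action is strictly inducible: for each $a \in \A$ there is a belief at which $a$ is the unique best response, and the set of beliefs where $a$ is (weakly) optimal is a closed convex polytope. Since all actions are strictly inducible, the best-response correspondence partitions $\Delta(\Omega)$ into at least two full-dimensional cells (here I need $|\A|\ge 2$, which follows from strict inducibility being required for every action together with the instance being nondegenerate; if $|\A|=1$ the sender's problem is trivial and $U_\pi$ is constant, so the interesting case has $|\A|\ge 2$), and there is a separating hyperplane (a facet shared by two adjacent cells) passing through the interior. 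Third, I would pick a point $\mu^\dagger$ in the interior lying on such a facet between cells for actions $a$ and $a'$, and pull it back through $q$: because $q$ maps onto an open set, I can choose $\mu^*$ with $q(\mu^*)=\mu^\dagger$ (or lying on the boundary between the pullback cells). On one side of the facet the receiver plays $a$ on signal $s_0$, on the other side $a'$; tie-breaking in favor of the sender fixes the value exactly at $\mu^*$ but leaves a one-sided jump.

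Fourth --- and this is where the ``probability 1'' and the ``even if $u(a,\omega)$ is independent of $\omega$'' clauses get used --- I would argue the jump is nonzero almost surely. The contribution of signal $s_0$ to $U_\pi(\mu)$ is $\big(\sum_\omega \mu_0(\omega)\pi(s_0\mid\omega)\big)\cdot \tilde u\big(a^*_{\mu,s_0}\big)$ where, averaging out, $\tilde u(a) = \sum_\omega \frac{\mu_0(\omega)\pi(s_0\mid\omega)}{\sum_{\omega'}\mu_0(\omega')\pi(s_0\mid\omega')} u(a,\omega)$ is the sender's expected payoff when the receiver takes action $a$ on signal $s_0$; the other signals contribute continuously near $\mu^*$ (their posteriors vary continuously and, generically, their best responses are locally constant since $\mu^\dagger$ is generic). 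So the size of the discontinuity is a fixed positive multiple of $\tilde u(a) - \tilde u(a')$, and this is nonzero as long as $a$ and $a'$ are not assigned the exact same sender value. When $u,v$ are drawn from any continuous distribution, the event $\tilde u(a)=\tilde u(a')$ has measure zero (it is a nontrivial linear condition on $u$); this already covers the general case. For the ``even if $u(a,\omega)$ independent of $\omega$'' strengthening, $\tilde u(a)$ reduces to $u(a)$, and we just need the randomly drawn action-values $u(a)\ne u(a')$, again a measure-zero exclusion --- and we must also make sure the adjacent-cell structure of the best-response partition (which depends only on $v$, drawn continuously) generically places a facet in the interior, which it does because strict inducibility is an open condition satisfied on a positive-measure set and the facet-through-interior property is generic in $v$.

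The main obstacle I anticipate is the fourth step done carefully: ensuring that at the chosen boundary belief $\mu^\dagger$ exactly two actions are tied (not three or more, which could conspire to make the net jump vanish), that the best responses on all \emph{other} signals are locally constant near $\mu^*$ so they genuinely contribute a continuous term, and that the full-measure genericity statements for $v$ (controlling the geometry of the best-response partition) and for $u$ (controlling the payoff gap) can be combined into a single ``probability 1'' claim over the joint draw of $(u,v)$. I would handle this by first conditioning on a generic $v$ --- for which the best-response partition has a facet in the simplex interior, is locally a clean two-cell picture there, and all other signals' posteriors sit in the interior of single cells --- and then, for that fixed $v$, excluding the measure-zero set of $u$ making $\tilde u(a)=\tilde u(a')$; Fubini then gives probability $1$ over the joint distribution.
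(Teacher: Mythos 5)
Your proposal is correct and matches the paper's argument in all essentials: both locate a prior at which the receiver is indifferent between two actions upon the always-sent signal $s_0$ (while the best responses to all other signals remain strict, hence locally constant), and both identify the resulting one-sided jump in $U_\pi$ as $\sum_{\omega}\mu_0(\omega)\pi(s_0|\omega)\bigl(u(a,\omega)-u(a',\omega)\bigr)$, which is nonzero with probability $1$ because $\mu_0(\omega)>0$ and $\pi(s_0|\omega)>0$ for all $\omega$. The only cosmetic difference is how the tie point is located---the paper runs an intermediate-value argument along a segment of beliefs supported on two states, whereas you use the facet structure of the best-response polytopes pulled back through the posterior map at $s_0$---and your explicit conditioning/Fubini treatment of the joint genericity in $(u,v)$ is, if anything, slightly more careful than the paper's.
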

This discontinuity also implies that the sender's utility is highly sensitive to errors in
a simulated framing-to-belief mapping $\ell$. 
Suppose using an imperfect mapping $\ell_\varepsilon$ with error $\eps$, we find an optimal framing $\hat{c}$ 
that happens to induce
a belief $\hat{\mu} = \ell_\eps(\hat c)$ such that $U_{\pi}(\cdot)$ was discontinuous at $\hat{\mu}$. Then if this framing is deployed,
even though the true induced prior $\mu^* = \ell(\hat c)$ is within distance $\varepsilon$ to $\hat{\mu}$ (i.e., $|\mu^* - \hat \mu| \le \eps$), the discontinuity means that the realized utility can be arbitrarily far from the utility achieved under the imperfect mapping, regardless of how small $\eps$ is.
In particular, there does not exist a scalar $\lambda$ such that $|U_{\pi}(\hat{\mu}) - U_{\pi}(\mu^*)| \leq \lambda \varepsilon$. Indeed, this discontinuity is the fundamental reason for the hardness result shown in Theorem \ref{thrm:np_hardness}. 

\section{Framing-Signaling Joint Optimization}\label{sec:joint}

We now consider the joint design of framing and signaling scheme: 
given an instance $\I = (\mu_0, u, v)$, the sender selects a framing $c \in C$ and a signaling scheme $\pi$ to maximize her utility $U(c, \pi)$. This is a \emph{strict} generalization of the standard Bayesian persuasion model, which is the special case where $C$ is a singleton set hence only the signaling scheme $\pi$ is a variable. As shown by Proposition \ref{prop:no_randomization}, 
there is no benefit in randomizing over framings, so we focus on deterministic framing. 

The ability to design $\pi$ offers the sender more freedom as compared to framing-only optimization. Indeed, a key challenge in the restricted framing-only case was the inability to interpret signals as action recommendations. 
But for the design of joint strategy, it is without loss of generality to focus on 
obedient recommendation schemes, as we show below (the proof is in Appendix~\ref{appendix:joint}). 

\begin{observation}\label{ob:revelation_principle}
To compute the optimal joint framing-signaling strategy $(c^*, \pi^*)$, 
it suffices to consider signaling scheme $\pi$ with a direct signal space, i.e., $\Ss = \A$, in which each signal $a \in \A$ makes an obedient action recommendation to the receiver.
\end{observation}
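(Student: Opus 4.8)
The plan is to mimic the classical revelation-principle argument of \citet{kamenica2011bayesian}, adapted to the fact that the belief update now starts from $\mu_c = \ell(c)$ rather than from $\mu_0$. Fix any joint strategy $(c, \pi)$ with an arbitrary signal space $\Ss$. I would first hold $c$ fixed, so the receiver's ``prior'' is $\mu_c$, and argue that for this fixed prior the standard revelation principle applies verbatim: given $(c,\pi)$, let $a^*_{c,\pi,s}$ be the (sender-favored) best response to each signal $s$, and define the recommendation scheme $\pi'(a \mid \omega) = \sum_{s : a^*_{c,\pi,s} = a} \pi(s \mid \omega)$ over signal space $\Ss = \A$. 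The key point is that this post-processing of signals is a garbling that can only coarsen the receiver's information, and since the receiver was already best-responding by taking $a^*_{c,\pi,s}$, recommending that action directly is obedient: the posterior $\mu_c(\cdot \mid \text{``}a\text{''})$ is a convex combination (mixture) of the posteriors $\mu_c(\cdot \mid s)$ over those $s$ with $a^*_{c,\pi,s}=a$, and since $a$ is a best response under each of those $\mu_c(\cdot\mid s)$ it remains a best response under the mixture by linearity of expected utility.

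The second step is to check that the sender's ex-ante utility is preserved exactly. Under $(c,\pi)$ the sender gets $\sum_\omega \mu_0(\omega) \sum_s \pi(s\mid\omega)\, u(a^*_{c,\pi,s},\omega)$; after the regrouping, the contribution of each original $(\omega,s)$ pair is carried over to the pair $(\omega,a)$ with $a = a^*_{c,\pi,s}$, so the double sum is literally the same sum reindexed — the sender's ex-ante expectation is unchanged. Here one must be a little careful about ties: if the receiver's tie-break under $(c,\pi)$ already favored the sender (the textbook assumption stated in the Model section), then choosing $a^*_{c,\pi,s}$ as the recommended action is consistent with obedience plus sender-favorable tie-breaking, so $\pi'$ is a valid obedient recommendation scheme delivering the same value. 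I would state this as: $(c,\pi')$ is obedient and $U(c,\pi') = U(c,\pi)$.

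Finally, I would conclude by taking suprema: since for every $(c,\pi)$ there is an obedient recommendation scheme $\pi'$ with the same $c$ and weakly higher (in fact equal) sender utility, the optimum over all $(c,\pi)$ with general $\Ss$ equals the optimum over $(c,\pi)$ with $\Ss = \A$ and $\pi$ obedient; combined with Proposition~\ref{prop:no_randomization} (deterministic framing suffices), this gives the claimed reduction. The only mild subtlety — and the step I would be most careful about — is the interaction between the belief oracle and the argument: one must confirm that replacing $\Ss$ by $\A$ does not change the \emph{framing} $c$ at all (it does not — $c$ is untouched, only $\pi$ is post-processed), so $\mu_c = \ell(c)$ is identical before and after and the oracle never enters the manipulation. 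Everything downstream of $\mu_c$ is then exactly the classical Bayesian-persuasion revelation principle applied with prior $\mu_c$ in place of $\mu_0$, so no new idea is needed beyond bookkeeping.
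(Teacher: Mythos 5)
Your proposal is correct and follows essentially the same route as the paper: group signals by the action they induce, aggregate them into a direct recommendation scheme $\pi'(a\mid\omega)=\sum_{s:a^*_{c,\pi,s}=a}\pi(s\mid\omega)$, verify obedience by summing (equivalently, mixing) the best-response inequalities, and note that the sender's ex-ante utility is the same sum reindexed while the framing $c$ and hence $\mu_c$ are untouched. Your explicit handling of sender-favorable tie-breaking is a point the paper's proof leaves implicit, but otherwise the arguments coincide.
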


Observation \ref{ob:revelation_principle} allows us to restrict attention to the design of direct signaling schemes without loss of generality.
This succinct representation gives rise to the following corollary about polynomial-time solvability under discrete framing space. The algorithm simply enumerates all possible $c\in C$, and for each $c$ computes the corresponding optimal signaling scheme via the standard Bayesian persuasion linear program \cite{dughmi_algorithmic_2016}. 
\begin{corollary}
For a discrete framing space $C$, the optimal joint strategy $(c^*, \pi^*)$ can be computed in $|C|\cdot \mathrm{poly}( |\A|,  |\Omega|)$ time.     
\end{corollary}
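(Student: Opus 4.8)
The plan is to combine Proposition~\ref{prop:no_randomization} and Observation~\ref{ob:revelation_principle} to reduce the joint problem, for each fixed framing, to an ordinary Bayesian persuasion instance, and then brute-force over the discrete framing space. By Proposition~\ref{prop:no_randomization} it is without loss of generality to restrict to deterministic framings, so the outer optimization ranges over the $|C|$ elements of $C$. By Observation~\ref{ob:revelation_principle}, once the framing $c$ is fixed it suffices to search over obedient direct recommendation schemes $\pi:\Omega\to\Delta(\A)$, i.e.\ to take $\Ss=\A$; this is also what removes the $|\Ss|$ factor that appeared in the framing-only corollary.

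The key step is to observe that for a fixed $c$, the induced receiver prior is $\mu_c=\ell(c)\in\Delta(\Omega)$, and the inner problem ``maximize the sender's ex-ante utility over obedient recommendation schemes when the receiver's prior is $\mu_c$'' is exactly the standard Bayesian persuasion problem with prior $\mu_c$. This is solved by the classical linear program of \citet{kamenica2011bayesian, dughmi_algorithmic_2016}: the variables are $\{\pi(a\mid\omega)\}_{a\in\A,\,\omega\in\Omega}$ (so $|\A|\cdot|\Omega|$ nonnegative variables), the constraints are $\sum_{a}\pi(a\mid\omega)=1$ for each $\omega$ together with the obedience constraints
\[
\sum_{\omega\in\Omega}\mu_c(\omega)\,\pi(a\mid\omega)\,v(a,\omega)\;\ge\;\sum_{\omega\in\Omega}\mu_c(\omega)\,\pi(a\mid\omega)\,v(a',\omega)\qquad\text{for all }a,a'\in\A,
\]
(so $O(|\A|^2)$ linear constraints), and the objective is the linear function $\sum_{\omega}\mu_c(\omega)\sum_{a}\pi(a\mid\omega)\,u(a,\omega)$. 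The textbook tie-breaking-in-favor-of-the-sender convention is exactly what justifies the weak inequalities, and standard arguments (cf.\ \citet{dughmi_algorithmic_2016}) show that an optimal obedient scheme exists and is attained at an optimum of this LP. An LP with $\mathrm{poly}(|\A|,|\Omega|)$ variables and constraints is solvable in $\mathrm{poly}(|\A|,|\Omega|)$ time.

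The algorithm then just enumerates: for each $c\in C$, query the oracle for $\mu_c=\ell(c)$, solve the LP to obtain the optimal signaling value and an optimizer $\pi_c$, and return the pair achieving $\max_{c\in C}$ of these values. Correctness follows because Observation~\ref{ob:revelation_principle} guarantees the true joint optimum is attained by some $(c,\,\text{direct obedient }\pi)$, and for that $c$ the LP returns a scheme at least as good; optimality over $C$ then follows from the enumeration. The running time is $|C|$ oracle calls plus $|C|$ LP solves, i.e.\ $|C|\cdot\mathrm{poly}(|\A|,|\Omega|)$. There is essentially no hard step here — all the real content sits in the already-established Observation~\ref{ob:revelation_principle}; the only points requiring mild care are that fixing $c$ genuinely leaves a bona fide Bayesian persuasion instance with prior $\ell(c)$, and that the LP faithfully encodes the sender-favorable tie-breaking.
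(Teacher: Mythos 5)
Your overall algorithm---restrict to deterministic framings by Proposition~\ref{prop:no_randomization}, restrict to direct obedient schemes by Observation~\ref{ob:revelation_principle}, enumerate $c \in C$, and solve one linear program per framing---is exactly the paper's intended argument, and the runtime accounting is fine. However, there is one concrete error in your LP: you write the objective as $\sum_{\omega}\mu_c(\omega)\sum_{a}\pi(a\mid\omega)\,u(a,\omega)$, i.e.\ you weight the sender's utility by the framing-induced belief $\mu_c$. In this model the framing changes only the \emph{receiver's} belief; the state is still drawn from the true prior $\mu_0$, and the sender's ex-ante utility in Definition~\ref{def:stackelberg-equilibrium} is $\E_{\omega\sim\mu_0}[\cdot]$. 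The correct LP (Eq.~\eqref{equation:joint_lp}--\eqref{eq:joint-lp-constraint}) is therefore a \emph{heterogeneous-prior} persuasion program: the objective is weighted by $\mu_0$ while the obedience constraints are weighted by $\mu_c$. Your phrasing that the inner problem ``is exactly the standard Bayesian persuasion problem with prior $\mu_c$'' obscures this; if one literally solved your LP, one would optimize the sender's utility as the receiver perceives it rather than as it is realized, and the enumeration could return a strictly suboptimal pair $(c,\pi)$.

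The fix is local: replace $\mu_c(\omega)$ by $\mu_0(\omega)$ in the objective (keeping $\mu_c$ in the obedience constraints), and the rest of your argument---$O(|\A|\cdot|\Omega|)$ variables, $O(|\A|^2)$ obedience constraints plus $|\Omega|$ normalization constraints, poly-time LP solving, correctness via Observation~\ref{ob:revelation_principle} and exhaustive enumeration over $C$---goes through verbatim and matches the paper.
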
 


Similar to Section \ref{sec:context_only_optimization}, next we answer two key questions: (1) the effect of framing on the sender's utility and the sensitivity to errors in framing-to-belief simulation, and (2) the efficient computability of the optimal joint strategy with perfect simulation. 
 Our results highlight key differences between joint design and framing-only design.  The joint design problem turns out to be more tractable. 

\subsection{The Effect of Framing on Sender Utility}
To compare with Proposition \ref{prop:fixed-scheme-discontinuous}, we start our analysis by analyzing the continuity property of the optimal sender utility as a function of the framing-induced receiver prior belief $\mu$. It is not difficult to see that, in the optimal joint design, given any framing-induced receiver prior belief $\mu$, the sender's corresponding optimal signaling scheme is the one that optimally accompanies this receiver prior belief. We denote this signaling scheme as $\pi^*_{\mu}$ and the resulting sender utility $U^*(\mu)$ as a function of the framing-induced receiver prior belief $\mu$. Observation \ref{ob:revelation_principle} shows that $U^*(\mu)$ can be efficiently computed for any $\mu \in \Delta(\Omega)$ using the following linear program, with the constraints referred to as the \emph{obedience} constraints as is standard in the literature \cite{bergemann2019information}:
\begin{align}
    U^*(\mu) ~ := ~ \max_{\pi : \Omega \rightarrow \Delta(\A)} ~ & \sum_{\omega \in \Omega}\sum_{a \in \A}{\mu_0(\omega)\pi(a|\omega)u(a, \omega)} \label{equation:joint_lp}\\
    \text{s.t. } \quad & \sum_{\omega \in \Omega}{\mu(\omega)\pi(a|\omega)\big[v(a, \omega) - v(a', \omega)\big] \geq 0}, \quad \forall a, a' \in \A \times \A. \label{eq:joint-lp-constraint}
\end{align}
The $U^*(\mu)$ above tracks the sender's utility as a function of the receiver's prior belief $\mu$ under an optimal signaling scheme. It is worthwhile to compare $U^*(\mu)$ to the earlier function $U_{\pi}(\mu)$ introduced in Section \ref{sec:context_only_optimization} to capture the same quantity but with a fixed signaling scheme $\pi$. In contrast to the discontinuity of $U_{\pi}(\mu)$ shown in Proposition \ref{prop:fixed-scheme-discontinuous}, our following Theorem \ref{thrm:joint_continuous} (proof deferred to Appendix~\ref{proof:joint-continuous}) shows that once $\pi$ becomes part of the strategy space, the sender's optimal utility becomes continuous in the receiver's prior belief $\mu$. 

\begin{theorem}[Continuous sender utility]
\label{thrm:joint_continuous}
The sender's utility $U^*(\mu)$, defined in \eqref{equation:joint_lp}, is a locally Lipschitz continuous function of the induced receiver prior belief $\mu$ in the interior of the simplex
$\Delta(\Omega)$. 
\end{theorem}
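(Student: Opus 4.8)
The plan is to show that $U^*(\mu)$ is the optimal value of a linear program whose constraint matrix depends linearly on $\mu$, and then invoke stability/sensitivity theory for parametric LPs. First I would rewrite the obedience LP \eqref{equation:joint_lp}–\eqref{eq:joint-lp-constraint} treating $\pi(a|\omega)$ as the $|\A|\cdot|\Omega|$ decision variables; the objective $\sum_{\omega,a}\mu_0(\omega)\pi(a|\omega)u(a,\omega)$ is a fixed linear functional (independent of $\mu$), while the obedience constraints $\sum_\omega \mu(\omega)\pi(a|\omega)[v(a,\omega)-v(a',\omega)]\ge 0$ and the simplex constraints $\sum_a \pi(a|\omega)=1$, $\pi(a|\omega)\ge 0$ form a feasible polytope $P(\mu)$ whose defining inequalities have coefficients that are \emph{affine} (in fact linear) in $\mu$. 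The key structural point is that this LP is always feasible — the fully-revealing scheme, or more robustly any scheme induced by the strict-inducibility regularity assumption, gives an interior-feasible point — and the feasible region is bounded (it sits inside the product of simplices), so the optimum is attained and finite for every $\mu$ in the interior of $\Delta(\Omega)$.

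The main step is a quantitative perturbation argument. I would fix an interior belief $\mu$ and a nearby belief $\mu'$ with $\|\mu-\mu'\|$ small, and compare $U^*(\mu)$ and $U^*(\mu')$. The cleanest route is: take an optimal $\pi$ for $\mu$, and show that a slightly perturbed $\pi'$ is feasible for $\mu'$ and achieves objective value at least $U^*(\mu) - L\|\mu-\mu'\|$ for a constant $L$ depending on $\min_\omega \mu(\omega)$, $\|u\|_\infty\le 1$, $\|v\|_\infty\le 1$, and $|\A|,|\Omega|$. To build $\pi'$, note that the obedience constraint violated by $\pi$ under $\mu'$ is violated by at most $O(\|\mu-\mu'\|)$ (since the constraint value changes linearly in the belief with bounded coefficients); one can then mix $\pi$ with a small amount of a strictly-obedient ``slack'' scheme $\pi^{\mathrm{slack}}$ — one for which every obedience constraint holds with a uniform positive margin $\delta>0$ under $\mu'$ (such a scheme exists, again by strict inducibility, with $\delta$ bounded below in terms of $\min_\omega\mu'(\omega)$ and hence of $\min_\omega\mu(\omega)$ for $\mu'$ close to $\mu$). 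Mixing weight $\Theta(\|\mu-\mu'\|/\delta)$ restores feasibility while perturbing the objective by $O(\|\mu-\mu'\|/\delta)$. By symmetry the same bound holds in the other direction, giving $|U^*(\mu)-U^*(\mu')|\le L\|\mu-\mu'\|$ on a neighborhood of $\mu$, i.e. local Lipschitz continuity; the Lipschitz constant $L$ is controlled on any compact subset of the interior where $\min_\omega \mu(\omega)$ is bounded away from $0$.

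The step I expect to be the main obstacle is establishing the uniform slack $\delta$ and making the dependence $L = L(\min_\omega\mu(\omega))$ explicit: one needs a single ``robustly obedient'' signaling scheme whose obedience margin is bounded below uniformly over a neighborhood of $\mu$, and one must verify that the margin degrades gracefully (not catastrophically) as $\mu$ approaches the boundary — which is exactly why the statement is restricted to the \emph{interior} of the simplex and asserts only \emph{local} Lipschitzness rather than a global bound. An alternative, slightly slicker argument that avoids explicitly constructing $\pi^{\mathrm{slack}}$ is to invoke a Lipschitz-stability theorem for linear programs under right-hand-side and constraint-matrix perturbations (e.g. via Hoffman's bound applied to $P(\mu)$, whose Hoffman constant is bounded on compacta of the interior); I would mention this as the conceptual backbone and use the explicit mixing construction to keep the proof self-contained. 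Either way, the continuity of $\mu\mapsto U^*(\mu)$ then contrasts sharply with Proposition~\ref{prop:fixed-scheme-discontinuous}, since the freedom to re-optimize $\pi$ smooths out the jumps caused by the receiver's best-response correspondence.
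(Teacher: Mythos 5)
Your proposal is correct and follows essentially the same route as the paper: a sensitivity analysis of the parametric LP in which the optimal scheme $\pi^*_\mu$ is mixed toward a strictly obedient ``slack'' direction (whose existence and margin $\delta D$ come from the strict-inducibility assumption and $\min_\omega\mu(\omega)\ge p_0$), yielding a scheme that stays obedient for all $\mu'$ with $\|\mu'-\mu\|_1\le\eps$ at an $O(\eps/(p_0^2 D))$ objective cost, then symmetrizing. The one step you flag as the main obstacle---building the uniformly slack scheme consistently with Bayes plausibility---is precisely what the paper's proof supplies, via perturbing each posterior $\mu_a$ toward $\eta_a$, a convex-decomposition lemma bounding the residual mass $y\le\delta/p_0$, and a continuity-of-posteriors lemma.
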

\begin{proof}[Proof Sketch]
The high-level idea is a sensitivity analysis for the linear program \eqref{equation:joint_lp}\eqref{eq:joint-lp-constraint}, whose variable is $\pi$ and parameter is $\mu$. We want to show that the optimal objective $U^*(\mu)$ 
cannot change a lot when the parameter $\mu$ is slightly perturbed. In particular, let $\pi^*_\mu$ be an optimal solution of the linear program when the parameter is $\mu$. We modify $\pi^*_\mu$ slightly to be another solution $\tilde \pi$ that satisfies the obedience constraint \eqref{eq:joint-lp-constraint} simultaneously for all parameters $\mu'$ that are close to $\mu$ (in the sense of $\|\mu' - \mu \|_1 \le \eps)$. Such modification is possible due to the conditions that (1) every action $a\in\A$ of the receiver is strictly inducible by some belief; (2) $\mu(\omega) > 0$ for every $\omega \in \Omega$.  Since the modification is small, the utility of $\tilde \pi$ is only slightly worse than the utility of $\pi^*_\mu$, which is $U^*(\mu)$.  So, the optimal objective $U^*(\mu')$ with parameter $\mu'$ cannot be too much worse than $U^*(\mu)$, which establishes the continuity of the $U^*(\cdot)$ function. See details in Appendix~\ref{proof:joint-continuous}. 
\end{proof}

Theorem \ref{thrm:joint_continuous} has useful implications on understanding the sender's utility loss due to the error in a simulated belief-to-framing mapping $\ell_{\varepsilon}$.
To illustrate this, it is helpful to relax the obedience constraint. Formally, we say an action recommendaiton $a \in \A$ is \emph{$\varepsilon$-obedient} for a receiver with prior belief $\mu$ if for every action $a' \in \A$, $\sum_{\omega}\mu (\omega)\pi(a|\omega)\big[v(a, \omega) - v(a', \omega) \big] \geq -\eps$.\footnote{As per convention, $\varepsilon$-obedience means the obedience constraints may be violated by at most $\varepsilon$. The notation $(-\varepsilon)$-obedience thus means the obedience constraints are strictly satisfied with an $\eps$ margin.}
Now let $(\mu^*, \pi^*_{\mu^*})$ denote an optimal joint strategy, where prior belief $\mu^* = \argmax_{\mu \in B} U^*(\mu)$, with framing $c^*$ inducing $\ell(c^*) = \mu^*$ under the true belief-to-framing mapping $\ell$. 
%
%
Choosing $c^*$ under the imperfect mapping $\ell_{\varepsilon}$ will result in a perceived belief $\ell_{\varepsilon}(c^*)$ within the ball $B_{\varepsilon}(\mu^*) = \{ \mu \in \Delta(\Omega) : \| \mu - \mu^* \| \le \eps \}$, and due to Theorem~\ref{thrm:joint_continuous},
the perceived utility will be within $\lambda_1 \varepsilon$ of the optimal, where $\lambda_1$ is the local Lipschitz constant of $U^*$, hence $| U^*( \ell_\eps(c^*)) - U^*(\mu^*) | \le \lambda_1 \eps$.
If we find the optimal joint strategy $(\hat \mu, \pi^*_{\hat \mu})$ under the imperfect mapping $\ell_\eps$, with $\hat \mu = \argmax_{\mu \in \hat B} U^*(\mu)$ where $\hat B = \ell_\eps(C)$, and find the corresponding framing $\hat c$ inducing $\ell_\eps(\hat \mu) = \hat \mu$, then we have: $U^*(\hat \mu) \ge U^*(\ell_\eps(c^*)) \ge U^*(\mu^*) - \lambda_1 \eps$.


\vspace{0.5em}

\noindent
\begin{minipage}{0.47\textwidth}
  \includegraphics[width=\linewidth]{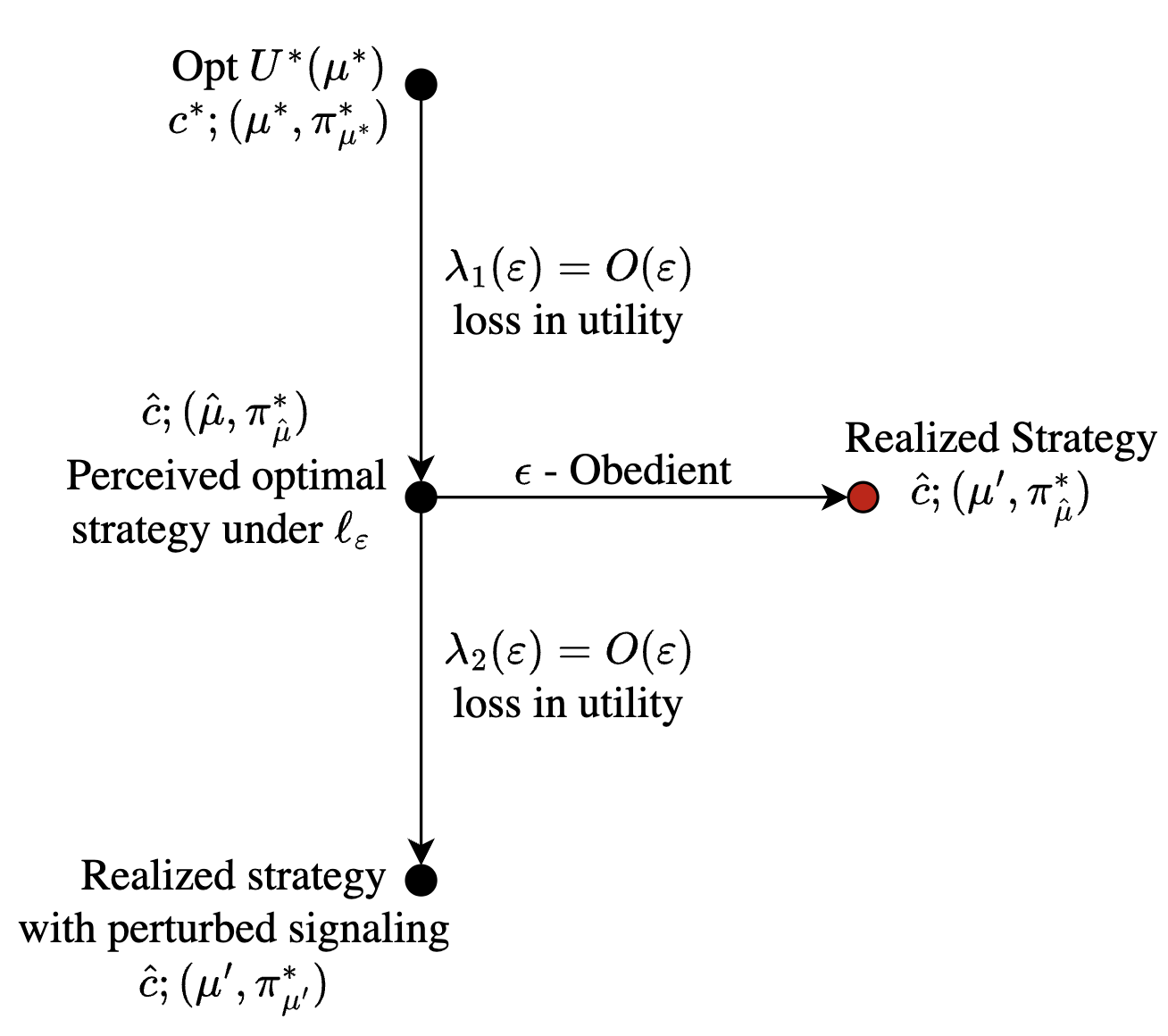}
\end{minipage}%
\hfill
\begin{minipage}{0.51\textwidth} 
When using framing $\hat{c}$ in practice, however, the actual belief induced by $\hat c$ is $\mu' = \ell(\hat c) \ne \hat \mu$ where $\| \mu' - \hat{\mu}\| \leq \varepsilon$. Since the signaling $\pi^*_{\hat{\mu}}$ is unchanged, when the prior belief is shifted from $\hat{\mu}$ to $\mu'$, it is evident that the obedience constraints, which were exactly satisfied for $\hat{\mu}$, are violated by at most $\varepsilon$ for $\mu'$ -- in other words, the deployed scheme is $\varepsilon$-obedient. 
If the sender was aware of this error and wished to be conservative, she could slightly modify $\pi^*_{\hat{\mu}}$ to make it exactly obedient for $\mu'$: this is feasible because $\mu'$ is close to $\hat{\mu}$, but will cause an additional $\lambda_2 \eps$ loss to the sender's utility. 
The total loss is $\lambda_1\eps + \lambda_2 \eps = O(\eps)$. 
We illustrate this visually on the left and state it formally below:
\end{minipage}

 \begin{corollary}\label{cor:joint_error_bounded}
    The realized utility in facing an $\varepsilon$-obedient receiver under a joint optimal strategy based on an imperfect mapping $\ell_{\varepsilon}$ is at most $ \lambda_1 \epsilon = O(\varepsilon)$ away from the optimal utility $U^*(\mu^*)$ for an exactly obedient receiver. In slightly modifying the signaling scheme of this strategy, 
    the strategy can be exactly obedient and at most $\lambda_1 \varepsilon + \lambda_2 \varepsilon = O(\eps)$ away from the optimal utility $U^*(\mu^*)$. 
\end{corollary}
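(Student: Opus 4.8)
The plan is to formalize the chain of estimates sketched in the discussion preceding the corollary, leaning on Theorem~\ref{thrm:joint_continuous} (local Lipschitzness of $U^*$ with constant $\lambda_1$) together with the robust-obedience perturbation already constructed inside its proof. Throughout, write $\OPT := \max_{c\in C} U^*(\ell(c)) = U^*(\mu^*)$ for the optimal sender utility against an exactly obedient receiver under a perfect oracle, where $\mu^* = \ell(c^*)$; by Observation~\ref{ob:revelation_principle} this is indeed the value of the optimal joint strategy. The key conceptual point to keep straight is the distinction between the \emph{perceived} belief $\hat\mu = \ell_\varepsilon(\hat c)$, on which the sender solves the linear program~\eqref{equation:joint_lp}, and the \emph{true} belief $\mu' = \ell(\hat c)$ that the receiver actually holds and best-responds to, with $\|\mu' - \hat\mu\| \le \varepsilon$.

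For the first claim, I would first observe that the realized utility of the non-conservative deployment $(\hat c, \pi^*_{\hat\mu})$ equals $U^*(\hat\mu)$: as argued in the discussion preceding the corollary, the scheme $\pi^*_{\hat\mu}$, which is exactly obedient at $\hat\mu$, is $\varepsilon$-obedient at $\mu'$ (the obedience constraints~\eqref{eq:joint-lp-constraint} are linear in the belief with coefficients bounded in magnitude), so an $\varepsilon$-obedient receiver — one who follows any recommendation that is $\varepsilon$-obedient for his belief — follows every recommendation, making the realized sender utility exactly the objective of~\eqref{equation:joint_lp} evaluated at $\hat\mu$, i.e.\ $U^*(\hat\mu)$. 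Since $\hat\mu$ maximizes $U^*$ over the perceived beliefs and $c^*$ is a feasible framing, $U^*(\hat\mu) \ge U^*(\ell_\varepsilon(c^*))$; and $\|\ell_\varepsilon(c^*) - \mu^*\| \le \varepsilon$ gives $U^*(\ell_\varepsilon(c^*)) \ge U^*(\mu^*) - \lambda_1\varepsilon = \OPT - \lambda_1\varepsilon$ by Theorem~\ref{thrm:joint_continuous}. Applying the same Lipschitz bound at the inducible belief $\ell(\hat c)$ (which is within $\varepsilon$ of $\hat\mu$ and satisfies $U^*(\ell(\hat c)) \le \OPT$) yields $U^*(\hat\mu) \le \OPT + \lambda_1\varepsilon$, so the realized utility is within $\lambda_1\varepsilon = O(\varepsilon)$ of $\OPT$.

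For the second claim, I would invoke the construction from the proof of Theorem~\ref{thrm:joint_continuous} with base point $\mu = \hat\mu$: mixing $\pi^*_{\hat\mu}$ with a fixed reference scheme that is strictly obedient at $\hat\mu$ with an instance-dependent, $\varepsilon$-independent margin — such a scheme exists precisely because every action is strictly inducible and $\hat\mu$ has positive support — using mixing weight $\eta = O(\varepsilon)$, produces a scheme $\tilde\pi$ that is exactly obedient simultaneously for all beliefs in $B_\varepsilon(\hat\mu)$, hence in particular for the true belief $\mu'$. Because payoffs lie in $[0,1]$, this mixing costs at most $\eta$ in the objective of~\eqref{equation:joint_lp}, so the value of $\tilde\pi$ is at least $U^*(\hat\mu) - \lambda_2(\varepsilon)$ with $\lambda_2(\varepsilon) := \eta = O(\varepsilon)$. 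Deploying $(\hat c, \tilde\pi)$ against an exactly obedient receiver (whose prior is $\mu'$), the receiver now follows every recommendation, so the realized utility equals $\sum_{\omega}\mu_0(\omega)\sum_{a}\tilde\pi(a|\omega)u(a,\omega) \ge U^*(\hat\mu) - \lambda_2(\varepsilon) \ge \OPT - \lambda_1(\varepsilon) - \lambda_2(\varepsilon)$, which is the second claim.

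The only step that is more than bookkeeping is the robust-obedience perturbation in the second claim: one must verify that absorbing a belief shift of magnitude $\varepsilon$ requires mixing weight only $O(\varepsilon)$, which hinges on the reference scheme's obedience margin being bounded below by a positive constant independent of $\varepsilon$ — exactly the content of the sensitivity analysis that already underlies Theorem~\ref{thrm:joint_continuous}. Everything else is a sequence of triangle-inequality and Lipschitz estimates, plus the care, noted above, of never conflating $\hat\mu$ with $\mu'$.
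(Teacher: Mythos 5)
Your proposal is correct and follows essentially the same route as the paper, which justifies this corollary via the informal discussion immediately preceding it (the chain $U^*(\hat\mu)\ge U^*(\ell_\eps(c^*))\ge U^*(\mu^*)-\lambda_1\eps$, the observation that $\pi^*_{\hat\mu}$ is $\eps$-obedient at the true belief $\mu'$, and the $O(\eps)$-cost repair to exact obedience via the perturbation built in the proof of Theorem~\ref{thrm:joint_continuous}). Your write-up is in fact slightly more complete than the paper's, since you also supply the upper bound $U^*(\hat\mu)\le \OPT+\lambda_1\eps$ needed for the two-sided ``at most $\lambda_1\eps$ away'' claim.
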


 
We will thus allow the design of $\varepsilon$-obedient signaling schemes throughout this section. This is primarily for mathematical convenience since under mild regularity assumptions, the loss in the obedience constraint can be transferred into the loss in the objective with the same order. The main idea is to compute strictly obedient schemes by adding an $O(\varepsilon)$ amount of margin to obedience constraints, hence this scheme is still obedient
even when the receiver's prior belief is off by an $\varepsilon$ amount. 
What remains is then to bound the sender's utility loss due to imposing a stricter obedience constraint, which can be bounded as $O(\epsilon)$ 
(see, e.g., \cite{zu_learning_2021, lin_information_2025}).  

\subsection{Optimizing the Joint Strategy in a Continuous Framing Space}
Under discrete framing space, the optimal joint strategy can be computed in time linear in $|C|$, the size of framing space. This space in reality can be enormous, hence it is more insightful to consider continuous yet naturally structured framing space. Specifically, we consider the case where the framing-induced prior belief space $B = \{ \ell(c): c\in C\} \subseteq \Delta(\Omega)$ is convex.
As in Section~\ref{subsec:cont_framing_only}, we consider directly optimizing in the belief space $B$.
This leads to the following optimization problem with a linear objective subject to \emph{bi-linear} constraints:
\begin{align}\label{eq:IC_constraints_joint}
    \max_{\mu \in B}{U^*(\mu)} ~ =~ \max_{\mu \in B,\, \pi: \Omega\to\Delta(\A)}~ & \sum_{\omega \in \Omega}\sum_{a \in \A}{\mu_0(\omega)\pi(a|\omega)u(a, \omega)} \\ \nonumber
    \text{s.t. } \qquad & \sum_{\omega \in \Omega}\mu (\omega)\pi(a|\omega)\big[v(a, \omega) - v(a', \omega)\big] \geq 0, \quad \forall \, a, a' \in \A \times \A. 
\end{align}
Bi-linear optimization problems are well-known to be challenging to solve in general. In the following example, we illustrate this challenge by demonstrating the non-convexity and non-concavity of $U^*(\mu)$ even in simple instances.  

\vspace{0.5em}

\noindent
\begin{minipage}{0.55\textwidth}
    \begin{example}[Non-convexity and non-concavity of $U^*(\mu)$]
    \label{example:joint}
Consider an instance with $2$ states $\Omega=\{0, 1\}$ and 3 actions $\{0, 1, 2\}$, with the following utility matrices for the sender and the receiver (rows are actions, columns are states): 
\begin{align*}
    u = \begin{bmatrix}
        0 & 1 \\
        1 & 0 \\
        0.2 & 0.2 
    \end{bmatrix}, \quad \quad 
    v = \begin{bmatrix}
        0.65 & 0.15 \\
        0.60 & 0.30 \\
        0.10 & 0.50 
    \end{bmatrix}. 
\end{align*}
The sender has prior $\mu_0 = (\tfrac{1}{3}, \tfrac{2}{3})$ for the two states. We use the probability of state $0$ to denote the receiver's belief $\mu \in [0, 1]$. The sender's optimal utility function $U^*(\mu)$ is plotted to the right. It is continuous but not convex, concave, or quasi-concave. 
\end{example}
\end{minipage}
\hfill
\begin{minipage}{0.42\textwidth}
    \centering
    \includegraphics[width=\textwidth]{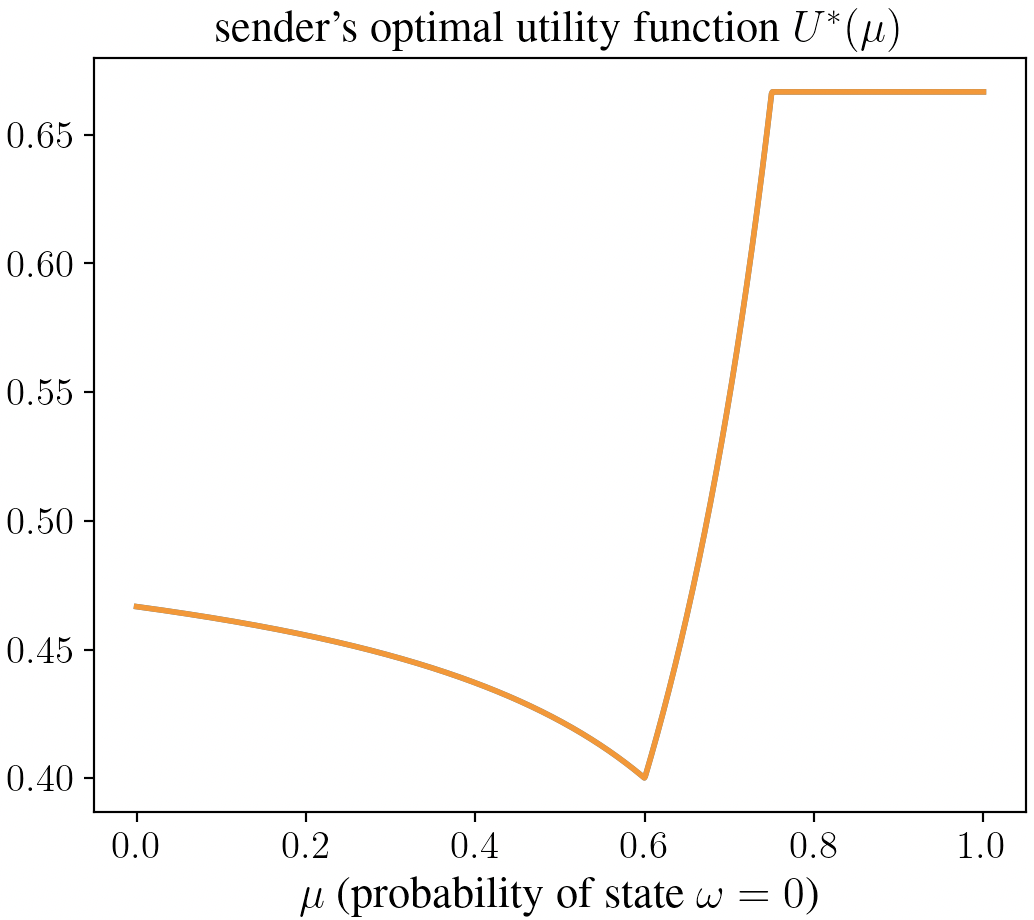} 
\end{minipage}

\vspace{0.5em}

The example above hints at potential barriers for the optimization problem.
Indeed, recall that the framing-only problem in Section \ref{sec:context_only_optimization} is NP-hard even when there is no constraint on the induced belief space $B$, i.e., $B = \Delta(\Omega)$ is the entire simplex.
But interestingly, we obtain a positive result for unconstrained belief space in the joint design case: there is 
an algorithm with $\frac{1}{|\Omega|} $ loss in objective and $\varepsilon$ loss in obedience constraint with $\mathrm{poly}\left({|\Omega|, |A|, 1/\varepsilon}\right)$ running time.
Further, when the sender's utility is state-independent and only depends on the receiver's action (e.g., the prosecutor-judge example), the exactly optimal policy can be computed in polynomial time. 

\begin{theorem}
[Positive results for unconstrained belief space]
\label{thm:joint-unconstrained}
For the framing-signaling joint optimization problem 
with unconstrained belief space $B = \Delta(\Omega)$, the following hold:
\begin{itemize}
     \item A $\big(1 - \tfrac{1}{|\Omega|}\big)$ multiplicative-approximation of the optimal joint strategy utility can be computed in poly-time 
     under $\varepsilon$-obedience constraint; 
     \item If the sender's utility is state-independent, i.e., $\forall a, \forall (\omega, \omega') ,\, u(\omega, a) = u(\omega', a)$, then the exactly optimal strategy can be computed in poly-time. 
 \end{itemize}
\end{theorem}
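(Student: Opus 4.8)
\textit{Plan.} Part (ii) (state‑independent sender utility) is the easy one, so I would dispatch it first and in full. When $u(a,\omega)=u(a)$ does not depend on $\omega$, the sender's ex‑ante objective in Program~\eqref{eq:IC_constraints_joint} equals $\sum_{a} u(a)\, q(a)$, where $q(a):=\sum_{\omega}\mu_0(\omega)\pi(a|\omega)$ is the marginal probability that action $a$ is recommended; since $\sum_a q(a)=\sum_\omega \mu_0(\omega)=1$, this is a point in $\Delta(\A)$, so the objective is at most $\max_a u(a)$. This bound is tight, witnessed by a trivial policy: for $a^\star\in\argmax_a u(a)$, the strict‑inducibility regularity assumption supplies a belief $\nu\in\Delta(\Omega)$ at which $a^\star$ is the \emph{unique} best response, and because $B=\Delta(\Omega)$ is unconstrained the sender can frame to induce $\nu$ and then always recommend $a^\star$; the only signal ever sent leaves the belief at $\nu$, so the recommendation is (strictly) obedient and attains $q=e_{a^\star}$, hence value $u(a^\star)$. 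Everything is polynomial time: $a^\star$ is found by comparing $|\A|$ numbers, and a suitable $\nu$ by one linear feasibility program $\{\nu\in\Delta(\Omega): \langle\nu,v(a^\star,\cdot)\rangle\ge\langle\nu,v(a',\cdot)\rangle\ \forall a'\}$, which is feasible (indeed strictly) by assumption.

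For part (i) the plan hinges on a change of variables that kills the bilinearity in the obedience constraints. Writing $w(a,\omega):=\mu(\omega)\pi(a|\omega)$, we get $\mu(\omega)=\sum_a w(a,\omega)$, $\sum_{a,\omega}w(a,\omega)=1$, $w\ge 0$, and the obedience constraint $\sum_\omega \mu(\omega)\pi(a|\omega)[v(a,\omega)-v(a',\omega)]\ge 0$ becomes the \emph{linear} constraint $\sum_\omega w(a,\omega)[v(a,\omega)-v(a',\omega)]\ge 0$; the objective, however, turns into $\sum_\omega \mu_0(\omega)\,\frac{\sum_a w(a,\omega)u(a,\omega)}{\sum_a w(a,\omega)}$, a sum of $|\Omega|$ linear‑fractional terms over a polytope — exactly the non‑convexity flagged by Example~\ref{example:joint}. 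Next I would solve an LP relaxation of this program — e.g. maximizing a linear upper bound on the objective over the (exact) obedience polytope, possibly after guessing a constant number of coarse parameters of the induced belief $\mu$, and, for the $\varepsilon$‑obedience regime, first arguing as in the discussion preceding the theorem that the induced belief may be taken $O(\varepsilon)$‑bounded away from the boundary of $\Delta(\Omega)$ — whose optimum is at least $\OPT$ by construction. Finally I would round the resulting fractional $w$ back to an honest policy $(\mu,\pi)$: the rounding should discard at most one of the $|\Omega|$ states' worth of objective (costing the multiplicative factor $\big(1-\tfrac{1}{|\Omega|}\big)$) while perturbing each obedience constraint by at most $\varepsilon$, which the passage to an $\varepsilon$‑obedient receiver absorbs — yielding precisely the stated bi‑criteria guarantee.

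The hard part will be the rounding step. Because the objective is neither convex nor concave one cannot optimize it directly, and any rounding that restores a genuine induced belief $\mu=\sum_a w(a,\cdot)$ simultaneously moves the objective and all $|\A|^2$ obedience constraints; the crux is to exhibit a rounding — most plausibly a pipage‑ or ``sacrifice‑the‑lightest‑state''‑style argument on the variables $w(a,\omega)$ — for which the objective loss is at most a $1/|\Omega|$ fraction while the obedience slack stays within $\varepsilon$, leaning on the regularity assumptions ($\mu_0$ full support, every action strictly inducible) to keep the slack bounded. Getting these two quantities to trade off cleanly is where essentially all the work of part (i) lives; the change of variables and the LP relaxation, by contrast, I expect to be routine.
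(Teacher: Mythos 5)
Your part~(ii) is correct and is essentially the paper's argument: bound the objective by $\max_a u(a)$, use strict inducibility to find a belief $\nu$ at which the sender-optimal action $a^\star$ is a best response (one linear feasibility program), frame to $\nu$, and send an uninformative recommendation of $a^\star$.

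Part~(i), however, has a genuine gap: the entire approximation guarantee rests on a rounding step that you explicitly do not carry out. After the change of variables $w(a,\omega)=\mu(\omega)\pi(a|\omega)$ and an unspecified ``linear upper bound'' LP relaxation, you need a rounding that (a) recovers a legitimate pair $(\mu,\pi)$ with $\mu=\sum_a w(a,\cdot)$, (b) loses at most a $1/|\Omega|$ fraction of the objective, and (c) keeps all $|\A|^2$ obedience constraints within $\varepsilon$ simultaneously. You name this as ``where essentially all the work lives'' and offer only that it is ``most plausibly'' a pipage-style argument; nothing in the proposal establishes that such a rounding exists, and the per-state linear-fractional objective means a local change to $w(\cdot,\omega)$ moves both the objective term for $\omega$ and every obedience constraint at once. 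As written, part~(i) is a plan, not a proof.

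The paper's proof is far more direct and bypasses all of this machinery. Benchmark against the first-best $u_{\max}=\sum_\omega \mu_0(\omega)\,u(a^u(\omega),\omega)$ with $a^u(\omega)=\argmax_a u(a,\omega)$, which trivially upper-bounds $\OPT$. Let $\omega_{\min}$ minimize $\mu_0(\omega)u(a^u(\omega),\omega)$; by pigeonhole that state contributes at most $u_{\max}/|\Omega|$. Frame to induce the near-degenerate belief putting mass $1-\varepsilon$ on $\omega_{\min}$ and $\varepsilon/(n-1)$ elsewhere, and recommend the \emph{receiver}-optimal action at $\omega_{\min}$ but the \emph{sender}-optimal action $a^u(\omega)$ at every other state. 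Obedience holds up to $\varepsilon$ because the recommendation is exactly receiver-optimal on the state carrying $1-\varepsilon$ of the belief, and the remaining states carry only $\varepsilon$ total mass; the sender collects $\sum_{\omega\ne\omega_{\min}}\mu_0(\omega)u(a^u(\omega),\omega)\ge(1-\tfrac{1}{|\Omega|})u_{\max}$. Note where $\varepsilon$-obedience is genuinely load-bearing here: it is what lets the sender extract full first-best value on all states other than $\omega_{\min}$ without any incentive constraint binding there. Your ``sacrifice-the-lightest-state'' instinct is exactly the right one, but it should be applied directly to this explicit construction rather than deferred to an unproven rounding of an LP relaxation.
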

\begin{proof}
    See Appendix \ref{proof:joint-unconstrained}. 
\end{proof}

Theorem \ref{thm:joint-unconstrained} studies unconstrained belief space. Next, we turn to the constrained (general) belief space and present a 
Quasi-Polynomial Time Approximation Scheme (QPTAS) in Theorem \ref{thm:qptas}, which computes an $\eps$-obedient joint strategy that is at least as good as the optimal obedient strategy in $\mathrm{poly}(|\Omega|^{\frac{2\log |\A|}{\varepsilon^2}})$ time. 
While the hardness of the exact optimization problem is an interesting open question, we view the existence of a QPTAS 
as evidence 
that the joint optimization problem is more tractable than the framing-only variant in Section \ref{sec:context_only_optimization}, which was shown to be reducible from the \emph{Independent-Set} problem that has no known QPTAS. 

\begin{theorem}
[QPTAS for constrained belief space]
\label{thm:qptas}
For any instance $\I = (\mu_0, u, v)$ with convex belief space $B \subseteq \Delta(\Omega)$, for any $\varepsilon > 0$, there exists a $\mathrm{poly}(|\Omega|^{\frac{4\log |\A|}{\varepsilon^2}})$-time algorithm that computes an $\varepsilon$-obedient joint strategy
with utility at least as the optimal joint strategy under exact obedience.
\end{theorem}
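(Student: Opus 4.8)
The plan is a ``sparsify-then-enumerate'' scheme, in the spirit of sparse-support arguments for bilinear/mixture-selection problems, combined with linear programming for the part of the problem that stays convex once the sparse structure is fixed. By Observation~\ref{ob:revelation_principle} it suffices to work with direct recommendation schemes $\pi:\Omega\to\Delta(\A)$. The first structural fact I would exploit is the asymmetry already visible in Program~\eqref{eq:IC_constraints_joint}: the sender's objective $\sum_{\omega,a}\mu_0(\omega)\pi(a|\omega)u(a,\omega)$ is linear and depends only on $\pi$, whereas the induced belief $\mu$---which must lie in $B$---enters the problem \emph{only} through the bilinear obedience constraints, via the products $\mu(\omega)\pi(a|\omega)$. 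So the task is $\max_\pi\{\,\mathrm{utility}(\pi):\pi\text{ is obedient at some }\mu\in B\,\}$, and the only obstruction to writing this as a linear program is that one bilinear coupling. I would remove it by guessing a sparse proxy for the belief half.

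Concretely, let $(\mu^\star,\pi^\star)$ be an exactly-obedient optimum with value $U^\star$. Draw $k=\Theta(\varepsilon^{-2}\log|\A|)$ i.i.d.\ states $\omega_1,\dots,\omega_k\sim\mu^\star$ and let $\hat\mu$ be their empirical distribution, i.e.\ a belief supported on at most $k$ states whose weights are multiples of $1/k$. For each ordered pair $(a,a')$ the map $\omega\mapsto\pi^\star(a|\omega)\,[v(a,\omega)-v(a',\omega)]$ takes values in $[-1,1]$ and has $\mu^\star$-mean $\ge 0$; Hoeffding's inequality together with a union bound over the $\le|\A|^2$ pairs shows that with strictly positive probability $\pi^\star$ is simultaneously $\varepsilon$-obedient with respect to $\hat\mu$ for every pair. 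Hence there exists a $k$-sparse belief $\hat\mu$ for which the unchanged scheme $\pi^\star$, of utility $U^\star$, is $\varepsilon$-obedient when obedience is evaluated against $\hat\mu$ (and all of its posteriors are supported on $\mathrm{supp}(\hat\mu)$).

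The algorithm then enumerates all $\binom{|\Omega|+k-1}{k}\le|\Omega|^{O(k)}=|\Omega|^{O(\varepsilon^{-2}\log|\A|)}$ candidate $k$-sparse beliefs $\hat\mu$. For each candidate it solves a linear program over $(\pi,\mu)$ that maximizes $\mathrm{utility}(\pi)$ subject to $\pi\ge 0$ being row-stochastic, $\mu\in B$, and the obedience constraints evaluated against the \emph{fixed} certificate $\hat\mu$ (so they are linear), and it returns the best $(\mu,\pi)$ over all candidates. For the good $\hat\mu$ of the previous paragraph, $(\mu^\star,\pi^\star)$ is feasible, so the best value returned is at least $U^\star$; each LP has size $\mathrm{poly}(|\Omega|,|\A|,k)$, which yields the claimed running time.

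The genuinely delicate step---and the one I expect to be the main obstacle---is reconciling the sparse certificate $\hat\mu$ with the hard requirement that the \emph{actually induced} belief lie in $B$: a $k$-sample empirical of $\mu^\star$ is typically $\Omega(1)$-far from $\mu^\star$ in total variation and need not belong to $B$, and $B$ may contain no sparse belief at all, so $\hat\mu$ cannot itself be handed to the receiver. One must therefore argue separately that a scheme which is $\varepsilon$-obedient against $\hat\mu$ can be converted, at no loss of utility, into one that is $O(\varepsilon)$-obedient against a genuine $\mu\in B$. This is where I would invoke the regularity assumptions (every receiver action strictly inducible; $\mu_0$ and the relevant beliefs bounded away from $\partial\Delta(\Omega)$) and the margin-to-objective transfer already used for Theorem~\ref{thrm:joint_continuous}: carry an extra $O(\varepsilon)$ slack in the obedience margins so that the obedience pattern pinned down by $\hat\mu$ remains valid under the discrepancy between $\hat\mu$ and any nearby admissible belief, and then bound the resulting utility change by $O(\varepsilon)$ via the local Lipschitzness of $U^\star$. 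Making this coupling precise---ideally by having the per-candidate LP also certify, from the sparse structure, that the obedience pattern it fixes is realizable by some $\mu\in B$---is the crux; the remaining constant-chasing needed to land exactly on ``$\varepsilon$-obedient with utility at least $U^\star$'' as stated is then routine.
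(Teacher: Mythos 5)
Your proposal runs on the same engine as the paper's proof: sample $k=\Theta(\varepsilon^{-2}\log|\A|)$ states from $\mu^\star$, apply Hoeffding plus a union bound over the $|\A|^2$ obedience constraints to conclude that with positive probability the empirical distribution $\hat\mu$ makes the unchanged $\pi^\star$ $\varepsilon$-obedient, observe that the sender's objective depends only on $\pi$ and $\mu_0$ so the value is preserved, enumerate the $\binom{|\Omega|+k-1}{k}=|\Omega|^{O(k)}$ grid points, and solve one LP per candidate, giving the claimed runtime. Where you diverge is the final step, and that divergence is precisely what manufactures the difficulty you flag as ``the crux.'' The paper does not treat $\hat\mu$ as a mere certificate while inducing some other $\mu\in B$: it enumerates only the $k$-uniform distributions that lie in $B$ (membership is testable in poly-time because $B$ is convex), hands the enumerated $\hat\mu$ \emph{itself} to the receiver as the framing-induced belief, and solves the signaling LP with $\varepsilon$-relaxed obedience against that same $\hat\mu$. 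Because the theorem only promises an $\varepsilon$-obedient output --- not an exactly obedient one --- no reconciliation between $\hat\mu$ and a ``genuine'' belief in $B$ is required, and none of the Lipschitz or margin-transfer machinery from Theorem~\ref{thrm:joint_continuous} is invoked. So your decoupled LP over $(\pi,\mu)$ with obedience checked against a fixed $\hat\mu\neq\mu$ --- which indeed guarantees nothing about the deployed strategy --- should simply be replaced by the paper's coupled version, and your unfinished continuity argument becomes unnecessary. One caveat is worth recording: collapsing the certificate and the induced belief into a single object only works if the good empirical $\hat\mu$ actually lies in $B$, which the paper asserts implicitly but does not prove (a thin or low-dimensional convex $B$ may contain no $k$-uniform point at all). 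So the concern you raise is genuine; it is just not resolved by either argument, and it vanishes in the unconstrained case $B=\Delta(\Omega)$.
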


\begin{proof}
    Let $\mu^* \in B$ and $\pi^* : \Omega \to \Delta(A)$ be the optimal induced prior belief and signaling scheme for this instance under exact obedience constraints. As $\mu^*$ is a distribution over $\Omega$, we can draw $n$ samples from $\mu^*$, and the resulting empirical distribution, denoted $\muhat$, is an $n$-uniform distribution -- i.e., each entry of $\muhat$ is a multiple of $\tfrac{1}{n}$.
    Because
    \[ \E\Big[\sum_{\omega \in \Omega}\muhat(\omega)\pi^*(a | \omega)\big[ v(a, \omega) - v(a', \omega) \big] \Big] = \sum_{\omega \in \Omega}\mu^*(\omega)\pi^*(a | \omega)\big[v(a, \omega) - v(a', \omega) \big] \ge 0,\]
    by Hoeffding's inequality, we have 
    \begin{equation*}
        \forall a, a' \in \A \times \A, \quad \Pr\Big[ \sum_{\omega \in \Omega}{\muhat(\omega)\pi^*(a | \omega)\big[v(a, \omega) - v(a', \omega)\big]} < -\eps \Big] \leq \exp\left(-n\eps^2/2 \right). 
    \end{equation*}
    Taking a union bound over all $|\A|^2$ pairs $(a, a')$, we have $\sum_{\omega}{\muhat(\omega)\pi^*(a | \omega)\big[v(a, \omega) - v(a', \omega)\big]} \ge -\eps$ satisfied for all pairs of $(a, a')$ with probability at least $1 - |\A|^2\exp\left(-n\eps^2/2\right)$, hence $\hat \mu$ in conjunction with $\pi^*$ satisfies $\eps$-obedience constraints. Pick $n = \frac{4 \log |A|}{\eps^2}$. The probability $1 - |\A|^2\exp\left(-n\eps^2/2\right)$ will be positive.  This means that there must exist an $n$-uniform distribution $\hat \mu$ satisfying $\eps$-obedience in conjunction with $\pi^*$. Note that the sender's utility under the $(\hat \mu, \pi^*)$ strategy is the same as the $(\mu^*, \pi^*)$ strategy because the sender's utility depends on $\pi^*$ but not the receiver's prior belief. 

    Now consider the following algorithm: enumerate over all $n$-uniform distributions in $B$, and for each, solve the optimal signaling linear program \eqref{equation:joint_lp}-\eqref{eq:joint-lp-constraint} but with a relaxed $\varepsilon$-obedience constraint.\footnote{We assume that $B$ contains at least one $n$-uniform distribution, a mild assumption given a rich $B$. If not, then we project the closest $n$-uniform distribution to $B$, which, because the set of $n$-uniform distributions is an $O(\frac{|\Omega|}{n})$-cover of $\Delta(\Omega)$ under $\ell_1$ distance, only causes $O(\frac{|\Omega|}{n}) = O(\frac{|\Omega| \eps^2}{\log|A|})$ loss to the sender's utility.}
    Return the solution with the best sender utility.  This solution must be weakly better than the $\eps$-obedience solution $(\hat \mu, \pi^*)$ mentioned above, which is therefore weakly better than the optimal solution $(\mu^*, \pi^*)$.

    We then consider the runtime of the algorithm. 
    The runtime depends on the number of $n$-uniform distributions and the time to check whether each distribution is included within the inducible belief set $B$. Since $B$ is convex, checking this inclusion can be done in poly-time. As for the number of $n$-uniform distributions, since the probability of each state $\omega$ can take on $n$ possible values $\{0, \tfrac{1}{n}, \tfrac{2}{n}, \dots, 1\}$ and the sum must equal $1$, it is equivalent to placing $n$ elements into $|\Omega|$ distinct buckets. Thus, the number of possible distributions is at most $\binom{n + |\Omega| - 1}{|\Omega| - 1}$.
    For a fixed $|\Omega|$ and $n$ growing large, this quantity is a polynomial in $n$ of degree $|\Omega|-1$, which is clearly upper bounded by $O(|\Omega|^n)$. 
    For a fixed $n$ and as $|\Omega|$ grows large, we have $\binom{n + |\Omega| - 1}{|\Omega| - 1} = \binom{n + |\Omega| - 1}{n} \le \frac{ (n + |\Omega| - 1)^n}{n!} = O(|\Omega|^n)$.
    Thus, the runtime of this algorithm is bounded by $\mathrm{poly} \cdot O(|\Omega|^n) = \mathrm{poly}(|\Omega|^{\frac{2\log |\A|}{\varepsilon^2}})$. 
\end{proof}

\section{Empirical Study with Large Language Models}\label{sec:experiment}
Our theoretical result illustrates that, while optimizing the framing-induced prior belief $\mu_c$ (let alone the language framing $c$) for a fixed signaling scheme $\pi$ is computationally difficult, jointly optimizing both is more tractable.
So, we conduct an empirical study on the joint optimization problem.
We will co-design the framing-signaling pair $(c, \pi)$, as opposed to directly optimizing the belief-signaling pair $(\mu_c, \pi)$ as in the theoretical section.
Since $c$ belongs to a rich natural language space, our investigation here 
makes use of large language models (LLMs). 

There are two possible approaches to 
optimizing the framing-signaling pair $(c, \pi)$. 
One approach is to first leverage Theorems~\ref{thm:joint-unconstrained} or \ref{thm:qptas} to find the optimal $(\mu_c, \pi)$ pair in quasi-polynomial time, 
then appeal to LLMs to approximate the function $\ell^{-1}$ that maps the belief $\mu_c$ back to a framing $c$ that would induce that belief. While our theoretical results do much of the heavy-lifting for this approach, it suffers from one crucial drawback: it requires \emph{a priori} knowledge of the inducible belief set $B$, which in many cases may 
be unavailable. 
This motivates a second approach inspired by \emph{iterative improvement/hill-climbing} in prompt optimization: Leverage LLMs to (1) directly search over the language space of framings and (2) approximate the belief mapping $\ell$. For any candidate framing $c$ with LLM approximated induced belief $\mu_c$, the corresponding optimal signaling scheme can be computed using the linear program in Equation~\eqref{eq:joint-lp-constraint}. This is used as feedback to guide the LLM framing search. We elaborate on this approach below; as an aside, this approach can also be used to search over framings under a fixed signaling scheme, i.e., the framing-only problem variant.

\subsection{Methodology}\label{sec:methodology}
We present our proposed approach for optimizing in the framing space in Figure \ref{fig:optimization_flow}. LLMs search the framing space and estimate the prior belief of a receiver induced by a framing.
These two LLMs modules are complemented with an additional LLM that verifies the soundness/correctness of any framing, and analytical solvers that compute the sender's utility under the optimal signaling scheme for a given quantitative receiver belief. We now discuss each of these roles:

\begin{figure}[t]
    \centering
    \includegraphics[width=0.95\linewidth]{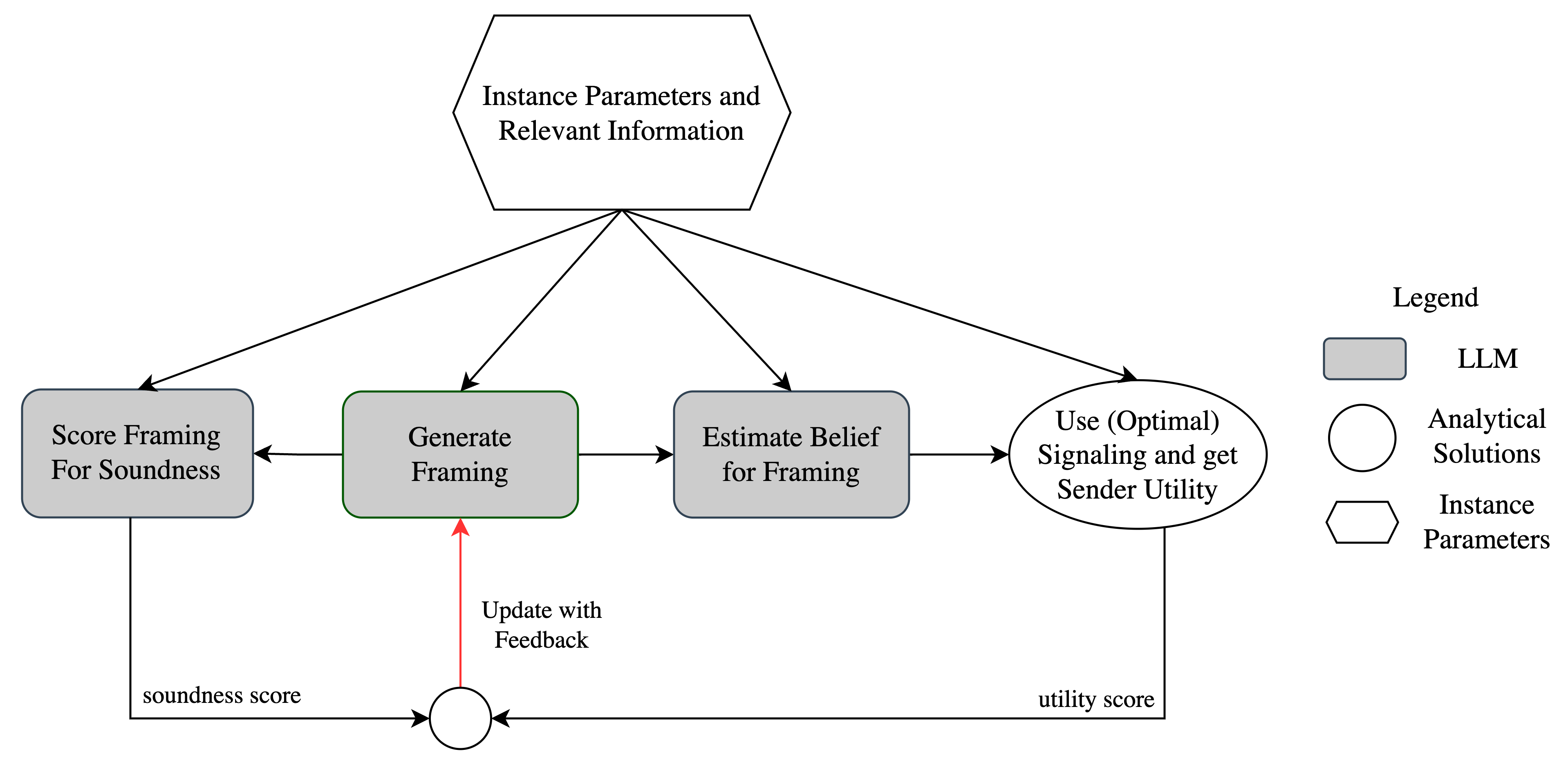}
    \caption{Diagram of our proposed framework for optimizing framing and signaling. It includes LLMs searching the framing space, verifying the correctness of framing, and generating framing-induced beliefs. It also includes poly-time analytical solvers to compute optimal signaling scheme for a given belief.}
    \label{fig:optimization_flow}
    \vspace{-1em}
\end{figure}

\textbf{Instance information:} Information relevant for optimal framing and signaling is not just the quantitative parameters like $(u, v)$ but also qualitative descriptions about the setting, the receiver and sender, and any related meta information. This is because the belief formation process induced by framing 
is not explicit or mathematical. 
Rather, it captures how a string describing some aspect of the instance will be perceived by a given receiver, factoring in social, environmental, and personal factors. It is thus important to provide these additional contexts. 

\textbf{Estimating the framing-to-belief mapping $\ell$:}
We use an LLM to estimate how a framing would influence a given receiver/decision maker's belief. In cases where the decision-making has been delegated to LLMs, an increasingly common scenario~\citep{fish2023generative}, we can simply query that LLM and the approximation is essentially exact. When the decision-maker is human, a nascent line of economics research argues that LLMs can approximate the human decision-maker in various settings~\citep{horton2023large, manning2024automated}, including as a statistical proxy to model human beliefs \citep{namikoshi2024using}. This involves endowing the LLM with information about the human it is modeling --  demographics, preferences, and backgrounds. What we precisely require from the LLM is that it outputs a quantitative estimation of the receiver's belief. This can be gathered by either: asking the LLM to generate one of $|\Omega|$ tokens corresponding to each state and recording the log probabilities, or directly asking it to return numerical probabilities. \citet{cruz2024evaluating} show that on distributional (non-factual) questions, the first approach leads to uncalibrated answers (the log probabilities are far from the true distribution), while direct elicitation in a chat-style prompt has better outcomes.
We thus use the direct elicitation approach and comment on our empirical observations in Section~\ref{subsubsec:belief_estimation}. Lastly, Theorem~\ref{thrm:joint_continuous} and Corollary~\ref{cor:joint_error_bounded} imply that any errors in this belief approximation will only have 
small influences on the sender's utility, which is practically helpful. 

\textbf{Validation of framing:} Using LLMs to generate framing risks hallucinations. For example, if asked to design a framing for a Nike basketball shoe, a blurb highlighting their collaboration with a non-existent NBA team or player would be incorrect. In general, the \emph{soundness} may be more nuanced than a binary outcome; the framing could take certain liberties that, while not blatantly incorrect, may be undesired. As such, we propose using an LLM to score soundness with a set of values between 0 and 1. This module is given in-context information about the instance along with the generated framing. This score is part of the feedback to the framing-generating LLM.

\textbf{Computing sender utility:} For any given instance with a receiver prior belief $\mu_c$, the corresponding optimal signaling scheme $\pi^*(\mu_c)$ can be analytically computed by solving the linear program specified in Equations~\eqref{equation:joint_lp} in poly-time.

\textbf{Generating framing:} Building on the success of in-context learning via \emph{textual gradients} \citep{pryzant2023automatic}, we propose that an LLM generate a framing based on instance-relevant information and a task description, and iteratively refine it through feedback. The task description defines key parameters for the framing, such as word count and style, while also outlining the feedback to be expected. For each generated framing, the induced prior belief is estimated and then used to compute the corresponding sender utility; this is scaled by the soundness score. This final quantitative score is supplemented with the reasoning behind the generated belief and soundness score to construct the feedback string. Alongside this, the feedback also contains a \emph{textual gradient} where we outline how the induced belief and utility changed between the last generated framing and the current one. The LLM context maintains a window of the last $k$ generated framings, their feedback, and the best candidate framing generated so far.

\subsection{An Advertising Case Study}\label{subsec:case_study}
To evaluate our proposed empirical framework, we conduct a case study focused on advertising. We consider a fictional clothing brand, \emph{Himalaya}, known for durable, high-quality gear favored by outdoor enthusiasts. The brand is launching a new outerwear line targeting the style-conscious, casually active athleisure market. Such consumers primarily care about looking good but like the idea of performance wear. The brand's advertising campaign aims to attract this new demographic without compromising brand identity and includes: (1) a new slogan, (2) a description of the new product line, and (3) discount offers. The first two constitute the framing since they are chosen \emph{a priori} and apply to all products; for the third, the sender chooses how the discounts are correlated with product features, and thus represents the signaling scheme.

There are four product categories based on two mutually exclusive attributes: \emph{(Chic vs Functional)} and \emph{(Durable vs Not Durable)}. Buyers can choose from three actions: ``buy on sale'', ``buy at regular price'', or ``not buy'', and there are three corresponding signals: ``Ad with Discount'', ``Ad without Discount'', and ``No Ad''. Both the user and brand receive zero utility from not buying. As a realistic constraint, the very desirable (Chic, Durable) set of products are never on sale\footnote{This also ensures there is no dominant action for either party, making the instance non-degenerate.}, but still provide positive utility to both parties if purchased at regular prices. The consumer would prefer not buying clothes that are not 
chic, followed by buying them on sale. The brand, on the other hand, would always prefer to sell at regular prices over discounts, over not selling at all. The utilities matrices are 
specified in Appendix~\ref{appendix:advertising_exp}.
The appendix also contains the exact prompts used to describe the target demographic, and the description of the brand and its product features, which is taken directly from 
a real-world product line. For all LLM experiments, we used the default temperature setting.


\begin{figure}[t]
    \centering
    \begin{minipage}{0.45\textwidth}
        \centering
        \includegraphics[width=\textwidth]{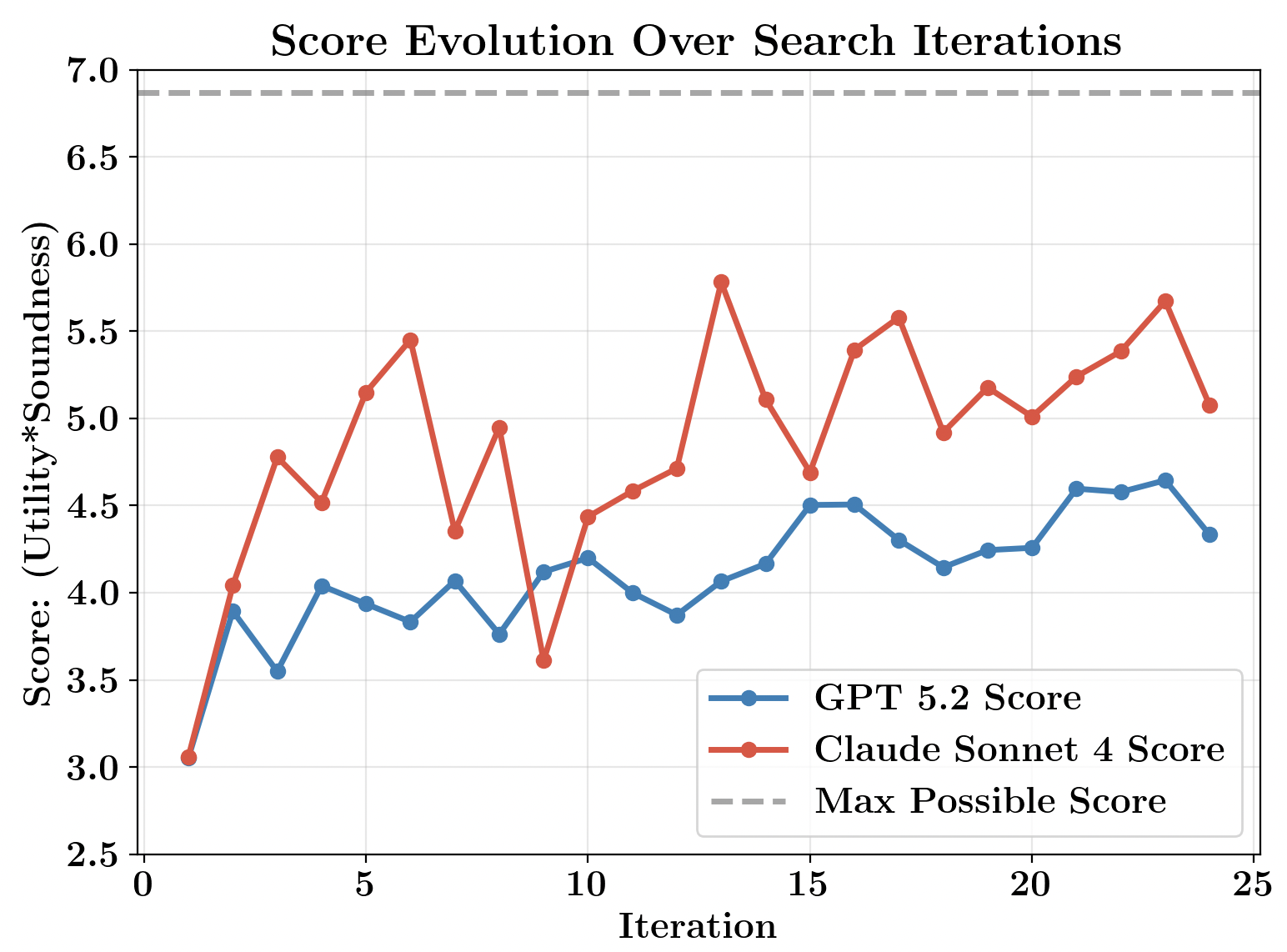}
        \vspace{-2.0em}
        \caption{Evolution of the scores (product of sender utility and soundness) from LLM generated framing. For a framing $c$ we determine the induced belief $\mu_c$ by an LLM, and then compute the sender utility for it under optimal signaling.}
        \label{fig:score_evolution}
    \end{minipage}
    \hfill
    \begin{minipage}{0.53\textwidth}
        \centering
        \includegraphics[width=\textwidth]{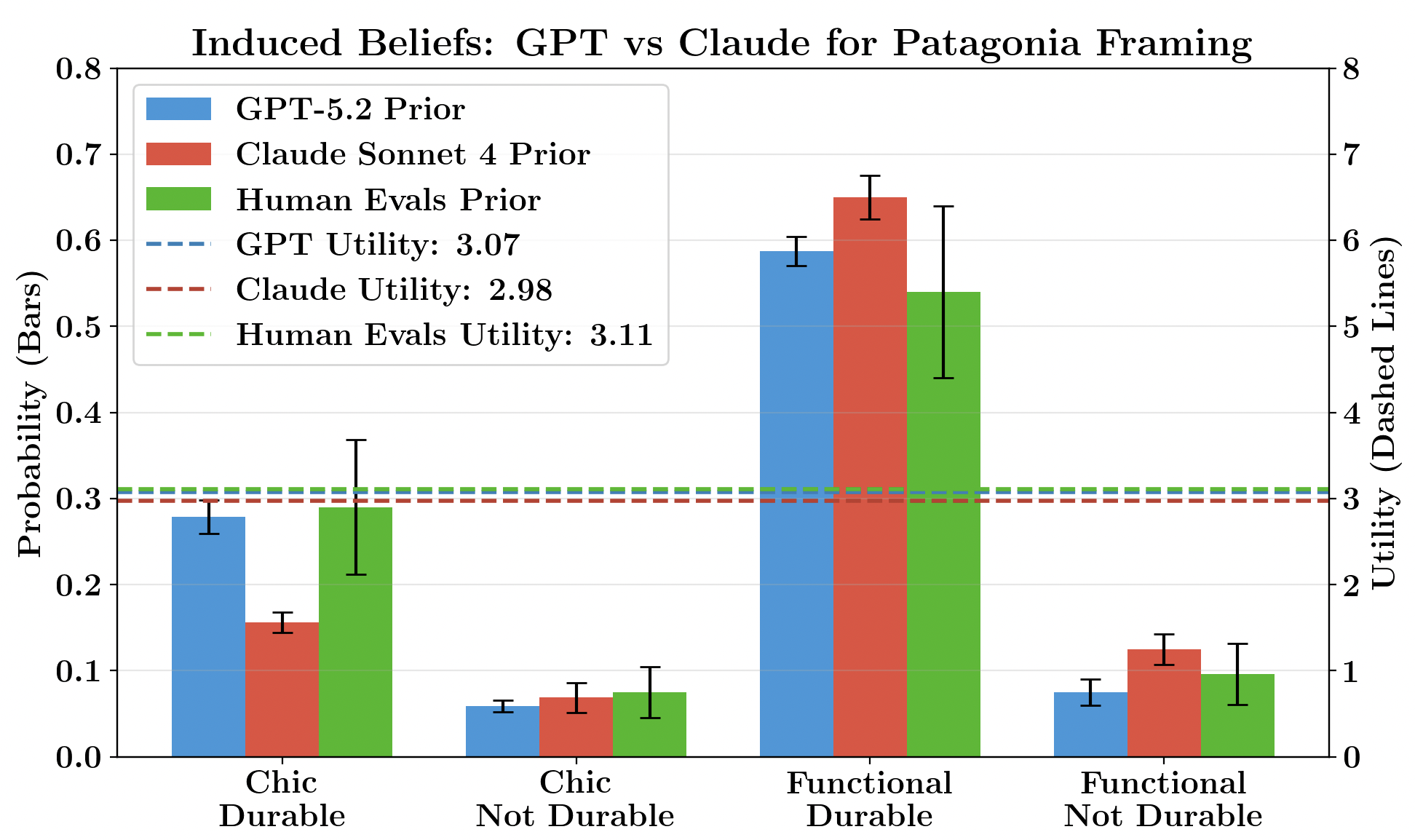}
        \caption{The induced beliefs generated by Claude and GPT models for the framing (slogan and brand description) of a real-world brand (Patagonia).
        They are compared against how humans (through Prolific) evaluated the same framing. Dashed lines are the sender utility under those beliefs. All results are presented with 95\% confidence intervals.}
        \label{fig:induced_priors}
    \end{minipage}
\end{figure}

\paragraph{\bf Estimating the Framing Induced Belief: }\label{subsubsec:belief_estimation}
We conduct a preliminary investigation to evaluate how well LLMs can estimate the framing-induced belief.
The framing for this exploration is the brand slogan and product description that a real-world brand (Patagonia) uses to describe its outerwear lineup, which happens to not include the brand name.\footnote{See Appendix~\ref{app_sec:llm_prompts} for the exact slogan and description as well as full prompts.}
GPT-5.2 and Claude Sonnet 4 are then used to estimate the induced belief 
of our target demographic, following the procedure outlined in Section~\ref{sec:methodology}. We elicit the belief 10 times per model and plot the average beliefs and 95\% confidence intervals over the four states in Figure~\ref{fig:induced_priors}.

To validate these LLM-generated beliefs against human judgments, we recruit 55 participants from Prolific and provide them with the same context given to the LLMs: a description of the target demographic, the four possible product states, and the framing. While one approach would be to directly elicit numerical beliefs from humans (same way as done for LLMs), a body of work in survey methodology and judgment under uncertainty finds that untrained respondents struggle to provide well-calibrated probability distributions~\citep{lichtenstein1981calibration, tversky1982judgment}. This is particularly the case when there are several outcomes due to the cognitive difficulty of ensuring consistency and proper normalization. To sidestep this issue, we ask participants to make pairwise comparisons. Specifically, given two possible states (e.g., ``Chic \& Durable'' vs ``Chic \& Not Durable''), we ask the participant which of the two is a more likely descriptor of the products from this brand from the perspective of the target consumer. Eliciting all possible pairwise comparisons then allows us to fit the classical Bradley-Terry model~\citep{bradley1952rank}, 
which posits that for two options $i,j$, $\Pr(i > j) = \frac{\exp(\lambda_i)}{\exp(\lambda_i) + \exp(\lambda_j)}$, where the $\lambda$ parameters represent numerical preferences.
We determine $\lambda$ by maximizing the log-likelihood given the participants' pairwise comparison responses, 
which after normalization yields a valid probability distribution (belief over the 4 states).
We plot the human belief 
with green bars and 95\% confidence intervals in Figure~\ref{fig:induced_priors}.

We observe that the LLM-generated beliefs, especially that of GPT-5.2, are very close to the aggregate human belief.
The Claude belief also induces the same rankings over the states as the human belief. While the difference between Claude's and human's beliefs may seem somewhat significant in the belief space, it is much less so in the sender utility space. That is, if we compute the sender utility under optimal signaling scheme for each belief (plotted in the dashed lines), they are very similar. Beyond the similarity between beliefs, we observe that the LLM generated beliefs tend to have less variance 
than the human evaluated beliefs. Given the excellent performance of GPT-5.2, we end up using this to estimate induced beliefs for the end-to-end experiment. 

\paragraph{\bf Optimizing in the Framing Space: } With the framing-to-belief mapping approximated by GPT-5.2, we run our end-to-end framework to jointly optimize framing and signaling for this case study. For each generated framing, we compute both the soundness score and the sender utility under optimal signaling for that induced belief. The product of soundness and utility (denoted as the score), along with a \emph{textual gradient} of how the current framing influences beliefs and scores compared to the last one, is provided as in-context feedback. We experiment with both GPT-5.2 and Claude Sonnet 4, 
with their respective scores plotted as a function of the iteration in Figure~\ref{fig:score_evolution}. Both models start with the framing that is currently used by Patagonia (and used in the induced belief estimation experiments above).\footnote{The Patagonia slogan and product description do not reference the brand name. 
And we do not explicitly mention ``Patagonia'' in any of the LLM experiments. 
}
We note that Claude tends to have aggressive changes in score. This is consistent with its generation traces, which show substantial exploration and modification between rounds. This leads to Claude achieving a better score than GPT-5.2, which tends to take a more conservative approach. Indeed, the GPT-5.2 traces show light framing modification between rounds. The best generated framings are given in Appendix~\ref{app_sec:optimal_framings}.

We conclude with two observations. First, the gray dashed line in Figure~\ref{fig:score_evolution} refers to the maximum achievable score assuming that the entire simplex is inducible, i.e., $B = \Delta(\Omega)$.  This requires a belief that is highly concentrated in the ``functional and durable'' class. While the LLM framings also optimize toward that belief, they are limited by what is actually inducible in the receiver through framing. Second, the fact that the LLM framings outperform the starting framing used by Patagonia should not be interpreted as evidence that Patagonia's 
current advertising strategy
is ineffective.
Patagonia's framing is likely well-suited to its core customer base, but not for the fashion-conscious athleisure demographic we target in this case study. 
This underscores the nuanced and context-dependent nature of framing.

\section{Discussion}\label{section:discussion}

This paper bridges classical Bayesian signaling with insights from behavioral economics and psychology, which emphasize that linguistic and contextual framing plays an important role in shaping the beliefs and perceptions of decision-makers.
Historically, the study and optimization of framing 
required costly and time-intensive methods such as focus groups, limiting systematic exploration.
The emergence of large language models (LLMs) offers a practical alternative: they enable cost-effective generation, evaluation, and iterative refinement of framing. 
Our empirical results demonstrate the viability of this approach, while our theoretical analysis characterizes the resulting optimization landscape.

From a theoretical perspective, we show that framing can be extremely powerful -- small belief shifts induced by framing may lead to large gains in sender utility -- but also fundamentally challenging to optimize. In particular, framing-only optimization under a fixed signaling scheme is computationally intractable and sensitive to approximation error.
In contrast, the joint optimization of framing and signaling is substantially more benign: the sender's optimal utility becomes continuous in the induced belief, admits efficient approximation algorithms, and is robust to errors in belief estimation. This distinction precisely identifies when algorithmic, LLM-assisted persuasion is promising and when it faces intrinsic barriers.

Empirically, we take first steps toward a systematic, algorithmically driven alternative to human-crafted persuasive communication. Our framework combines LLM-based search over natural-language framing with analytical optimization of signaling schemes, resulting in a simple and practical procedure for co-designing textual framings and quantitative signals.
Nevertheless, our results should not be interpreted as suggesting that existing real-world advertising strategies are ineffective; rather, they highlight the context-dependent nature of framing and its interaction with target demographics.

This work opens several avenues for future research. On the theoretical side, the exact computational complexity of framing–signaling joint optimization remains open.
Empirically, further study is needed to better understand how closely LLM-generated beliefs align with human beliefs, or how satisfied humans are with LLM-delegated decision-making in signaling settings.
Finally, beyond Bayesian persuasion, it would be fruitful to investigate other influential communication models, such as cheap talk \citep{crawford_strategic_1982, farrell1996cheap} and mediation \citep{roger1991game}, to further broaden the scope of information design with behavioral framing effects.




\bibliographystyle{plainnat}
\bibliography{bibliography}

\newpage
\appendix 

\newpage
\appendix
\section{Omitted Proofs in Section \ref{sec:context_only_optimization}}
\label{appendix:context_only}
\subsection{Proof of Proposition \ref{prop:no_randomization}}
    Let $P_C \in \Delta(C)$ be a distribution over the framing space. Consider the sender's expected utility under this distribution (we consider the framing space to be discrete here, but the result immediately holds for the continuous setting by replacing $\sum_c$ with $\int_c$):
    \begin{align*}
        \E_{P_C}\big[u(a^*_{c, \pi, s}, \omega)\big] 
        &= \sum_{c \in C}{P_C(c)}\sum_{\omega \in \Omega} \sum_{s \in \Ss} \mu_0(\omega){\pi(s|\omega)u(a^*_{c,\pi, s}, \omega)}. 
    \end{align*}
    Let $c_{\max} = \argmax_{c \in C}{\sum_{\omega}\sum_{s}{\mu_0(\omega)\pi(s|\omega)u(a^*_{c, \pi, s}, \omega)}}$. Then it is clear that using framing $c_{\max}$ upper-bounds the expected utility achieved from the randomized strategy. Formally:
    \begin{equation*}
        \E_{P_C}\big[u(a^*_{c,\pi, s}, \omega)\big] \leq\sum_{\omega\in\Omega}\sum_{s\in \Ss}{\mu_0(\omega)\pi(s|\omega)u(a^*_{c_{\max}, \pi, s}, \omega)} = \E\big[u(a^*_{c_{\max}, \pi, s}, \omega)\big].  
    \end{equation*}
    So, there is no need to use randomized framing strategies.

\subsection{Proof of Theorem \ref{thrm:np_hardness}}
\label{proof:np-hardness}
    We will show that finding the optimal utility in a specific class of Bayesian Stackelberg games (BSG) can be reduced to our problem of computing the optimal sender utility by optimizing only framing/receiver prior (OF). \citet{conitzer2006computing} prove that the former problem is NP-Hard. Specifically, it is hard to compute the optimal utility for the following class of BSG problems, which they show is sufficient to ensure that the Independent Set problem can be reduced to it. 
    \squishlist
        \item Follower has binary actions ($a_0, a_1$) with positive bounded utility: $u^{\theta}_f(a_{\ell}, a_f) \in [0, v^{max}]$, where $\vmax \leq |\A_{\ell}|$.
        \item Follower always has utility $1$ for $a_0$. That is, $\forall \theta, a_{\ell}$, $u^{\theta}_f(a_{\ell}, a_0) = 1$.
        \item Leader utility is binary and does not depend on the leader's action (only the follower): $u_{\ell}(a_f) \in \{0, 1\}$.
        \item The least probable type occurs with non-zero probability: $\min_{\theta}P(\theta) \triangleq P_{min} \geq 0$.
    \squishend

    For any given instance of the BSG   with the above characteristics, denoted by $\IBS$, with optimal solution $x^*$ achieving optimal leader utility $\BS(\IBS, x^*)$, we will give a poly-time construction of an OF problem instance $\IOC'$ whose optimal solution $\mu_c^*$ is such that $\OC(\IOC', \mu^*) = \BS(\IBS, x^*)$. Hence if the OF problem can be efficiently solved, it would imply efficient solving of the class of BSG problem described above, which is known to be NP-Hard. For a given instance $\IBS = (\Theta, \A_{\ell}, \A_{f}, P(\theta), u_{\ell}, u_f)$,  we first construct an intermediate instance $\IOC = (\Omega, C, \A, S, \mu_0, u, v, \pi)$ as follows:
    \squishlist
        \item The state space for $\IOC$ is: $\Omega = \{\omega_{\al}\}_{\al \in \A_{\ell}} \cup \{\omega_{\theta}\}_{\forall \theta \in \Theta} \cup \wtilde$
        \item The receiver's action space is: $\A = \{\atheta_0\}_{\forall \theta \in \Theta} \cup \{\atheta_1\}_{\forall \theta \in \Theta} \cup \atilde_1 \cup \atilde_2$
        \item The signal space is: $S = \{s_{\theta}\}_{\theta \in \Theta}$.
    \squishend
    
    For this instance and $\varepsilon > 0$, we specify the sender's prior $\mu_0$, the fixed signaling scheme $\pi$ and the sender and receiver utilities $u, v$ as follows:
    \squishlist
        \item Prior $\mu_0$: $\mu_0(\wtilde) = 1 - \varepsilon$ ; $\forall \theta, \, \mu_0(\omega_{\theta}) = \frac{\varepsilon}{|\Theta|}$; $\forall \al, \, \mu_0(\omega_{\al}) = 0$.\footnote{We let $\mu_0(\omega_{\al}) = 0$ for convenience. One can also construct an instance with a small positive $\mu_0(\omega_{\al}) > 0$. }
        \item Signaling $\pi$: $\forall \theta, \pi(s_{\theta} | \wtilde) = P(\theta)$, $\pi(s_{\theta} | \omega_{\theta}) = 1$; $\forall \theta \ne \theta' ,\, \pi(s_{\theta'} | \omega_{\theta}) = 0$ ; $\forall \theta, \al, \, \pi(s_{\theta} | \omega_{\al}) = \frac{1}{|\Theta|}$. 
        \item Sender Utility $u(a, \omega)$:
        \squishlist
            \item $\forall \al, u(\atheta_*, \omega_{\al}) = u_{\ell}(a_*)$, where $* \in \{0, 1\}$. 
            \item $\forall \theta, u(a^{\theta'}_{*}, \omega_{\theta}) = -L$ if $\theta' \ne \theta$ ; otherwise $u(a^{\theta}_{*}, \omega_{\theta}) = 0$. 
            \item $\forall \omega, u(\atilde_1, \omega) = -N$ ; $u(\atilde_2, \omega) = -K$
            \item $u(\atheta_*, \wtilde) = u_{\ell}(a_*)$, for all $\theta$. 
        \squishend
        \item Receiver Utility $v(a, \omega)$:
        \squishlist
            \item $\forall \al, v(\atheta_*, \omega_{\al}) = u^{\theta}_f(a_{\ell}, a_*)$ ; $v(\atilde_1, \omega_{\al}) = -M - 1$ ; $v(\atilde_2, \omega_{\al}) = 0$
            \item $\forall \theta, v(a^{\theta'}_{*}, \omega_{\theta}) = -M$ for $\theta \ne \theta'$ ; $v(a^{\theta}_{*}, \omega_{\theta}) = 0$ ; $v(\atilde_1, \omega_{\theta}) = - M - 1$ ; $v(\atilde_{2}, \omega_{\theta}) = +K$
            \item $v(\atilde_1, \wtilde) = +N$ ; $v(a \ne \atilde_1, \wtilde) = 0$. 
        \squishend
    \squishend

    \noindent The high-level intuition for this instance is as follows. When the receiver sees a signal $s_{\theta}$ (which is proxying type $\theta$ in BS), we want them to only consider actions $\atheta_0, \atheta_1$, which directly corresponds to follower utility of type $\theta$ in BS. Receiver utility in O, however, does not explicitly depend on $\theta$, but rather on the state $\omega$. Hence we expand the state space to include $\omega_{\theta}$ states. Using a fixed signaling scheme, we want to ensure that $\mu_c$ always induces a slight belief in the receiver that states $\omega_{\theta}$ occurred. The sender is incentivized to do this since otherwise, the receiver could take $a^{\theta'}_*$ actions at state $\omega_{\theta}$ (which occurs with non-zero probability), which is very bad for the sender. They also don't want to put too much weight on $\omega_{\theta}$ states, lest the receiver take the bad (for sender) $\atilde_2$ action. Lastly, we add an additional state $\wtilde$ to ensure the OF objective captures the BSG objective, which depends on the type. Formally, the OF optimization problem under this instance construction above can be written as:
    \begin{align}
        & \maximize_{\mu_c} \,\, \underbrace{(1-\varepsilon)\sum_{\theta}{P(\theta)u(a^*(\mu_c, s_{\theta}))}}_{\text{state $\wtilde$}} + \underbrace{\frac{\varepsilon}{|\Theta|}\sum_{\theta}{u(a^*(\mu_c, s_{\theta}), \omega_{\theta})}}_{\text{for states $\omega_{\theta}$ where $\pi(s_{\theta}|\omega_{\theta}) = 1$}}\\
        & \text{s.t} \quad a^*(\mu_c, s_{\theta}) = \argmax_{a \in \A}{\left[\mu_c(\omega_{\theta})v(a, \omega_{\theta}) + P(\theta)\mu_c(\wtilde)v(a, \wtilde) + \frac{1}{|\Theta|}\sum_{\omega_{\al}}{\mu_c(\omega_{\al})v(a, \omega_{\al})}\right]}
    \end{align}
    We now prove three intermediate results that will specify the necessary relations between the constants used in our $\IOC$ instance and disentangle the key arguments needed for the reduction. 

    \begin{lemma}
        If $\mu_c(\omega_{\theta}) = \frac{v^{max}}{|\Theta|M}, \, \forall \theta$, $\mu_c(\wtilde) = 0$, with $\vmax \leq \frac{M}{1+K}$, then (1) the receiver always chooses between the two action $\{\atheta_0, \atheta_1\}$ on receiving signal $s_{\theta}$ and (2) the sender utility is at least 0.
    \end{lemma}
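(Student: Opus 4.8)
The plan is to verify the two claims by directly analyzing the receiver's best-response problem under the stated belief $\mu_c$ and the fixed signaling scheme $\pi$. First I would write out, for a fixed signal $s_\theta$, the receiver's expected payoff from each candidate action $a$ as a function of the unnormalized posterior weights: these weights are $\mu_c(\omega_\theta)\pi(s_\theta\mid\omega_\theta) = \mu_c(\omega_\theta)$, $\mu_c(\wtilde)\pi(s_\theta\mid\wtilde) = 0$ (since $\mu_c(\wtilde)=0$ here), and $\mu_c(\omega_{\al})\pi(s_\theta\mid\omega_{\al}) = \tfrac{1}{|\Theta|}\mu_c(\omega_{\al})$ for each leader-action state $\omega_{\al}$. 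Because the $\wtilde$ term vanishes, the receiver's objective on signal $s_\theta$ reduces to $\mu_c(\omega_\theta)\,v(a,\omega_\theta) + \tfrac{1}{|\Theta|}\sum_{\al}\mu_c(\omega_{\al})v(a,\omega_{\al})$.

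The key step for claim (1) is to show that any action $a^{\theta'}_*$ with $\theta'\neq\theta$, as well as $\atilde_1$ and $\atilde_2$, is strictly dominated by $\atheta_0$ (or some $\atheta_*$) under this belief. For $a^{\theta'}_*$ with $\theta'\neq\theta$: its payoff picks up $v(a^{\theta'}_*,\omega_\theta) = -M$ weighted by $\mu_c(\omega_\theta) = \tfrac{\vmax}{|\Theta|M}$, contributing $-\tfrac{\vmax}{|\Theta|}$, while the $\omega_{\al}$ part contributes at most $\tfrac{1}{|\Theta|}\sum_{\al}\mu_c(\omega_{\al})\vmax \le \tfrac{\vmax}{|\Theta|}\cdot\big(1 - \mu_c(\omega_\theta)\cdot|\Theta|\big)$-ish — more simply, bounded crudely using $u^\theta_f \in [0,\vmax]$ and the fact that the $\omega_{\al}$ weights sum to less than $1$. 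I need this total to be strictly less than the payoff of $\atheta_0$, which on signal $s_\theta$ gets $v(\atheta_0,\omega_\theta)=0$ from the $\omega_\theta$ term plus the nonnegative $\omega_{\al}$ contribution from $v(\atheta_0,\omega_{\al}) = u^\theta_f(a_0,\cdot) = 1 \ge 0$. For $\atilde_1$, the $v(\atilde_1,\cdot) = -M-1$ entries on $\omega_\theta$ and $\omega_{\al}$ make it strictly worse still. The subtle case is $\atilde_2$: it gets $v(\atilde_2,\omega_\theta) = +K$ weighted by $\mu_c(\omega_\theta)=\tfrac{\vmax}{|\Theta|M}$, i.e. $+\tfrac{K\vmax}{|\Theta|M}$, and $v(\atilde_2,\omega_{\al})=0$, so its payoff is exactly $\tfrac{K\vmax}{|\Theta|M}$. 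I must show some $\atheta_*$ beats this; since $\atheta_0$ guarantees at least $\tfrac{1}{|\Theta|}\sum_{\al}\mu_c(\omega_{\al})\cdot 1$, which could be small, the cleaner route is to compare $\atilde_2$ against the best of $\{\atheta_0,\atheta_1\}$: the receiver's best among these has payoff at least... this is where the hypothesis $\vmax \le \tfrac{M}{1+K}$ enters. That inequality rearranges to $\tfrac{K\vmax}{M} \le \vmax - \tfrac{\vmax}{M}$, or more usefully $\tfrac{K\vmax}{|\Theta|M} \le$ (something dominated by an $\atheta_*$ payoff) — I would make this precise by noting the best $\atheta_*$ payoff on the $\omega_{\al}$ part plus its $0$ on $\omega_\theta$ exceeds $\tfrac{K\vmax}{|\Theta|M}$ precisely under this constant relation. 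I expect this comparison against $\atilde_2$ to be \textbf{the main obstacle}, since it is the only alternative action that earns positive utility from the $\omega_\theta$ states and thus the only genuine competitor; pinning down exactly why $\vmax \le \tfrac{M}{1+K}$ is the right threshold is the crux.

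For claim (2), once claim (1) is established the receiver on each $s_\theta$ plays some $\atheta_* \in \{\atheta_0,\atheta_1\}$. Then the sender's utility: on state $\wtilde$ the contribution is $0$ since $\mu_0(\wtilde)$ multiplies nothing relevant — wait, actually sender utility is computed under the \emph{prior} $\mu_0$, not $\mu_c$, so I would recompute using $\mu_0(\wtilde)=1-\eps$, $\mu_0(\omega_\theta)=\tfrac{\eps}{|\Theta|}$, $\mu_0(\omega_{\al})=0$. On state $\wtilde$ with signal $s_\theta$ (sent w.p. $P(\theta)$), the chosen action $\atheta_*$ gives $u(\atheta_*,\wtilde) = u_\ell(a_*) \ge 0$. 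On state $\omega_\theta$, signal $s_\theta$ is sent w.p. $1$ and the chosen action is $a^{\theta}_*$ (type-consistent by claim (1)), giving $u(a^\theta_*,\omega_\theta) = 0$. Leader utility $u_\ell$ being $\{0,1\}$-valued means every term is nonnegative, so the total sender utility is $\ge 0$. I would write this out as a short explicit sum to close the argument. The whole proof is thus: (i) reduce the posterior to two terms; (ii) eliminate the $|\Theta|-1$ wrong-type actions and $\atilde_1$ by the large penalties $-M$, $-M-1$; (iii) eliminate $\atilde_2$ using $\vmax \le \tfrac{M}{1+K}$; (iv) plug the surviving type-consistent actions into the sender's prior-expectation and observe nonnegativity.
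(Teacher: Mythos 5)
Your proposal is correct and follows essentially the same route as the paper's proof: drop the $\wtilde$ term since $\mu_c(\wtilde)=0$, rule out $\atilde_1$ and the wrong-type actions $a^{\theta'}_*$ via the large penalties, and handle $\atilde_2$ by comparing its payoff $\tfrac{K\vmax}{|\Theta|M}$ against $\atheta_0$'s payoff $\tfrac{1}{|\Theta|}\bigl(1-\tfrac{\vmax}{M}\bigr)$, which is exactly where $\vmax \le \tfrac{M}{1+K}$ enters. The step you flag as the crux is precisely the paper's one-line computation, and your claim-(2) accounting under the prior $\mu_0$ matches as well.
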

    \begin{proof}
         Since $\mu_c(\wtilde) = 0$, we need not consider the receiver taking action $\atilde_1$, since it is dominated by some other action at all remaining states. We first show that on some signal $s_{\theta}$,  they will never take action $\atilde_2$. Indeed, it is obedient for the receiver to take action $\atheta_0$ as opposed to $\atilde_2$ on receiving a signal $s_{\theta}$:
        \begin{align}
            & \mu_c(\omega_{\theta})[v(\atheta_0, \omega_{\theta}) - v(\atilde_2, \omega_\theta)] + \frac{1}{|\Theta|}\sum_{\al}{\mu_c(\omega_{\al})[v(\atheta_0, \omega_{\al}) - v(\atilde_2, \omega_{\al})]} \\
            &= - \mu_c(\omega_{\theta})K + \frac{1}{|\Theta|}\sum_{\omega_{\al}}{\mu_c(\omega_{\al})v(\atheta_0, \omega_{\al})} = \frac{1}{|\Theta|}\left(1 - \frac{\vmax}{M}\right) - \frac{K\vmax}{|\Theta|M} \geq 0
        \end{align}
        where the second equality in the second line follows since at state $\omega_{\al}$, the receiver utility matches that of the BSG setting - i.e $v(\atheta_0, \omega_{\al}) = u^{\theta}_f(a_0, \al)$ - and in the BSG instances we care about, the receiver always gets utility 1 by taking action $a_0$, $v(\atheta_0, \omega_{\al}) = 1$ for all $\omega_{\al}$. This is greater than or equal to 0 due to our choice of constants satisfying $\vmax \leq \tfrac{M}{1+K}$ (we assume ties break in favour of $\atheta$ actions). Thus the receiver will not take action $\atilde_2$ on any signal $s_{\theta}$. 
    
        \noindent Next, we show that the receiver will not take any ``incorrect type'' actions $a^{\theta'}_*$ on receiving signal $s_{\theta}$. Suppose by contradiction they take a deviating action $a^{\theta'}_*$. Then they can expect a utility of at most $\frac{\vmax}{|\Theta|} - \frac{\vmax}{|\Theta|} = 0$. But we know they can achieve a utility of at least $1$ by playing $\atheta_0$ on each signal $s_{\theta}$. Thus, under the given specifications of $\mu_c$, the receiver will always play action $\atheta_*$ on signal $s_{\theta}$. Since the sender's utility on such actions is always at least 0 (mainly due to BSG instance having binary leader utility), the sender achieves at least 0 expected utility under this $\mu_c$.
    \end{proof}

    \begin{lemma}
        Let $\varepsilon \in (0,1)$, $L > \frac{|\Theta|}{\varepsilon}$, $\vmax \leq \frac{M}{1+K}$, and $N, K > \frac{1}{(1-\varepsilon)P_{min}}$. Then for an optimal solution $\mu_c^*$, the receiver only takes actions from $\{\atheta_0, \atheta_1\}$ when receiving signal $s_{\theta}$. This holds even if sender utilities are scaled by a positive constant.
    \end{lemma}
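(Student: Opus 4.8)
The plan is to argue by contradiction, using the lower bound on the optimal sender utility supplied by the preceding lemma. That lemma exhibits an explicit prior $\mu_c$ (with $\mu_c(\wtilde)=0$ and $\mu_c(\omega_\theta)=\vmax/(|\Theta|M)$ for every $\theta$) under which the sender's expected utility is at least $0$; since the construction uses $B=\Delta(\Omega)$, this $\mu_c$ is feasible, so any optimal $\mu_c^*$ satisfies $F(\mu_c^*)\ge 0$, where $F$ denotes the sender's ex-ante utility. I would then show that if, at $\mu_c^*$, the receiver's best response to some signal $s_{\theta_0}$ fails to lie in $\{a^{\theta_0}_0,a^{\theta_0}_1\}$, then $F(\mu_c^*)<0$, a contradiction.

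First I would rewrite $F$ in decomposed form. Since $\mu_0(\omega_{\al})=0$ for every $\al$, only the states $\wtilde$ and $\{\omega_\theta\}_\theta$ enter the sender's objective, and substituting the prescribed $\mu_0$ and $\pi$ gives
\[
F(\mu_c)=(1-\varepsilon)\sum_{\theta}P(\theta)\,u(a^*(\mu_c,s_\theta),\wtilde)+\frac{\varepsilon}{|\Theta|}\sum_{\theta}u(a^*(\mu_c,s_\theta),\omega_\theta).
\]
The structural facts I would extract from the utility tables are: (i) a ``good'' action $\atheta_*$ contributes $u_\ell(a_*)\in\{0,1\}$ in the $\wtilde$ block and $0$ in the $\omega_\theta$ block; (ii) $\atilde_1$ contributes $-N$ and $\atilde_2$ contributes $-K$ in both blocks; (iii) a wrong-type action $a^{\theta'}_{*}$ with $\theta'\ne\theta_0$ is \emph{not} penalized in the $\wtilde$ block (it still contributes only $u_\ell(a_*)\le 1$ there) but contributes $-L$ in the $\omega_{\theta_0}$ block; and (iv) since $u(\cdot,\wtilde)\le 1$ and $u(\cdot,\omega_\theta)\le 0$ pointwise, for any receiver behavior one has $\sum_\theta P(\theta)\,u(a^*(\mu_c,s_\theta),\wtilde)\le 1$ and $\sum_\theta u(a^*(\mu_c,s_\theta),\omega_\theta)\le 0$. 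Finally, $s_{\theta_0}$ is emitted from $\wtilde$ with probability $\pi(s_{\theta_0}|\wtilde)=P(\theta_0)\ge P_{min}$, so in $F$ the coefficient of $u(a^*(\mu_c,s_{\theta_0}),\wtilde)$ is at least $(1-\varepsilon)P_{min}>0$.

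Next I would run the case analysis on the hypothesized bad response $a^*(\mu_c^*,s_{\theta_0})$. If it equals $\atilde_1$, the $\theta_0$ term of the first sum becomes $(1-\varepsilon)P(\theta_0)(-N)$ and the $\theta_0$ term of the second sum becomes $-\frac{\varepsilon N}{|\Theta|}$; bounding the remaining terms by (iv) yields $F(\mu_c^*)\le(1-\varepsilon)(1-P(\theta_0)N)<0$, because $N>\frac{1}{(1-\varepsilon)P_{min}}$ forces $P(\theta_0)N>1$. The case $\atilde_2$ is identical with $K$ in place of $N$. If it equals a wrong-type action $a^{\theta'}_{*}$, then by (iii) the first sum is still bounded by $1-\varepsilon$ while the second sum picks up a term $-\frac{\varepsilon L}{|\Theta|}$, so $F(\mu_c^*)\le(1-\varepsilon)-\frac{\varepsilon L}{|\Theta|}<0$ since $L>\frac{|\Theta|}{\varepsilon}$. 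Every case contradicts $F(\mu_c^*)\ge 0$, which proves the claim. For the last clause, scaling every sender utility by a constant $\gamma>0$ leaves the receiver's utility $v$ untouched (so each best response $a^*(\mu_c,s_\theta)$ is unchanged) and multiplies $F$, the lower bound from the preceding lemma, and every upper bound above by $\gamma$; this preserves both the set of optimizers and all strict inequalities, so the conclusion carries over verbatim.

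The step I expect to be the crux is precisely (iii) and (iv): ensuring that whatever favorable payoffs the receiver may generate on the \emph{other} signals cannot mask a single bad response on $s_{\theta_0}$. This is exactly why the three penalty magnitudes are calibrated differently --- $\atilde_1$ and $\atilde_2$ hurt the sender in the heavily weighted $\wtilde$ block (weight $\ge(1-\varepsilon)P_{min}$), so thresholds proportional to $1/((1-\varepsilon)P_{min})$ suffice, whereas a wrong-type action is punished only through the lightly weighted $\omega_{\theta_0}$ block (weight $\frac{\varepsilon}{|\Theta|}$), which forces the larger threshold $L>|\Theta|/\varepsilon$. Getting these accounting inequalities to point the right way is the only delicate step; the rest is bookkeeping against the explicit utility matrices.
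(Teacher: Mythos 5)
Your proposal is correct and follows essentially the same route as the paper's proof: the same three-way case analysis on the bad response ($\atilde_1$, $\atilde_2$, or a wrong-type action), the same upper bounds on the sender's objective in each case, the same appeal to the preceding lemma for the feasible strategy achieving utility at least $0$, and the same observation that positive scaling of $u$ preserves the receiver's best responses and all sign conclusions. The only differences are cosmetic (you drop the $-\tfrac{N\varepsilon}{|\Theta|}$ term before bounding rather than after, arriving at the equivalent bound $(1-\varepsilon)(1-P(\theta_0)N)<0$).
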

    \begin{proof}
        We partition the cases where this does not hold into three cases and for each, we indicate the suboptimality of $\mu^*_c$ with respect to a feasible solution that does conform to the above.\\

        (1) \emph{$\exists$ a signal $s_{\theta}$ where the receiver takes action $\atilde_1$}. If this were to occur, the sender utility is at most (note that the max sender utility in our $\IOC$ instance is 1 and in states $\omega_\theta$, the maximum utility is 0):
        \begin{align}\label{eq:np_hard_claim_2_n}
            \underbrace{-N(1-\varepsilon)P(\theta)}_{\text{on $\wtilde$ and signal $\theta$}} + \underbrace{(1-\varepsilon)}_{\text{on $\wtilde$ and other signals}} - \underbrace{\frac{N\varepsilon}{|\Theta|}}_{\text{on $\omega_{\theta}$ and $s_{\theta}$}} \leq -N(1-\varepsilon)P(\theta) + 1 < \frac{-P(\theta)}{P_{min}} + 1 \leq 0
        \end{align}
        where the last inequality arises from substituting the lower bound of $N$ specified. The sender thus achives negative utility. However, using claim 1, we know of a feasible specification of $\mu_c$ under these parameters where the sender can achieve at least 0 utility. Thus the $\mu_c^*$ here cannot be optimal. Note that when we scale by a positive constant, the last part of Eq. \eqref{eq:np_hard_claim_2_n} simply becomes $c\left[\frac{P(\theta)}{P_{min}} + 1\right] \leq 0$ for the same reason as above.

        (2) \emph{$\exists$ a signal $s_{\theta}$ where the receiver takes action $\atilde_2$.} As before, if this were to occur, the sender utility for this $\mu^*_c$ is at most:
        \begin{equation}
            -K(1-\varepsilon)P(\theta) + (1-\varepsilon) - \frac{K\varepsilon}{|\Theta|} \leq -K (1-\varepsilon) P(\theta) + 1 \leq \frac{-P(\theta)}{P_{min}} + 1 \leq 0
        \end{equation}
        where in the last inequality, we substitute the lower bound of $K$ specified. As before, the sender achieves negative utility, even through claim 1 shows it is possible to achieve a utility of 0, indicating suboptimality. Further, it is impervious to positive scaling of sender utilities.

        (3) \emph{$\exists$ a signal $s_{\theta}$ where the receiver takes an action $a^{\theta'}_*$.} If this were to occur, consider the sender utility:
        \begin{align}
            & (1-\varepsilon)\underbrace{\sum_{s_{\theta}}{P(\theta)u(a^*(\mu_c, s_{\theta}))}}_{\text{at most 1}} + \frac{\varepsilon}{|\Theta|}\underbrace{u(a^{\theta'}_*, \omega_{\theta})}_{-L} + \frac{\varepsilon}{|\Theta|}\sum_{s_{\hat{\theta}}}{\underbrace{u(a^*(\mu_c, s_{\hat{\theta}}), \omega_{\hat{\theta}})}_{\text{at most } 0}} \\
            & \leq (1 - \varepsilon) - \frac{L \varepsilon}{|\Theta|} \leq 1 -  \frac{L \varepsilon}{|\Theta|} < 0 \label{eq:np_hard_claim_2_l}
        \end{align}
        where the last inequality follows since $\frac{|\Theta|}{\varepsilon} < L$. Again the sender receives negative utility when it is possible to achieve at least 0 utility due to claim 1. As before, if we were to scale by a positive constant $c$, inequality \eqref{eq:np_hard_claim_2_l} simply becomes $c\left[(1-\varepsilon) - \frac{L\varepsilon}{|\Theta|}\right] < 0$ which still becomes negative due to the choice of $L$.  
    \end{proof}
        
    \begin{lemma}
        Let $\varepsilon \in (0,1)$, $L > \frac{|\Theta|}{\varepsilon}$, $\vmax \leq \frac{M}{1+K}$, and $N, K > \frac{1}{(1-\varepsilon)P_{min}}$. Then for an optimal solution $\mu_c^*$, we can construct a solution $\mu'$ in poly-time such that $\OC(\IOC, \mu^*) = \OC(\IOC, \mu')$, $\mu'(\wtilde) = 0$, $a^*(\mu'_c, s_{\theta}) \in \{\atheta_1, \atheta_0\}$. This holds even when all sender utilities are scaled by a positive constant.
    \end{lemma}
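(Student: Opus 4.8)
The plan is to take an optimal solution $\mu_c^*$ and surgically move all the mass that $\mu_c^*$ places on $\wtilde$ onto the $\omega_\theta$ states, while arguing that this reallocation neither changes the sender's utility nor disturbs the receiver's best responses. First I would invoke the previous lemma to conclude that under $\mu_c^*$, for every signal $s_\theta$ the receiver's best response $a^*(\mu_c^*, s_\theta)$ lies in $\{\atheta_0, \atheta_1\}$; this is the structural fact that makes the reallocation safe. The key observation is that the sender's utility from state $\wtilde$, namely $(1-\varepsilon)\sum_\theta P(\theta)\, u(\atheta_*, \wtilde) = (1-\varepsilon)\sum_\theta P(\theta)\, u_\ell(a_*)$, is \emph{identical} to the sender's utility contribution attributed to state $\omega_{\al}$-type weight routed through the same $\atheta_*$ action, since by construction $u(\atheta_*, \omega_{\al}) = u_\ell(a_*) = u(\atheta_*, \wtilde)$. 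So if I push $\wtilde$'s mass to $\omega_\al$ states (the states on which $\mu_0$ puts zero weight anyway, so they contribute nothing to the objective directly) in a way that replicates how $\pi$ splits $\wtilde$ across signals, the induced best responses and the objective value are preserved.

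The construction I have in mind: set $\mu'(\wtilde) = 0$, redistribute the freed mass $\mu_c^*(\wtilde)$ onto the states $\{\omega_{\al}\}$ (possibly also tuning $\{\omega_\theta\}$ weights as in Lemma~1 to guarantee the receiver keeps choosing between $\atheta_0$ and $\atheta_1$), and leave the relative proportions among the ``type'' states unchanged up to a scaling that keeps $\mu'$ a distribution. Concretely, I would choose $\mu'(\omega_\theta)$ and $\mu'(\omega_{\al})$ so that, for each $s_\theta$, the posterior-weighted comparison $\mu'(\omega_\theta) v(a, \omega_\theta) + \tfrac{1}{|\Theta|}\sum_{\omega_{\al}} \mu'(\omega_{\al}) v(a, \omega_{\al})$ ranks the actions $\{\atheta_0, \atheta_1\}$ in the same order as it did under $\mu_c^*$ (the $\wtilde$ term in the original comparison was $P(\theta)\mu_c^*(\wtilde) v(a,\wtilde)$, which is zero for every action except $\atilde_1$, so dropping it only removes an incentive to deviate to $\atilde_1$, which we already excluded). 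Lemma~1's parameter regime ($\vmax \le \tfrac{M}{1+K}$) is exactly what lets me simultaneously keep $\atilde_2$ suboptimal; the bounds on $L$ keep the ``wrong-type'' actions $a^{\theta'}_*$ suboptimal on $\omega_\theta$. Since all the choices are solutions of small linear systems / explicit formulas in the instance parameters, the construction runs in polynomial time.

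Finally I would verify $\OC(\IOC, \mu^*) = \OC(\IOC, \mu')$ directly: the objective is $(1-\varepsilon)\sum_\theta P(\theta) u(a^*(\cdot, s_\theta)) + \tfrac{\varepsilon}{|\Theta|}\sum_\theta u(a^*(\cdot, s_\theta), \omega_\theta)$; since the best responses $a^*(\mu', s_\theta) = a^*(\mu_c^*, s_\theta) \in \{\atheta_0, \atheta_1\}$ are unchanged and $u(\atheta_*, \omega_\theta) = 0$ for all $\theta$, both terms evaluate identically. The robustness to positive scaling of sender utilities is immediate since scaling the objective by a constant does not change which $\mu$ maximizes it, and — crucially — Lemma~2's exclusion arguments were already shown to survive positive scaling, so the premise ``$a^*(\mu_c^*, s_\theta) \in \{\atheta_0, \atheta_1\}$'' still holds in the scaled instance. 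The main obstacle I anticipate is making the redistribution of $\wtilde$'s mass \emph{precise} enough that the receiver's indifference/strict-preference pattern among $\atheta_0$ and $\atheta_1$ is genuinely preserved for every signal simultaneously — one has to check that moving mass onto $\omega_{\al}$ states, which carry BSG-follower payoffs $u^\theta_f(a_*, \al)$, shifts each pairwise comparison by a controllable amount and that the $\omega_\theta$ penalty terms can absorb any slack; this bookkeeping across all $|\Theta|$ signals at once is the delicate part, but it is routine given the generous separation between the constants $L, M, N, K, \vmax$.
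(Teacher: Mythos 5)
Your high-level idea is the right one — kill the $\wtilde$ mass, observe that $v(a,\wtilde)=0$ for every action except $\atilde_1$ so the $\wtilde$ term in each obedience comparison was either zero or (for the already-excluded deviation to $\atilde_1$) a term whose removal only helps, and conclude that the sender's utility is unchanged because it depends only on the receiver's (unchanged) best responses. But the specific reallocation you propose — dumping $\mu_c^*(\wtilde)$ onto the $\omega_{\al}$ states, "possibly also tuning" the $\omega_\theta$ weights — creates exactly the problem you flag at the end and then wave away as "routine." Increasing the $\omega_{\al}$ mass while holding the $\omega_\theta$ mass fixed changes the balance in the comparison between the chosen action $\atheta_*$ and a wrong-type action $a^{\theta'}_*$: the protective $+M\mu_c(\omega_\theta)$ term stays put while the $\omega_{\al}$ terms (which can favor $a^{\theta'}_*$) grow, so the exclusion established in the previous claim can flip. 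You would then need to re-tune the $\omega_\theta$ weights, which takes mass from somewhere and perturbs the $\atheta_0$-vs-$\atheta_1$ comparison, and so on; nothing in the stated constant separations obviously closes this loop, and the claim needs \emph{exact} preservation of every best response (equality of $\OC$, not an inequality). As written, the construction is underspecified and the key verification is missing.

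The paper's construction sidesteps all of this: set $\mu'_c(\wtilde)=0$ and rescale \emph{every} other state uniformly, $\mu'_c(\omega) = \frac{1}{1-\mu_c^*(\wtilde)}\mu_c^*(\omega)$ for $\omega \ne \wtilde$. Then in each obedience inequality the non-$\wtilde$ terms are all multiplied by the same positive constant (preserving every sign, hence every strict/weak preference among $\atheta_0,\atheta_1,\atilde_2$ and the wrong-type actions simultaneously), while the $\wtilde$ term either was already zero or was the negative $-\mu_c(\wtilde)P(\theta)N$ term for $a'=\atilde_1$, whose removal only strengthens the inequality. Best responses are therefore literally identical for every signal, and the utility equality and polynomial-time claims follow immediately. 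Your proposal needs to be repaired by replacing the targeted reallocation with this uniform rescaling (or by actually carrying out the multi-signal bookkeeping you defer).
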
 
    \begin{proof}
        From claim 2, we already know that $\mu^*_c$ satisfies $a^*(\mu^*_c, s_{\theta}) \in \{\atheta_1, \atheta_0\}$. We now show that any weight $\mu^*_c$ places on $\mu(\wtilde)$ can be shifted without changing this invariant. For each signal $s_{\theta}$, let $a_{\theta}$ denote the receiver's optimal action for this signal. Then the receiver's obedience for $a_{\theta}$ implies:
        \begin{equation}\label{eq:claim_3_eq}
            -\mu_c(\wtilde)P(\theta)v(a', \wtilde) - \mu_c(\omega_{\theta})v(a', \omega_{\theta}) + \frac{1}{|\Theta|}\sum_{\al}{\mu_c(\omega)}[v(\atheta, \omega_{\al}) - v(a', \omega_{\al})] \geq 0 \,\, \forall a'
        \end{equation}
        Now consider a $\mu'_c$ where $\mu'_c(\wtilde) = 0$ and $\mu'_c(\omega \ne \wtilde) = \frac{1}{1 - \mu_c^*(\wtilde)}\mu_c^*(\omega)$. This is clearly a valid distribution since $\sum{\mu_c(\omega)} = \frac{1}{1 - \mu_c^*(\wtilde)}\sum{\mu_c^*(\omega)} = 1$. When $a' = \atilde_1$, since the invariant is originally maintained and $v(\atilde_1, \wtilde) = +N$, the negative first term in Eq.~\eqref{eq:claim_3_eq} becomes 0 and the last two terms (which together must have been positive) are just increased in scale. Hence the invariant is maintained. For any $a' \ne \atilde_1$, the first term is 0 in Eq.~\eqref{eq:claim_3_eq}, and the adjusted $\mu'_c$ simply scales the remaining two terms which must be non-negative. Hence the invariant is always maintained. In other words, $a^*(\mu'_c, s_{\theta}) = a^*(\mu^*_c, s_{\theta}) \in \{\atheta_0, \atheta_1\}$. Lastly, since the choice of $\mu_c$ only affects the sender through the decision taken by the receiver, and both $\mu^*_c$ and $\mu'_c$ lead the receiver to always behave in the same way, the sender utility is unchanged and the claim holds.
    \end{proof}

    We now prove BSG can be reduced to OF. For a BSG instance $\IBS = (\Theta, \A_{\ell}, \A_{f}, P(\theta), u_{\ell}, u_f)$, we construct an instance $\IOC = (\Omega, \A, S, \mu_0, \pi, u, v)$ as described earlier, in poly-time. Next, consider an instance $\IOC' = (\Omega, \A, S, \mu_0, \pi, \tfrac{1}{1-\varepsilon}u, v)$, which is identical to $\IOC$, except all sender utilities are now scaled by $\frac{1}{1-\varepsilon}$. Note that claims (1) and (3) depend purely on the receiver utility and sender utilities for $\atheta$ actions at $\omega_{\al}$ states being non-negative and the statement of (2) highlights that it holds when the sender utilities are scaled by a positive constant. In other words, all three claims hold on instance $\IOC'$. We now show that for any optimal $\mu^*_c$ to instance $\IOC'$, there exists a feasible $x'$ that achieves the same utility on the corresponding BSG instance. Similarly, for an optimal $x^*$ to $\IBS$, there exists a $\mu'_c$ that achieves the same utility on the corresponding OF instance. This naturally implies $BS(\IBS, x^*) = \OC(\IOC', \mu_c^*)$.

    \noindent \textbf{The ``$\Longrightarrow$'' direction:}
    Suppose we have an optimal $\mu^*_c$ for instance $\IOC'$; without loss of generality, we assume $\mu^*_c(\wtilde) = 0$ (if this is not the case, we can use Claim 3 to construct it to be so in poly-time). Since at each $s_{\theta}$, we are guaranteed that $a^*(\mu^*_c, s_{\theta}) \in \{\atheta_1, \atheta_0\}$, the sender utility is simply $(1 - \varepsilon)\sum_{\theta}{P(\theta)u(a^*(\mu^*_c, s_{\theta}))}$, which corresponds to the utility at state $\wtilde$ (note that $\mu_0(\wtilde)$ is not 0). For any $s_{\theta}$, without loss of generality, let $\atheta_1$ denote the optimal action. Then obedience with respect to $\atheta_0$ (the only other action possible since claim 3 disavows all others) implies:
    \begin{equation}
        \frac{1}{|\Theta|}\sum_{\omega_{\al}}{\mu^*_c(\omega_{\al})[v(\atheta_1, \omega_{\al}) - v(\atheta_0, \omega_{\al})]} + \mu^*_c(\omega_{\theta})\underbrace{[v(\atheta_1, \omega_{\theta}) - v(\atheta_0, \omega_{\theta})]}_{0} \geq 0
    \end{equation}
    Let $x' \in \Delta^{|\A_{\ell}|}$ be as follows: $x(\al) = \frac{1}{\sum_{\omega'_{\al}}{\mu^*_c(\omega'_{\al})}}\mu^*_c(\omega_{\al})$. Clearly this is a valid strategy since $\sum_{\al}{x(\al)} = 1$. Further, since this is just scaling of the $\mu_c^*(\omega_{\al})$ we have that:
    \begin{align}
        0 &\leq \sum_{\al}{x(\al)[v(\atheta_1, \omega_{\al}) - v(\atheta_0, \omega_{\al})]} = \sum_{\al}{x(\al)[u^{\theta}_f(a_1, \al) - u^{\theta}_f(a_0, \al)]}
    \end{align}
    This implies that the optimal action for a follower of type $\theta$ for strategy $x'$, $a^*_f(\theta, x) = a^*(\mu^*_c, s_{\theta})$, which is the optimal action for the OF receiver for the optimal \context $\mu^*_c$ and signal $s_{\theta}$. For $* \in \{0,1\}$, since the sender utility for $\atheta_*$ actions at the $\wtilde$ state in $\IOC$ is the same as the leader's utility for action $a^*$ in BS, and we are using $\IOC'$ where this sender utility is scaled by $\frac{1}{1-\varepsilon}$, we have that:
    \begin{equation}
        \OC(\mu^*_c) = (1-\varepsilon)\sum_{\theta}P(\theta)u(a^*(\mu_c^*, s_{\theta})) = \sum_{\theta}{P(\theta)u_{\ell}(a_f^*(x', \theta))} = BS(x')
    \end{equation}

    \noindent \textbf{The ``$\Longleftarrow$'' direction:}  Suppose we have an optimal solution to the $x^*$ to the BSG instance $\IBS$. Then by definition, the obedience condition holds for any type $\theta$ and the follower's optimal action. For an arbitrary type $\theta$, let the optimal receiver action be $a_1$ without loss of generality. Then:
    \begin{equation}\label{eq:ctx_only_IC_BS}
        \sum_{\al}{x^*(\al)[u^{\theta}_f(a_1, \al) - u^{\theta}_f(a_0, \al)]} \geq 0
    \end{equation}
    Now consider constructing $\mu'_c$ as follows. We first set $\mu'_c(\wtilde) = 0$ and $\mu'_c(\omega_{\theta}) = \frac{\vmax}{|\Theta|M}$ for all $\theta$. Due to claim 1, we already know that under this strategy, the receiver in the $\IOC$ instance will only choose between $\{\atheta_0, \atheta_1\}$ upon receiving a signal $s_{\theta}$ - in other words, we need not concern ourselves with actions $\atilde_1, \atilde_2$ or any $a_*^{\theta'}$, since these are dominated. Next, we set $\mu'_c(\omega_{\al}) = \left(1 - \frac{\vmax}{M}\right)x^*(\al)$. Observe that this is a valid distribution since $\sum_{\omega}\mu'_c(\omega) = \left(1 - \frac{\vmax}{M}\right) + \frac{\vmax}{M} = 1$. We then observe since Eq. \eqref{eq:ctx_only_IC_BS} holds for $x(\al)$, and $\mu'_c$ is simply a rescaling of $x(\al)$ on the $\omega_{\al}$ states, and $u_f^{\theta}(a_*, \al) = v(\atheta_*, \omega_{\al})$:
    \begin{equation}
         \sum_{\omega_{\al}}{\mu'_c(\omega_{\al})[v(\atheta_1, \omega_{\al}) - v(\atheta_0, \omega_{\al})]} \geq 0
    \end{equation}
    The expression is indeed sufficient to conclude that $\atheta_1$ is optimal for the OF instance receiver on getting signal $s_{\theta}$ since our construction of $\mu'_c$ ruled out all other actions except $a_*^{\theta}$. Since $u_{\ell}(\atheta_*, \al) = u_{\ell}(\atheta_*) = u(\atheta_*, \wtilde)$ in the $\IOC$ instance, and we are using $\IOC'$ where this sender utility is scaled by $\frac{1}{1-\varepsilon}$, we have that:
    \begin{align}
        \BS(x^*, \IBS) &= \sum_{\theta}{P(\theta)u_{\ell}(a^*_f(x, \theta))} = (1-\varepsilon)\sum_{\theta}{P(\theta)\frac{1}{(1-\varepsilon)}u_{\ell}(a^*_f(x, \theta))}\\
        &= (1-\varepsilon)\sum_{\theta}{P(\theta)u(a^*(\mu'_c, s_{\theta}))} = \OC(\mu', \IOC')
    \end{align}
    where the last equality follows from the fact that $\mu_0(\omega_{\al}) = 0$ and the receiver is always taking actions of type $\atheta_*$ on signal $s_{\theta}$, wherein we recall that $\pi(s_{\theta} | \omega_\theta) = 1$ sender utility $u(\atheta_*, \omega_{\theta}) = 0$.

    We have thus shown that the specific class of Bayesian Stackelberg games proven by \citet{conitzer2006computing} to be NP-Hard, can be expressed as an instance of the optimal \context problem, whose optimal solution exactly matches that of the BSG instance. The result of \cite{conitzer2006computing} in-fact, implies something stronger. They show that for a graph $G = (V, E)$, it is possible to construct a BSG instance of the type above such that the graph has an independent set of size $K$ if and only if the optimal leader utility in the BSG instance is at least $\frac{|E|}{|E| + 1} + \frac{K}{|V|(|E| + 1)}$.

    \noindent Their reduction uses $|E| + |V|$ types with the $P_{min} = \frac{1}{|V|(|E| + 1)}$. Since the sender utility is binary, there is no independent set of size $K$ if and only if the optimal leader utility $\leq \frac{|E|}{|E| + 1} + \frac{K-1}{|V|(|E| + 1)}$. This means that any $\frac{1}{2|V|(|E|+1)}$ additive approximation to the optimal leader utility would allow us to solve the $K$-Independent set problem, which is NP-Hard. Since they have $|E| + |V|$ and $|V|$ leader actions, we can formally state that it is NP-Hard to compute a $\frac{1}{2|\Theta||\A_{\ell}|}$ additive approximation to the BSG problem. 

    This additive approximation factor is predicated when the sender utility includes constant $L > \frac{|\Theta|}{\varepsilon}$ and $N,K \geq \frac{1}{(1-\varepsilon)P_{min}}$ for some $\varepsilon \in (0,1)$. To normalize this for utilities in the range $[0,1]$, we must divide by the range. If $N$ or $K$ dominates, then the range is $\frac{1}{(1-\varepsilon)P_{min}} + 1$ and any approximation constant must be greater than $\frac{1}{2|\Theta||\A_{\ell}|} \cdot \frac{(1-\varepsilon) P_{min}}{1 + (1-\varepsilon)P_{min}} \geq \frac{P_{min}(1-\varepsilon)}{4|\Theta||\A_{\ell}|}$. Now conversely, if $L$ dominates, then the range is $\frac{|\Theta|}{\varepsilon}  + 1$ and thus the approximation constant must be greater than $\frac{\varepsilon}{2|\Theta||\A_{\ell}|(\varepsilon + |\Theta|)} \geq \frac{\varepsilon}{4|\Theta|^2|\A_{\ell}|}$. In the optimal framing instance we construct for the reduction, $|\Theta| = |S|$ and $|\Omega| \geq |\A_{\ell}|$. Thus, it is NP-Hard to approximate the OF problem up to an additive $\min\left(\frac{P_{min}(1-\varepsilon)}{2|S||\Omega|},  \frac{\varepsilon}{4|\Theta|^2|\A_{\ell}|} \right)$ factor.

\subsection{Proof of Proposition \ref{prop:fixed-scheme-discontinuous}}
\label{proof:fixed-scheme-discontinuous}
Let $(u, v)$ be a pair of utility functions sampled from some continuous distribution. Recall that we consider receiver utility functions $v$ such that for every possible action, there is some belief in $\Delta(\Omega)$ such that this action is strictly optimal (inducible). Consider any pair of actions $a_1, a_2 \in A$.  Since $a_1$ is strictly inducible, there must be some state $\omega_1 \in \Omega$ under which $v(a_1, \omega_1) > v(a_2, \omega_1)$.  Since $a_2$ is strictly inducible, there must be some state $\omega_2 \in \Omega$ under which $v(a_1, \omega_2) < v(a_2, \omega_2)$. This means that, if the receiver's prior $\mu$ is deterministically on $\omega_1$, then we have 
\begin{align*}
    \sum_{\omega \in \Omega} \mu(\omega) \pi(s_0 | \omega) \Big( v(a_1, \omega) - v(a_2, \omega) \Big) ~ > ~ 0
\end{align*}
since $\pi(s_0|\omega_1) > 0$ by assumption. 
If the receiver's prior $\mu$ is deterministically on $\omega_2$, then we have 
\begin{align*}
    \sum_{\omega \in \Omega} \mu(\omega) \pi(s_0 | \omega) \Big( v(a_1, \omega) - v(a_2, \omega) \Big) ~ < ~ 0 
\end{align*}
since $\pi(s_0|\omega_2) > 0$ by assumption. 
Then, by the intermediate value theorem, there must exist a prior belief $\tilde \mu$ supported on $\{\omega_1, \omega_2\}$ only, namely, $\tilde \mu \in B_{\omega_1, \omega_2} = \{\mu \in \Delta(\Omega) \mid \mu(\omega_1) > 0, \mu(\omega_2)> 0,\forall \omega \notin\{\omega_1, \omega_2\}, \mu(\omega) = 0 \}$, and an action $a' \ne a_1$ such that the receiver is indifferent between $a_1$ and $a'$ upon receiving signal $s_0$: 
\begin{align*}
    0 & ~ = ~ \sum_{\omega \in \Omega} \tilde \mu(\omega) \pi(s_0 | \omega) \Big( v(a_1, \omega) - v(a', \omega) \Big) \\
    & ~ = ~ \tilde \mu(\omega_1) \pi(s_0|\omega_1) \Big( v(a_1, \omega_1) - v(a', \omega_1) \Big) + \tilde \mu(\omega_2) \pi(s_0|\omega_2) \Big( v(a_1, \omega_2) - v(a', \omega_2) \Big)
\end{align*}
and moreover $a'$ and $a_1$ are both weakly better than any other actions:
\begin{align*}
    a', a_1 \in \argmax_{a\in \A} \sum_{\omega \in \Omega} \tilde \mu(\omega) \pi(s_0|\omega) v(a, \omega). 
\end{align*}
Note that $a'$ may or may not be equal to $a_2$. Next, consider the receiver's best-response action $\tilde a^*_s$ upon receiving any signal $s \ne s_0$, under signaling scheme $\pi$ and prior $\tilde \mu$: 
\begin{align*}
    \tilde a^*_s \in \argmax_{a\in \A} \sum_{\omega \in \Omega} \tilde \mu(\omega) \pi(s|\omega) v(a, \omega). 
\end{align*}
Because $v$ is randomly sampled from a continuous distribution, and $\tilde \mu$ already made the receiver indifferent between $a'$ and $a_1$ at signal $s_0$, the probability that $\tilde \mu$ will make the receiver indifferent between any two actions under signal $s$ is $0$. So, $\tilde a_s^*$ must be unique for any $s \ne s_0$, with strict inequality
\begin{align*}
    \sum_{\omega \in \Omega} \tilde \mu(\omega) \pi(s|\omega) v(\tilde a^*, \omega) ~ > ~ \sum_{\omega \in \Omega} \tilde \mu(\omega) \pi(s|\omega) v(a, \omega), \quad \forall a \in \A\setminus\{\tilde a^*_s\}. 
\end{align*}
This means that, for sufficiently small $\eps > 0$, the receiver's best-response actions under the following two prior beliefs
\begin{align*}
    \tilde \mu^{+\eps} = (\tilde \mu(\omega_1)+\eps, \tilde \mu(\omega_2)-\eps, 0, \ldots, 0), \quad \quad \tilde \mu^{-\eps} = (\tilde \mu(\omega_1)-\eps, \tilde \mu(\omega_2)+\eps, 0, \ldots, 0)
\end{align*}
will still be $\tilde a^*_s$, given signal $s \ne s_0$. 

However, given signal $s_0$, because the receiver is indifferent between $a'$ and $a_1$ under prior $\tilde \mu$, the receiver will strictly prefer action $a_1$ under prior $\tilde \mu^{+\eps}$ and strictly prefer action $a'$ under prior $\tilde \mu^{-\eps}$, for sufficiently small $\eps > 0$. This means that the sender's utilities under priors $\tilde \mu^{+\eps}$ and $\tilde \mu^{-\eps}$ are 
\begin{align*}
    U_\pi(\tilde \mu^{+\eps}) & ~ = ~ \sum_{\omega \in \Omega} \mu_0(\omega) \Big( \sum_{s\in \Ss\setminus \{s_0\}} \pi(s|\omega) u(\tilde a^*_s, \omega) ~ + ~ \pi(s_0|\omega) u(a_1, \omega) \Big) \\
    U_\pi(\tilde \mu^{-\eps}) & ~ = ~ \sum_{\omega \in \Omega} \mu_0(\omega) \Big( \sum_{s\in \Ss\setminus \{s_0\}} \pi(s|\omega) u(\tilde a^*_s, \omega) ~ + ~ \pi(s_0|\omega) u(a', \omega) \Big). 
\end{align*}
We see that
\begin{align*}
    U_\pi(\tilde \mu^{+\eps}) - U_\pi(\tilde \mu^{-\eps}) & ~ = ~ \sum_{\omega \in \Omega} \mu_0(\omega) \pi(s_0|\omega) \Big( u(a_1, \omega) - u(a', \omega) \Big). 
\end{align*}
Because we assumed $\mu_0(\omega) > 0$, $\pi(s_0|\omega) > 0$, $\forall \omega \in \Omega$, and the randomly sampled utility function satisfies $u(a_1, \omega) \ne u(a', \omega)$ with probability $1$, we have 
\begin{align*}
    U_\pi(\tilde \mu^{+\eps}) - U_\pi(\tilde \mu^{-\eps}) & ~ = ~ C\ne 0 
\end{align*}
for some constant $C \ne 0$ independent of $\eps$.  This means that $U_\pi(\mu)$ is not continuous at $\tilde \mu$. 

\section{Omitted Proofs in Section \ref{sec:joint}}\label{appendix:joint}
\subsection{Proof of Observation \ref{ob:revelation_principle}}
\begin{proof}
    Consider an unrestricted signal space $S$, and for an instance $\I$, let $(c^*, \pi^*)$ denote the optimal strategy, with $\mu_c^*$ denoting the framing-induced belief. For this strategy, let $m: \A \rightarrow \Ss$ denote the correspondence between actions to signals under $(\mu_c^*, \pi^*)$. Then the sender utility is:
    \begin{equation}\label{eq:rev_principal_sender}
        \sum_{\omega \in \Omega}\sum_{a \in \A}u(a, \omega)\sum_{s \in m(a)}{\pi^*(s|\omega)}
    \end{equation}
    Consider a scheme $\pi'(a|\omega) = \sum_{s \in m(a)}{\pi(s|\omega)}$. We note that the receiver takes action $a$ when the receiver observes signal $a$ under this scheme since:
    \begin{align*}
        &  \sum_{\omega \in \Omega}{\mu^*_c(\omega)\pi^*(s|\omega)[v(a, \omega) - v(a', \omega)] \geq 0} \qquad \forall s \in m(a), ~ \forall a' \in \A \\
        & \implies \sum_{\omega \in \Omega}{\mu^*_c(\omega)[v(a, \omega) - v(a', \omega)] \sum_{s \in m(a)}{\pi^*(s|\omega)}} \geq 0 \qquad \forall a' \in \A
    \end{align*}
    It is thus clear that the sender utility from Eq. \eqref{eq:rev_principal_sender} in unchanged by using this direct scheme with signal space $\Ss$ equal $\A$ as action recommendations. 
\end{proof}

\subsection{Proof of Theorem \ref{thrm:joint_continuous}}
\label{proof:joint-continuous}
Without loss of generality, assume that the utility functions of the sender and the receiver are bounded: $\forall a \in \A, \forall \omega \in \Omega$, $u(a, \omega) \in [0, 1], v(a, \omega) \in [0, 1]$. Recall that $U^*(\mu)$ is the solution to the linear program outlined in \eqref{equation:joint_lp}. We aim to show that $U^*(\mu)$ is continuous at any $\mu \in \Delta(\Omega)$ satisfying $\mu(\omega) > 0, \forall \omega \in \Omega$. We break this result into a set of intermediate claims.

\begin{lemma}[Continuity of posterior]\label{lem:continuity-posterior}
Let $\pi : \Omega \to \Delta(\Ss)$ be any signaling scheme.  Let $\mu, \mu' \in \Delta(\Omega)$ be two receiver beliefs.  Let $\mu_s$, $\mu'_s$ be the posterior beliefs induced by signal $s$ under $\pi$ and priors $\mu$, $\mu'$ respectively.  Suppose $\min_{\omega \in \Omega} \mu(\omega) \ge p_0 > 0$.  Then, $\| \mu_s - \mu'_s \|_1 \le \frac{2}{p_0} \| \mu - \mu' \|_1$. 
\end{lemma}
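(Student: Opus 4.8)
The plan is to compare the two posteriors through their normalizing constants. Write $Z := \sum_{\omega \in \Omega} \mu(\omega)\pi(s|\omega)$ and $Z' := \sum_{\omega \in \Omega} \mu'(\omega)\pi(s|\omega)$ for the probabilities that signal $s$ is sent under $\mu$ and $\mu'$ respectively; we may assume both are positive, since otherwise $s$ is never sent and the posterior is undefined. Then $\mu_s(\omega) = \mu(\omega)\pi(s|\omega)/Z$ and $\mu'_s(\omega) = \mu'(\omega)\pi(s|\omega)/Z'$. The first step is a telescoping decomposition: adding and subtracting $\mu'(\omega)\pi(s|\omega)/Z$ gives
\begin{equation*}
\mu_s(\omega) - \mu'_s(\omega) = \frac{\big(\mu(\omega) - \mu'(\omega)\big)\pi(s|\omega)}{Z} \;+\; \mu'_s(\omega)\cdot\frac{Z' - Z}{Z}.
\end{equation*}
Taking absolute values, summing over $\omega$, and using $\sum_{\omega}\mu'_s(\omega) = 1$ together with the triangle inequality yields
\begin{equation*}
\| \mu_s - \mu'_s \|_1 \;\le\; \sum_{\omega \in \Omega}\frac{\pi(s|\omega)}{Z}\,\big|\mu(\omega) - \mu'(\omega)\big| \;+\; \frac{|Z - Z'|}{Z}.
\end{equation*}

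The key observation — and the only point at which the hypothesis $\min_\omega \mu(\omega) \ge p_0$ is used — is that $\pi(s|\omega)/Z = \mu_s(\omega)/\mu(\omega) \le 1/p_0$, because $\mu_s(\omega) \le 1$ and $\mu(\omega) \ge p_0$. This immediately bounds the first term above by $\tfrac{1}{p_0}\|\mu - \mu'\|_1$. For the second term, note $|Z - Z'| = \big|\sum_\omega (\mu(\omega) - \mu'(\omega))\pi(s|\omega)\big| \le \sum_\omega |\mu(\omega) - \mu'(\omega)|\,\pi(s|\omega)$, so $|Z-Z'|/Z \le \sum_\omega (\pi(s|\omega)/Z)\,|\mu(\omega)-\mu'(\omega)| \le \tfrac{1}{p_0}\|\mu-\mu'\|_1$ by the same observation. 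Adding the two bounds gives $\|\mu_s - \mu'_s\|_1 \le \tfrac{2}{p_0}\|\mu - \mu'\|_1$, as claimed.

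I do not anticipate a genuine obstacle here; the proof is a short perturbation estimate. The one pitfall to avoid is trying to lower-bound the normalizer $Z$ directly by $p_0$ — this fails, since $Z$ can be far smaller than $p_0$ when signal $s$ is sent with tiny probability at every state — and instead routing the argument through the identity $\pi(s|\omega)/Z = \mu_s(\omega)/\mu(\omega)$, which is automatically at most $1/p_0$ regardless of how rarely $s$ occurs. Once that substitution is in place, both error terms are controlled identically and the constant $2$ falls out.
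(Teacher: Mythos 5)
Your proof is correct and follows essentially the same route as the paper's: the same add-and-subtract decomposition through the intermediate term $\mu'(\omega)\pi(s|\omega)/Z$, the same key bound $\pi(s|\omega)/Z \le 1/p_0$ (which the paper derives by lower-bounding $Z \ge p_0\sum_{\omega'}\pi(s|\omega')$, equivalent to your identity $\pi(s|\omega)/Z = \mu_s(\omega)/\mu(\omega)$), and the same $1/p_0$ bound on each of the two error terms. Your treatment of the second term is marginally cleaner — you collapse it to $|Z-Z'|/Z$ directly via $\sum_\omega \mu'_s(\omega)=1$ rather than the paper's detour through $\max_{\omega'}\pi(s|\omega')$ — but the argument is the same.
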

\begin{proof}[Proof of Lemma \ref{lem:continuity-posterior}]
Let $\pi(s) = \sum_{\omega \in \Omega} \mu(\omega) \pi(s|\omega)$ and $\pi'(s) = \sum_{\omega \in \Omega} \mu'(\omega) \pi(s|\omega)$ be the probability of signal $s$ under prior $\mu$ and $\mu'$ respectively. By the definition of $\mu_s, \mu'_s$ and by triangle inequality, 
\begin{align*}
    \| \mu_s - \mu'_s \|_1 & = \sum_{\omega \in \Omega} \big| \tfrac{\mu(\omega) \pi(s|\omega)}{\pi(s)} - \tfrac{\mu'(\omega) \pi(s|\omega)}{\pi'(s)} \big| \\
    & \le \sum_{\omega \in \Omega} \big| \tfrac{\mu(\omega) \pi(s|\omega)}{\pi(s)} - \tfrac{\mu'(\omega) \pi(s|\omega)}{\pi(s)} \big| + \sum_{\omega \in \Omega} \big| \tfrac{\mu'(\omega) \pi(s|\omega)}{\pi(s)} - \tfrac{\mu'(\omega) \pi(s|\omega)}{\pi'(s)} \big|.
\end{align*}
For the first term above, 
\begin{align*}
    \sum_{\omega \in \Omega} \big| \tfrac{\mu(\omega) \pi(s|\omega)}{\pi(s)} - \tfrac{\mu'(\omega) \pi(s|\omega)}{\pi(s)} \big| = \sum_{\omega \in \Omega} \tfrac{\pi(s|\omega)}{\pi(s)} |\mu(\omega) - \mu'(\omega)|.
\end{align*}
We note that, $\forall \omega \in \Omega$, 
\begin{align}\label{eq:pi-ratio-le-p0}
    \tfrac{\pi(s|\omega)}{\pi(s)} = \tfrac{\pi(s|\omega)}{\sum_{\omega' \in \Omega} \mu(\omega') \pi(s|\omega')} &\le \tfrac{\pi(s|\omega)}{p_0 \sum_{\omega' \in \Omega}\pi(s|\omega')} \le \tfrac{1}{p_0}. \\
    \implies \sum_{\omega \in \Omega} \big| \tfrac{\mu(\omega) \pi(s|\omega)}{\pi(s)} - \tfrac{\mu'(\omega) \pi(s|\omega)}{\pi(s)} \big| &\le \sum_{\omega \in \Omega} \tfrac{1}{p_0} |\mu(\omega) - \mu'(\omega)| = \tfrac{1}{p_0} \| \mu - \mu' \|_1.  
\end{align}

For the second term, 
\begin{align*}
    \sum_{\omega \in \Omega} \big| \tfrac{\mu'(\omega) \pi(s|\omega)}{\pi(s)} - \tfrac{\mu'(\omega) \pi(s|\omega)}{\pi'(s)} \big| & = \sum_{\omega \in \Omega} \mu'(\omega) \pi(s|\omega) \big| \tfrac{\pi'(s) - \pi(s)}{\pi(s) \pi'(s)} \big| \\
    & = \sum_{\omega \in \Omega} \mu'(\omega) \pi(s|\omega) \big| \tfrac{\sum_{\omega'\in\Omega} (\mu'(\omega') - \mu(\omega')) \pi(s|\omega')}{\pi(s) \pi'(s)} \big| \\
    & \le \sum_{\omega \in \Omega} \mu'(\omega) \pi(s|\omega) \tfrac{\sum_{\omega'\in\Omega} |\mu'(\omega') - \mu(\omega')|  \cdot \max_{\omega'\in\Omega} \pi(s|\omega')}{\pi(s) \pi'(s)} \\
    & = \| \mu' - \mu \|_1 \sum_{\omega \in \Omega} \tfrac{\mu'(\omega) \pi(s|\omega)}{\pi'(s)} \tfrac{\max_{\omega'\in\Omega} \pi(s|\omega')}{\pi(s)} \\
    \text{by \eqref{eq:pi-ratio-le-p0}} ~ & \le \| \mu' - \mu \|_1 \sum_{\omega \in \Omega} \tfrac{\mu'(\omega) \pi(s|\omega)}{\pi'(s)} \tfrac{1}{p_0} = \tfrac{1}{p_0} \| \mu' - \mu \|_1.
\end{align*}

Therefore, we obtain $ \| \mu_s - \mu'_s \|_1 \le \tfrac{2}{p_0}\| \mu' - \mu \|_1$. 
\end{proof}

Recall that in the model (Section \ref{sec:model}) we assumed ``every action $a \in \A$ is strictly inducible'' in the receiver.  This means that there exists a constant $D>0$ such that, for every action $a \in \A$, there exists a belief $\eta_a \in \Delta(\Omega)$ for which $\E_{\omega \sim \eta_a}[v(a, \omega) - v(a', \omega)] \ge D > 0$ for every $a' \ne a$.  

We now want to show the following: \emph{Suppose the prior $\mu \in \Delta(\Omega)$ satisfies $\mu(\omega) \ge 2p_0 > 0, \forall \omega \in \Omega$.  Then, for any prior $\mu'$ satisfying $\| \mu' - \mu \|_1 \le \eps < \min\{p_0, \frac{p_0^2D}{2}\}$, we have:}
\[\big| U^*(\mu') - U^*(\mu) \big| \le \frac{4\eps}{p_0^2 D}.\]
This will directly prove the theorem. 

Let $\pi^*$ be the optimal signaling scheme for $\mu$, namely, a solution to the linear program in the definition of $U^*(\mu)$. 
Let $\pi^*(a)$ be the unconditional probability that $\pi^*$ sends signal $a$ under prior $\mu$: $ \pi^*(a) = \sum_{\omega \in \Omega} \mu(a) \pi^*(a | \omega)$.  Let $\mu_a \in \Delta(\Omega)$ be the posterior belief induced by signal $a$ under prior $\mu$: 
\begin{align*}
    \mu_a(\omega) = \frac{\mu(\omega) \pi^*(a|\omega)}{\pi^*(a)}, \quad \forall \omega \in \Omega. 
\end{align*}
Since $\pi^*$ is persuasive (the constraint in the linear program), $a$ must be an optimal action for the receiver on posterior $\mu_a$:  
\begin{align*}
    \E_{\omega \sim \mu_a}[v(a, \omega) - v(a', \omega)] \ge 0, ~ \forall a'\ne a.   
\end{align*}
According to inducibility assumption, there exists a belief $\eta_a \in \Delta(\Omega)$ for which $\E_{\omega \sim \eta_a}[v(a, \omega) - v(a', \omega)] \ge D > 0$ for every $a' \ne a$.  Consider the convex combination of $\mu_a$ and $\eta_a$ with coefficients $1-\delta, \delta$ (we will choose $\delta$ in the end): $\xi_a = (1 - \delta) \mu_a + \delta \eta_a$. By the linearity of expectation, $a$ must be better than any other action $a'$ by $\delta D$ on belief $\xi_a$: 
\begin{align}\label{eq:delta-D}
\E_{\xi_a}[v(a, \omega) - v(a', \omega)] = (1-\delta)\E_{\hat \mu_a}[v(a, \omega) - v(a', \omega)] + \delta \E_{\eta_a}[v(a, \omega) - v(a', \omega)] \ge \delta D. 
\end{align}
Let $\xi = \sum_{a\in A} \pi^*(a) \xi_a \in \Delta(\Omega)$, and write $\mu$ as the convex combination of $\xi$ and another belief $\chi \in \Delta(\Omega)$: 
\begin{align} \label{eq:convex-combination-2}
    \mu ~ = ~ (1-y) \xi + y \chi ~ = ~ \sum_{a\in A} (1-y) \pi^*(a) \xi_a ~ + ~ y \chi.
\end{align}

\begin{lemma}[Proposition 1 of \citet{zu_learning_2021}]
\label{lem:small-y}
If $\delta \le p_0$, then there exist $\chi$ on the boundary of $\Delta(\Omega)$ and $0\le y \le \frac{\delta}{p_0} \le 1$ that satisfy \eqref{eq:convex-combination-2}. 
\end{lemma}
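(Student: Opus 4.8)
\textit{Proof plan.} The plan is to recognize \eqref{eq:convex-combination-2} as an elementary fact about extending a segment to the boundary of the simplex, after putting $\xi$ into a convenient form. First I would simplify $\xi$: since $\mu_a(\omega) = \mu(\omega)\pi^*(a|\omega)/\pi^*(a)$, summing over $a$ gives $\sum_{a\in\A}\pi^*(a)\mu_a = \mu$, i.e. the Bayesian posteriors average back to the prior. Setting $\rho := \sum_{a\in\A}\pi^*(a)\eta_a \in \Delta(\Omega)$ and using $\xi_a = (1-\delta)\mu_a + \delta\eta_a$, this yields
\[
\xi ~=~ \sum_{a\in\A}\pi^*(a)\xi_a ~=~ (1-\delta)\,\mu + \delta\,\rho,
\qquad\text{equivalently}\qquad
\mu - \xi ~=~ \delta\,(\mu - \rho).
\]
So $\mu$ lies on the line through $\xi$ in the direction $\mu-\rho$, just past $\xi$ by a factor $\delta$.

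Next I would obtain $\chi$ by pushing this ray out to the relative boundary of $\Delta(\Omega)$. If $\mu=\rho$ then $\xi=\mu$ and $y=0$ works with any boundary point $\chi$; otherwise, since $\mu$ and $\rho$ both sum to one, the index set $\{\omega : \rho(\omega) > \mu(\omega)\}$ is nonempty and
\[
t^* ~:=~ \min_{\omega:\,\rho(\omega) > \mu(\omega)}\ \frac{\mu(\omega)}{\rho(\omega) - \mu(\omega)}
\]
is the largest $t\ge 0$ for which $\chi := \mu + t\,(\mu-\rho)$ still lies in $\Delta(\Omega)$; one checks that this $\chi$ is a genuine distribution with a zero entry at the minimizing index, hence on the boundary. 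The one quantitative input is the bound $t^*\ge p_0$: every term of the minimum satisfies $\mu(\omega)/(\rho(\omega)-\mu(\omega)) \ge p_0/1 = p_0$, using $\mu(\omega)\ge p_0$ and $\rho(\omega)-\mu(\omega)\le\rho(\omega)\le 1$.

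Finally I would read off $y$ from $\chi-\mu = t^*(\mu-\rho) = (t^*/\delta)(\mu-\xi)$, which rearranges to $\mu = \tfrac{t^*}{t^*+\delta}\,\xi + \tfrac{\delta}{t^*+\delta}\,\chi$; hence \eqref{eq:convex-combination-2} holds with $y = \tfrac{\delta}{t^*+\delta}\in\big[0,\tfrac{\delta}{p_0}\big]$, and the hypothesis $\delta\le p_0$ gives $y\le 1$. I do not expect a genuine obstacle here: the argument is purely geometric, and the only points that need care are the degenerate cases ($\mu=\rho$; coordinates where $\xi$ or $\rho$ vanishes, which are excluded since $\mu(\omega)\ge p_0$ forces $\xi(\omega)=(1-\delta)\mu(\omega)+\delta\rho(\omega)>0$) and the exact choice of lower-bound constant — the displayed claim preceding the lemma even assumes $\mu(\omega)\ge 2p_0$, which only strengthens the bound on $y$.
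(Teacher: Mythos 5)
Your argument is correct, and it is worth noting that the paper itself does not prove this lemma at all---it simply cites Proposition~1 of \citet{zu_learning_2021}---so you have supplied the missing self-contained proof. Your route is the standard geometric one underlying that cited result: Bayes plausibility gives $\sum_a \pi^*(a)\mu_a = \mu$, hence $\xi = (1-\delta)\mu + \delta\rho$ with $\rho = \sum_a \pi^*(a)\eta_a \in \Delta(\Omega)$ (note $\sum_a \pi^*(a)=1$, which you use implicitly and should state), so $\mu-\xi = \delta(\mu-\rho)$ and one extends the ray from $\xi$ through $\mu$ until it exits the simplex at $\chi$. Your computation of the exit parameter $t^*$, the lower bound $t^*\ge p_0$ from $\mu(\omega)\ge p_0$ and $\rho(\omega)-\mu(\omega)\le 1$, and the resulting $y=\delta/(t^*+\delta)\le \delta/p_0$ all check out, as do the degenerate cases ($\mu=\rho$, and the nonemptiness of $\{\omega:\rho(\omega)>\mu(\omega)\}$ otherwise). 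The only cosmetic caveat is that the lemma as stated in the paper does not itself hypothesize $\min_\omega\mu(\omega)\ge p_0$; that hypothesis lives in the surrounding text ($\mu(\omega)\ge 2p_0$), and your proof correctly identifies it as the one quantitative input needed, so the bound you prove is exactly what the enclosing argument requires.
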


Since \eqref{eq:convex-combination-2} is a convex decomposition of the prior $\mu$, according to \citep{kamenica2011bayesian}, there exists a signaling scheme $\tilde \pi$ that induces posterior $\xi_a$ with probability $(1-y) \pi^*(a)$, for $a\in \A$, and the posterior that puts all probability on $\omega$ with probability $y \chi(\omega)$, for $\omega \in \Omega$.  Namely, $\tilde \pi$ has signal space $\Ss = \A \cup \Omega$ and signal probability
\begin{align*}
    \tilde \pi(s | \omega) = \begin{cases}
    \frac{(1-y)\pi^*(a) \xi_a(\omega)}{\mu(\omega)} & \text{ for } s = a \in \A; \\
    \frac{y\chi(\omega)}{\mu(\omega)} & \text{ for } s = \omega \in \Omega; \\
    0 & \text{ otherwise.}
    \end{cases}
\end{align*}
It is not hard to verify that, under prior $\mu$ and signaling scheme $\tilde \pi$, the posterior induced by signal $a \in \A$ is equal to $\xi_a$, and the posterior induced by signal $\omega$ is the deterministic distribution on $\omega$. 

We show that, whenever $\tilde \pi$ sends an action recommendation $a \in \A$, the recommendation is persuasive for the receiver under any prior $\mu'$ in $B_1(\mu, \eps) = \{ \mu' : \| \mu' - \mu \|_1 \le \eps \}$.
\begin{claim}\label{claim:tilde-pi-persuasive}
Suppose $\delta \ge \frac{2\eps}{p_0D}$. Then, for any prior $\mu' \in B_1(\hat \mu, \eps)$, any action recommendation $a\in \A$ from $\tilde \pi$ is persuasive. 
\end{claim}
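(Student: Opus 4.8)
\medskip
\noindent\textbf{Proof proposal for Claim~\ref{claim:tilde-pi-persuasive}.}\;
The plan is a one-step perturbation argument: combine the posterior-continuity bound of Lemma~\ref{lem:continuity-posterior} with the strict $\delta D$ margin \eqref{eq:delta-D} that the construction of $\tilde\pi$ already builds in under the prior $\mu$. Fix an action $a \in \A$. If $\tilde\pi$ never sends signal $a$ (equivalently $\pi^*(a)=0$), persuasiveness is vacuous, so assume $\pi^*(a)>0$; then the posterior induced by signal $a$ under $(\mu,\tilde\pi)$ is exactly $\xi_a$. Since $\eps < p_0$ and $\mu(\omega)\ge 2p_0$ for every $\omega$, any $\mu'$ with $\|\mu'-\mu\|_1\le\eps$ satisfies $\mu'(\omega)\ge 2p_0-\eps>0$, so signal $a$ is still sent with positive probability under $\mu'$ and the posterior $\xi'_a$ it induces under $(\mu',\tilde\pi)$ is well-defined.

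The next step is to bound how much this posterior moves. Applying Lemma~\ref{lem:continuity-posterior} to the scheme $\tilde\pi$ with base prior $\mu$ (whose minimum coordinate is at least $p_0>0$) and perturbed prior $\mu'$ gives $\|\xi'_a-\xi_a\|_1 \le \tfrac{2}{p_0}\,\|\mu'-\mu\|_1 \le \tfrac{2\eps}{p_0}$. Since $v(a,\cdot),v(a',\cdot)\in[0,1]$, the map $\omega\mapsto v(a,\omega)-v(a',\omega)$ is bounded in $[-1,1]$, so for every $a'\ne a$,
\[
\E_{\omega\sim\xi'_a}\!\big[v(a,\omega)-v(a',\omega)\big] ~\ge~ \E_{\omega\sim\xi_a}\!\big[v(a,\omega)-v(a',\omega)\big] - \|\xi'_a-\xi_a\|_1 ~\ge~ \delta D - \frac{2\eps}{p_0},
\]
where the last inequality is \eqref{eq:delta-D}. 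The hypothesis $\delta \ge \tfrac{2\eps}{p_0 D}$ gives $\delta D \ge \tfrac{2\eps}{p_0}$, so the right-hand side is nonnegative; hence $a$ is a best response for the receiver on posterior $\xi'_a$, i.e.\ the recommendation $a$ is persuasive under $\mu'$. As $a$ and $\mu'$ were arbitrary, the claim follows.

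I do not expect a genuine obstacle here; the only points needing care are (i) that $\xi'_a$ is well-defined, which is exactly why the standing assumption $\eps<\min\{p_0,\tfrac{p_0^2 D}{2}\}$ is in force, and (ii) tracking the constant in Lemma~\ref{lem:continuity-posterior} --- the bound $\mu(\omega)\ge 2p_0$ more than satisfies the lemma's hypothesis, and the factor $2$ in the lemma's conclusion is precisely what makes $\delta \ge \tfrac{2\eps}{p_0 D}$ (rather than $\tfrac{\eps}{p_0 D}$) the natural threshold. Note also that the deterministic signals $s=\omega$ of $\tilde\pi$ are irrelevant to this claim, since it concerns only action recommendations $a\in\A$.
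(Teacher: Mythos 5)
Your proof is correct and follows essentially the same route as the paper's: apply Lemma~\ref{lem:continuity-posterior} to bound the shift of the signal-$a$ posterior by $\tfrac{2\eps}{p_0}$, transfer that to the receiver's utility gap via boundedness of $v$, and invoke the $\delta D$ margin from \eqref{eq:delta-D} together with $\delta \ge \tfrac{2\eps}{p_0 D}$. If anything, your write-up is slightly more careful than the paper's (handling the case $\pi^*(a)=0$ and making explicit that the lemma is applied to $\tilde\pi$, so the unperturbed posterior is $\xi_a$), but the substance is identical.
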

\begin{proof}
By continuity of posterior (Lemma~\ref{lem:continuity-posterior}), the posteriors induced by signal $a$ under prior $\mu$ and $\mu'$ satisfy
\begin{align*}
    \| \mu_a - \mu_a' \|_1 \le \tfrac{2}{p_0} \| \mu - \mu' \|_1 \le \tfrac{2\eps}{p_0}. 
\end{align*}
Note that the posterior $\mu_a = \xi_a$, so $\| \xi_a - \mu'_a \|_1 \le \frac{2\eps}{p_0}$.  Then, since the receiver's utility is in $[0, 1]$, for any action $a' \ne a$,
\begin{align*}
    \big| \E_{\omega \sim \mu'_a}[v(a, \omega) - v(a', \omega)] - \E_{\omega \sim \xi_a}[v(a, \omega) - v(a', \omega)] \big| \le \| \mu_a - \xi_a \|_1 \le \tfrac{2\eps}{p_0}.
\end{align*}
Together with \eqref{eq:delta-D}, we get
\begin{align*}
    \E_{\omega \sim \mu_a}[v(a, \omega) - v(a', \omega)] \ge \delta D - \tfrac{2\eps}{p_0} \ge 0. 
\end{align*}
Thus, the action recommendation $a$ is persuasive. 
\end{proof}

Then, we show that the signaling scheme $\tilde \pi$ is ``close to'' $\pi^*$ in the following sense: 
\begin{claim}
\label{claim:signaling-scheme-close}
For any $a \in \A$ and $\omega \in \Omega$, $|\tilde \pi(a|\omega) - \pi^*(a|\omega) | \le \frac{\delta}{p_0} + y$.     
\end{claim}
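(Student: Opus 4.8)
The plan is to prove Claim~\ref{claim:signaling-scheme-close} by a direct, self-contained computation using the two explicit formulas already at hand: the defining expression $\tilde\pi(a|\omega) = \frac{(1-y)\pi^*(a)\xi_a(\omega)}{\mu(\omega)}$ for $a \in \A$, and the identity $\pi^*(a|\omega) = \frac{\pi^*(a)\mu_a(\omega)}{\mu(\omega)}$, which is just a rearrangement of the definition $\mu_a(\omega) = \frac{\mu(\omega)\pi^*(a|\omega)}{\pi^*(a)}$. (If $\pi^*(a) = 0$ then, since $\mu$ has full support, both $\tilde\pi(a|\omega)$ and $\pi^*(a|\omega)$ vanish and there is nothing to prove, so one may assume $\pi^*(a) > 0$.)

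First I would substitute $\xi_a = (1-\delta)\mu_a + \delta\eta_a$ into the formula for $\tilde\pi(a|\omega)$ and collect terms to obtain
\begin{align*}
    \tilde\pi(a|\omega) \;=\; (1-y)(1-\delta)\,\pi^*(a|\omega) \;+\; (1-y)\delta\,\frac{\pi^*(a)\eta_a(\omega)}{\mu(\omega)}.
\end{align*}
Subtracting $\pi^*(a|\omega)$ and using $1 - (1-y)(1-\delta) = y + \delta - y\delta$, this gives
\begin{align*}
    \tilde\pi(a|\omega) - \pi^*(a|\omega) \;=\; -\,(y+\delta-y\delta)\,\pi^*(a|\omega) \;+\; (1-y)\delta\,\frac{\pi^*(a)\eta_a(\omega)}{\mu(\omega)}.
\end{align*}

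Next I would bound the two terms separately using the available normalizations: $\pi^*(a|\omega),\, \pi^*(a),\, \eta_a(\omega) \in [0,1]$, $\,y \in [0,1]$, $\,p_0 \le 1$, and the hypothesis $\mu(\omega) \ge 2p_0 \ge p_0$. This places the first (nonpositive) term in $[-(y+\delta-y\delta),\,0]$ and the second (nonnegative) term in $[0,\,\delta/p_0]$, so the difference lies in $[-(y+\delta-y\delta),\,\delta/p_0]$ and therefore $\big|\tilde\pi(a|\omega) - \pi^*(a|\omega)\big| \le \max\{\,y+\delta-y\delta,\ \delta/p_0\,\} \le \tfrac{\delta}{p_0} + y$, where the final inequality uses $\delta \le \delta/p_0$ (since $p_0 \le 1$) and $y + \delta - y\delta \le y + \delta$.

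There is no genuine obstacle here — the whole argument is a handful of lines of algebra — but the one point to be careful about is that the two terms in the displayed difference partially cancel (one is nonpositive, the other nonnegative), so a term-by-term triangle inequality would only yield the weaker bound $y + \delta + \delta/p_0$; one should instead use $|-A+B| \le \max\{A,B\}$ for $A,B \ge 0$ to land exactly on $\delta/p_0 + y$. Once Claim~\ref{claim:signaling-scheme-close} is established, the rest of the proof of Theorem~\ref{thrm:joint_continuous} follows the sketch after Lemma~\ref{lem:small-y}: since $\tilde\pi$ is persuasive under every nearby prior (Claim~\ref{claim:tilde-pi-persuasive}), it is feasible for $U^*(\mu')$, and this closeness claim controls how much its sender utility can differ from $U^*(\mu)$; choosing $\delta$ of order $\eps$ and optimizing then yields the stated bound $|U^*(\mu') - U^*(\mu)| \le \frac{4\eps}{p_0^2 D}$, establishing local Lipschitz continuity.
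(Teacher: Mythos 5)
Your proof is correct and takes essentially the same route as the paper's: a direct algebraic computation from $\tilde \pi(a|\omega) = \frac{(1-y)\pi^*(a)\xi_a(\omega)}{\mu(\omega)}$, the identity $\pi^*(a|\omega) = \frac{\pi^*(a)\mu_a(\omega)}{\mu(\omega)}$, and $\xi_a = (1-\delta)\mu_a + \delta\eta_a$, using $\pi^*(a),\eta_a(\omega),\mu_a(\omega)\le 1$ and $\mu(\omega)\ge p_0$. The only cosmetic difference is the grouping of terms — the paper writes the difference as $(1-y)\bigl(\tfrac{\pi^*(a)}{\mu(\omega)}(\xi_a(\omega)-\mu_a(\omega))\bigr) - y\,\pi^*(a|\omega)$ and applies the triangle inequality to get $(1-y)\tfrac{\delta}{p_0} + y \le \tfrac{\delta}{p_0}+y$, whereas you expand $\xi_a$ first and exploit the sign structure of the two resulting terms; both land on the same bound.
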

\begin{proof}
By definition, 
\begin{align*}
    |\tilde \pi(a|\omega) - \pi^*(a|\omega) | & ~ = ~ \Big| \frac{(1-y)\pi^*(a) \xi_a(\omega)}{\mu(\omega)} - \frac{\pi^*(a) \mu_a(\omega)}{\mu(\omega)} \Big| \\
    & ~ \le ~ (1-y) \Big| \frac{\pi^*(a) \xi_a(\omega)}{\mu(\omega)} - \frac{\pi^*(a) \mu_a(\omega)}{\mu(\omega)} \Big| + y \cdot \frac{ \pi^*(a) \mu_a(\omega)}{\mu(\omega)} \\
    & ~ = ~  (1-y) \frac{\pi^*(a)}{\mu(\omega)} \big| \xi_a(\omega) - \mu_a(\omega) \big| + y \cdot \pi^*(a|\omega) \\
    & ~ = ~  (1-y) \frac{\pi^*(a)}{\mu(\omega)} \cdot \delta \big| \eta_a(\omega) - \mu_a(\omega) \big| + y \cdot \pi^*(a|\omega) \\
    & ~ \le ~  (1-y) \frac{1}{p_0} \cdot \delta \cdot 1 + y \cdot 1  \le ~ \frac{\delta}{p_0} + y. 
\end{align*}
\end{proof}

Let $U(\mu, \tilde \pi)$ be the sender's expected utility when using signaling scheme $\tilde \pi$. Since the action recommendation from $\tilde \pi$ are persuasive under prior $\mu$ (Claim~\ref{claim:tilde-pi-persuasive}), the receiver takes $a$ when receiving signal $a$.  When receiving any signal $\omega$, the receiver takes some action $a^*_\omega \in \argmax_{a\in \A} v(a, \omega)$. So, 
\begin{align*}
U(\mu, \tilde \pi) & ~ = ~ \sum_{\omega \in \Omega} \mu_0(\omega) \Big( \sum_{a\in \A} \tilde \pi(a | \omega) u(a, \omega) ~ + ~ \tilde \pi(\omega | \omega) u(a^*_\omega, \omega) \Big). 
\end{align*}
Because we assumed $u(a, \omega) \ge 0$, 
\begin{align*}
U(\mu, \tilde \pi) & ~ \ge ~ \sum_{\omega \in \Omega} \mu_0(\omega) \sum_{a\in \A} \tilde \pi(a | \omega) u(a, \omega) ~ =: ~ U_{\A}(\tilde \pi). 
\end{align*}
where $U_\A(\tilde \pi)$ denotes the expected utility from action recommendation signals, which is also the objective function of the linear program in the definition in $U^*(\mu)$.  Note that $U_\A(\pi^*) = U^*(\mu)$.  We claim that $U_\A(\tilde \pi)$ cannot be too much worse than $U_\A(\pi^*)$: 
\begin{claim}\label{claim:tilde-pi-approximately-optimal}
Given $\delta \ge \tfrac{2\eps}{p_0D}$, we have 
$U_\A(\tilde \pi) \ge U_\A(\pi^*) - \tfrac{2\delta}{p_0}$. 
\end{claim}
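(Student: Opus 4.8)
The plan is to control $U_\A(\pi^*) - U_\A(\tilde\pi)$ by writing $\tilde\pi(a\mid\omega)$ explicitly as a perturbation of $\pi^*(a\mid\omega)$, rather than by summing the crude per-entry estimate of Claim~\ref{claim:signaling-scheme-close} over $\A$ (that route would cost a spurious factor $|\A|$ and only gives a two-sided $\ell_1$ bound, which is too lossy).

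First I would substitute $\xi_a = (1-\delta)\mu_a + \delta\eta_a$ into the definition $\tilde\pi(a\mid\omega) = \tfrac{(1-y)\pi^*(a)\,\xi_a(\omega)}{\mu(\omega)}$ and use the posterior identity $\pi^*(a)\,\mu_a(\omega) = \mu(\omega)\,\pi^*(a\mid\omega)$ to obtain
\[
\tilde\pi(a\mid\omega) \;=\; (1-y)(1-\delta)\,\pi^*(a\mid\omega) \;+\; (1-y)\,\delta\,\frac{\pi^*(a)\,\eta_a(\omega)}{\mu(\omega)}.
\]
Hence $\pi^*(a\mid\omega) - \tilde\pi(a\mid\omega) = \bigl(1 - (1-y)(1-\delta)\bigr)\pi^*(a\mid\omega) - (1-y)\,\delta\,\tfrac{\pi^*(a)\eta_a(\omega)}{\mu(\omega)}$, where $1-(1-y)(1-\delta) = y+\delta-y\delta \le y+\delta$, so the first term is at most $(y+\delta)\,\pi^*(a\mid\omega)$ and the second term is non-negative.

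Next, since we have normalized $u(a,\omega)\in[0,1]$, in particular $u\ge 0$, the non-negative correction term can only hurt $\tilde\pi$ relative to the bound, so
\[
U_\A(\pi^*) - U_\A(\tilde\pi) \;=\; \sum_{\omega}\mu_0(\omega)\sum_{a}\bigl(\pi^*(a\mid\omega)-\tilde\pi(a\mid\omega)\bigr)u(a,\omega) \;\le\; (y+\delta)\sum_{\omega}\mu_0(\omega)\sum_{a}\pi^*(a\mid\omega)\,u(a,\omega) \;\le\; y+\delta,
\]
using that $\pi^*(\cdot\mid\omega)\in\Delta(\A)$ so $\sum_a\pi^*(a\mid\omega)=1$, that $u\le 1$, and that $\sum_\omega\mu_0(\omega)=1$. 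Finally, Lemma~\ref{lem:small-y} (which applies since $\delta\le p_0$) gives $y\le\delta/p_0$, and $p_0\le\tfrac12<1$ because $\mu(\omega)\ge 2p_0$ for every $\omega$; therefore $y+\delta\le \delta/p_0+\delta\le 2\delta/p_0$, which is exactly the asserted bound $U_\A(\tilde\pi)\ge U_\A(\pi^*)-\tfrac{2\delta}{p_0}$.

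The main obstacle is conceptual rather than computational: one must notice that it is wrong to bound $\|\tilde\pi(\cdot\mid\omega)-\pi^*(\cdot\mid\omega)\|_1$ and instead exploit the one-sided structure — the part of $\tilde\pi$ that shifts mass toward the inducing beliefs $\eta_a$ carries a sign that is harmless once $u\ge 0$, so only the ``shrinkage'' factor $1-(1-y)(1-\delta)=O(\delta+y)=O(\delta)$ contributes. After that, everything is bookkeeping; note that the hypothesis $\delta\ge\tfrac{2\eps}{p_0D}$ in the claim statement is not actually needed for this particular estimate (it is the condition that makes $\tilde\pi$ persuasive in Claim~\ref{claim:tilde-pi-persuasive}), though keeping it does no harm.
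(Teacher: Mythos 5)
Your proof is correct, and it takes a genuinely different (and in fact tighter) route than the paper's. The paper derives the claim by plugging the per-entry bound $|\tilde\pi(a|\omega)-\pi^*(a|\omega)|\le \tfrac{\delta}{p_0}+y$ from Claim~\ref{claim:signaling-scheme-close} into the objective and then asserting that the resulting error term $\bigl(\tfrac{\delta}{p_0}+y\bigr)\sum_{\omega}\mu_0(\omega)\sum_{a}u(a,\omega)$ is at most $\tfrac{\delta}{p_0}+y$; but with only $u(a,\omega)\in[0,1]$ the inner sum $\sum_a u(a,\omega)$ can be as large as $|\A|$, so the paper's displayed chain as written is off by a factor of $|\A|$ (it would need $\sum_a u(a,\omega)\le 1$, or a sharper use of the structure of $\tilde\pi$, to close). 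Your argument sidesteps this entirely: by expanding $\xi_a=(1-\delta)\mu_a+\delta\eta_a$ inside the definition of $\tilde\pi$ and using $\pi^*(a)\mu_a(\omega)=\mu(\omega)\pi^*(a|\omega)$, you get the exact identity $\tilde\pi(a|\omega)=(1-y)(1-\delta)\pi^*(a|\omega)+(1-y)\delta\,\tfrac{\pi^*(a)\eta_a(\omega)}{\mu(\omega)}$, so the deficit $\pi^*(a|\omega)-\tilde\pi(a|\omega)$ is a one-sided $(y+\delta)$-fraction of $\pi^*(a|\omega)$ minus a nonnegative correction; since $u\ge 0$ the correction only helps, and the loss telescopes against $U_\A(\pi^*)\le 1$ with no dependence on $|\A|$. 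The bookkeeping $y\le\delta/p_0$ (Lemma~\ref{lem:small-y}) and $\delta\le\delta/p_0$ then gives $y+\delta\le 2\delta/p_0$ exactly as claimed. Your side remarks are also accurate: the hypothesis $\delta\ge\tfrac{2\eps}{p_0 D}$ is not used in this estimate (it is only needed for persuasiveness in Claim~\ref{claim:tilde-pi-persuasive}), and since the final choice is $\delta=\tfrac{2\eps}{p_0 D}$, your bound yields the same $\tfrac{4\eps}{p_0^2D}$ Lipschitz constant in the theorem. The one implicit assumption you share with the paper is that $\pi^*(a)>0$ for the signals considered (so that $\mu_a$ is well defined); signals sent with probability zero can be dropped without loss.
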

\begin{proof}
By definition, 
\begin{align*}
    U_\A(\tilde \pi) & = \sum_{\omega \in \Omega} \mu_0(\omega) \sum_{a\in \A} \tilde \pi(a | \omega) u(a, \omega) \\
    \text{(by Claim~\ref{claim:signaling-scheme-close})} & \ge \sum_{\omega \in \Omega} \mu_0(\omega) \sum_{a\in \A} \pi^*(a | \omega) u(a, \omega) - \Big(\frac{\delta}{p_0} + y \Big) \underbrace{\sum_{\omega \in \Omega} \mu_0(\omega) \sum_{a\in \A} u(a, \omega)}_{\le 1} \\
    & \ge U_\A(\pi^*) - y - \frac{\delta}{p_0} \\
    & \ge U_\A(\pi^*)  - \frac{2\delta}{p_0}
\end{align*}
where the last line used $y \le \frac{\delta}{p_0}$ from Lemma~\ref{lem:small-y}. 
\end{proof}

Because $\tilde \pi$ is persuasive for any prior $\mu' \in B_1(\mu, \eps)$ and $U_\A(\tilde \pi) \ge U_\A(\pi^*) - \tfrac{2\delta}{p_0}$, we have:
\begin{align*}
    U^*(\mu')  \ge ~ U(\mu', \tilde \pi) &\ge ~ U_\A(\tilde \pi) \\
              & ~ \ge ~ U_\A(\pi^*) - \frac{2\delta}{p_0} \\
              & ~ = ~ U^*(\mu) - \frac{2\delta}{p_0}  \ge ~ U^*(\mu) - \frac{4\eps}{p_0^2 D}
\end{align*}
where we let $\delta = \frac{2\eps}{p_0 D}$. By a symmetric argument, we also have $U^*(\mu)  \ge ~ U^*(\mu') - \frac{4\eps}{p_0^2 D}$, which implies $|U^*(\mu') - U^*(\mu)| \le \frac{4\eps}{p_0^2 D}$.

\subsection{Proof of Theorem \ref{thm:joint-unconstrained}}
\label{proof:joint-unconstrained}
\begin{proof}
    Beginning with the first claim, recall that we consider sender utilities to be positive (this is without loss of generality since the sender utility is linear in $u(a, \omega)$, allowing us to normalize as needed). Let $a^u(\omega) = \argmax_{a}{u(a, \omega)}$ and $a^v(\omega) = \argmax_{a}{v(a, \omega)}$ denote the optimal action for the sender and receiver at state $\omega$ respectively. Further, let $\omega_{\min} = \argmin_{\omega}{\mu_0(\omega)u(a^u(\omega), \omega)}$; it captures the ``least'' important state for the sender assuming sender-optimal action at each state.

    Since $B = \Delta(\Omega)$, consider using framing to induce a belief $\mu_c(\omega_{\min}) = 1 - \varepsilon$ and $\frac{\varepsilon}{|\Omega|-1}$ for all other states; let the signaling scheme $\pi$ deterministically recommend the received optimal action at $\omega_{\min}$ and sender optimal actions at all other states. In other words:
     \begin{equation*}
        \pi(a^v(\omega_{\min}) | \omega_{\min}) = 1 \quad \text{and} \quad \forall \omega \ne \omega_{\min} \, : \, \pi(a^u(\omega) | \omega) = 1. 
    \end{equation*}
    Since the sender's utility is non-negative, if the receiver follows the recommended actions outlined by this scheme, the sender is guaranteed to achieve at least the following utility:
    \begin{equation*}
        \sum_{\omega \ne \omega_{\min}}{\mu_0(\omega)u(a^u(\omega), \omega)} \geq u_{\max} - \frac{1}{|\Omega|}u_{\max} = \left(1 - \frac{1}{|\Omega|}\right)u_{\max}
    \end{equation*}
    where we note that the maximal possible utility achievable by the sender is $u_{\max} = \sum_{\omega}{\mu_0(\omega)u(a^u(\omega), \omega)}$ and by the pigeonhole principle, $\mu_0(\omega_{\min})u(a^u(\omega_{\min}), \omega_{\min}) \le \tfrac{1}{|\Omega|}u_{\max}$. We now show that following the recommended actions is $\varepsilon$-obedient for the receiver. Indeed, for a recommended action $a$ and any other action $a'$, the $\eps$-obedience expression for this pair under the scheme $\pi$ is:
    \begin{equation*}
        (1-\varepsilon)\pi(a|\omega_{\min})[v(a, \omega_{\min}) - v(a', \omega_{\min})] + \sum_{\omega \ne \omega_{\min}}{\frac{\varepsilon}{|\Omega|-1}\pi(a|\omega)[v(a, \omega) - v(a', \omega)]} ~ \ge ~ -\eps. 
    \end{equation*}
    When the receiver gets recommended action $a^v(\omega_{\min})$, this expression becomes at least $(1-\varepsilon)$ $[v(a^v(\omega_{\min}), \omega_{\min})- v(a',\omega_{\min} )] - \varepsilon$ $ \geq -\varepsilon$ since at state $\omega_{\min}$, action $a^v(\omega_{\min})$ is optimal for the receiver. Conversely, if the receiver is recommended some action $a \ne a^v(\omega_{\min})$, then the expression is: $\sum_{\omega \ne \omega_{\min}}$ ${\tfrac{\varepsilon}{|\Omega|-1}\pi(a|\omega)[v(a, \omega) - v(a', \omega)]}$ $ \geq -\varepsilon$ since the utilities are bounded to $[0,1]$. So, $\eps$-obedience is satisfied.   

    For the second claim, we can make use of an additional assumption: the sender utility is state-independent. This means that their utility is maximized if the receiver takes some action $a^u$ at \emph{all} possible states. We also recall from Section~\ref{sec:model} that any action of the receiver is inducible -- that is, for any action $a \in \A$, there exists some belief $\mu_a$ wherein taking $a$ is optimal for the receiver. For the sender optimal action $a^u$, let $\mu_{a^u}$ be the belief where $a^u$ is optimal for the receiver. Indeed, $\mu_{a^u}$ can be computed using the following set of linear constraints:
    \begin{equation*}
        \,\sum_{\omega}\mu(\omega)[v(a^u, \omega) - v(a', \omega)] \geq 0, \quad \forall a' \in \A.
    \end{equation*}
    Since $B = \Delta(\Omega)$, we can choose a framing $c$ to induce belief $\mu_{a^u}$. We accompany this framing with an uninformative signaling scheme $\pi$. Such a scheme reveals no information about the realized state. For example, for a given action $a_1$, $\pi(a_1| \omega) = 1$ for all $\omega$ is an uninformative scheme. Under this joint strategy, the receiver's belief is always $\mu_{a^u}$, where their best-response is to take the sender optimal action $a^u$. This is thus an optimal strategy for the sender.  
 \end{proof}


\section{Experimental Setup: Advertising Case Study}\label{appendix:advertising_exp}
\subsection{Instance Setup and Parameters}
Here we include the detailed setup of 
the advertising example
we used to verify our framework experimentally. We describe the target customer and brand to the LLM as follows. These are used to generate new framing 
as well as estimate customers' priors. 
The description of the outerwear line are taken directly from 
the product line of a real-world clothing brand (Patagonia), with the brand name changed to ``Himalaya''. 
Note that the utilities here are not within the range $[0,1]$, but can be normalized to be so without loss of generality. 

\begin{itemize}
\item States: There are 4 possible states: \emph{(Chic/Trendy, Durable), (Chic/Trendy, Not Durable), (Functional/Practical, Durable), (Functional/Practical, Not Durable)}.
\item Description of Target Demographic: \emph{Our new target demographic are fashion-aware average mall consumers, who are casually into an active lifestyle. They are middle class, but are willing to pay a slight premium for quality and style. This is a segment that the athleisure market has dominated of late, with brands like Lululemon, Nike being the main players. Consumers here like the idea of performance wear (e.g., for hiking or skiing) but are not deeply familiar with or motivated by technical characteristics. What matters most is whether the clothes looks stylish in everyday environments like schools, cafés, or city streets. Functionality and durability is a nice bonus, but aesthetic appeal primarily drives their interest.}
\item Description of the Brand: \emph{Himalaya is fairly well known brand in the outdoor enthusiast, mountaineering, and adventure community. It has a reputation for bulletproof build quality and performance, and valuing sustainability. It is launching a new outerwear line, that includes parkas, ski-jackets and ski-pants, windbreakers, and thermal layers. All products here are made with 100\% postconsumer recycled nylon ripstop and without PFAS. They meet H2No Performance Standard for waterproofness and breathability. Fabric and inner membrane have durable water repellent (DWR) finish.}
\item True/Sender/Brand Prior: $\mu_0 = [0.225, 0.125, 0.5, 0.15]$. This was obtained by asking the LLM about how products would fit into each of the categories (i.e. their prior) given the brand description above.
\item Brand Utility (columns correspond to the 4 states and the rows are "buy on sale", "buy regular price" and "not buy" actions:
\begin{equation*}
\begin{bmatrix}
    \text{N/A} & 4.0 & 5.0 & 4.5 \\ 
    8.0 & 7.0 & 6.5 & 6.0 \\ 
    0 & 0 & 0 & 0
\end{bmatrix} 
\end{equation*}
\item Target Consumer Utility (columns correspond to the 4 states and the rows are "buy on sale", "buy regular price" and "not buy" actions:
\begin{equation*}
\begin{bmatrix}
    \text{N/A} & 5.0 & -30 & -40 \\ 6.0 & 4.0 & -50 & -65 \\ 0 & 0 & 0 & 0
\end{bmatrix} 
\end{equation*}

\end{itemize}

\subsection{Prompts Used for LLMs}\label{app_sec:llm_prompts}
\paragraph{\bf Estimating framing induced beliefs: } To estimate the belief for a given framing, we use the following prompt template. The strings \texttt{Target Demographic} and \texttt{Brand Description} are given above.

{\itshape
You will be used as a proxy for a target demographic to assess shopping inclinations for market research. You will be given a description of the demographic (their preferences, etc) and the motto and description of a clothing line. You will be asked to provide your responses in a JSON format specified in the prompt

GENERAL PROBLEM DESCRIPTION: You are taking the role of someone in the given demographic. You can imagine they categorize clothing into the following categories: (trendy, more durable), (trendy, less durable), (not trendy, more durable), (not trendy, less durable). Please see BUYER DESC for what this buyer values.

BUYER DESC: \texttt{Target Demographic}

BRAND DESC: \texttt{Brand Description}

TASK: Your role is to act as a member of the described demographic and evaluate how you would interpret the products from a new outerwear line based solely on the brand’s motto and product description (See BRAND DESC). Your goal is to determine how you (as an average style-aware mall-shopper) would categorize the product line into the following four quadrants: (Chic/Trendy, Durable), (Chic/Trendy, Not Durable), (Functional/Practical, Durable) and (Functional/Practical, Not Durable). Please return what probabilities (recall they sum to 1) this average user from this demographic would assign to each of these categories for products from this line. Note that this is not about what the demographic cares about or prioritizes in purchases. Instead, focus on how they would interpret the messaging — what assumptions they would make about the clothing’s fashionability and durability from the language, tone, and emphasis in the brand’s description and motto. IMPORTANT: DO NOT MAKE FAR REACHING ASSUMPTIONS OR TRY TO BE UNJUSTIFIABILY AGGRESSIVE. Provide your response in the following JSON format: 

\begin{verbatim}
{
    "reasoning": string,
    "probabilities": {
        "trendy_more_durable": float,
        "trendy_less_durable": float,
        "not_trendy_more_durable": float,
        "not_trendy_less_durable": float
    },  
}
\end{verbatim}
}

\paragraph{\bf Optimizing over the framing space: } To search over the framing space, we use the following prompt template. The key-words \texttt{BRAND\_DESC} and \texttt{BUYER\_DESC} correspond to the instance parameters mentioned earlier. Any generated framing and corresponding feedback is appended to this prompt for the next iteration:

{\itshape
You will be asked to generate a brand motto and description for one of its product lines. For each motto, description you generate, quantitative feedback will be provided on the generated, which you will use to improve what you generate.

TASK DESC: You will be given a BRAND\_DESC that describes the clothing brand 'Himalaya' and a new product line they are trying to launch. You will be given BUYER\_DESC that outlines the features of the demographic they are targeting for this new product line. Your task is to generate a BRAND MOTTO (at most 10 words) and PRODUCT LINE DESC of their new product line (at most 100 words). The motto and description will be shown to members in the target demographic. Their perception of how products from this new line fit into the 4 possible categories this demographic cares about will be measures (quantitatively). Please see BUYER\_DESC for how they partition clothes into 4 possible states - it is their belief over these states that we measure. Feel free to USE OR NOT USE any information in the provided BRAND DESC to sway the target demographic. Not revealing information can sometimes be helpful. Using this perceived prior, we will signal the buyer (using Bayesian Persuasion) to influence their actions (which buy-on-sale, buy-regular-price, not buy). We will compute all of this and give you the numerical utility the company achieves when using your generated motto and description. See FEEDBACK DESC on how the feedback will be structured. Please use this feedback to improve the BRAND MOTTO and PRODUCT DESC you generate. Note that your generated motto and description directly influences how the buyer perceives this Himalaya's new product line. This is captured in their prior, which will directly influence the utility we can derive after signaling. Lastly, feel free to navigate this space to see what works and what doesn't.

FEEDBACK\_DESC: After each BRAND\_MOTTO and PRODUCT\_LINE\_DESC you generate, you will receive feedback: 

1) Current Belief: The belief induced by your framing and the consumer's reasoning.

2) Current Utility, Correctness and Language score: The expected utility the brand achieves with this belief and your scores on not using false claims and use of good natural language.

3) Total Score: (average of quality scores) * utility. THIS IS WHAT WE MAXIMIZE.

4) Belief Change: How beliefs changed from the previous round and whether utility improved or decreased.

5) Best So Far: The best framing found so far with its belief and total score for reference.

IMPORTANT: Use natural, common language.
Do not fabricate claims not supported by the brand facts."

Please generate a BRAND\_MOTTO and PRODUCT\_LINE\_DESC in json form with those as keys. Avoid just re-stating the preferences of the buyer.
}

\paragraph{\bf Verifying soundness of framing: } We use LLMs to also verify the framings generated to ensure they are not making incorrect claims or using poor language.

{\itshape

You will be given factual information about a clothing brand (labelled BRAND\_DESC) and a generated
marketing description (labelled GENERATED\_DESC) that will be shown to consumers.

Your goal is to score the GENERATED\_DESC on factual correctness using ONLY these three scores: 1.0, 0.5, or 0.0.

CORRECTNESS CRITERIA:
- The description should NOT contain claims that directly contradict the BRAND\_DESC

- The description should NOT fabricate specific claims not supported by BRAND\_DESC (e.g., claiming "award-winning" when not mentioned)

- Omitting information is FINE and does not violate correctness

- General marketing language and subjective claims (e.g., "stylish") are FINE

- Specific technical claims must align with BRAND\_DESC if made

SCORING GUIDE (use ONLY these values):
- 1.0: Minor issues or no issues. No clear factual contradictions. Plausible marketing claims are acceptable even if not explicitly stated in BRAND\_DESC. Be generous here.

- 0.5: Some claims are erroneous or stretch the facts. Contains unsupported specific claims that go beyond reasonable marketing language.

- 0.0: Major errors. Contains clear factual lies or direct contradictions of BRAND\_DESC.

Return JSON with keys: "reasoning" (string explaining your score) and "correctness\_score" (must be exactly 1.0, 0.5, or 0.0).
}

\subsection{Framings}\label{app_sec:optimal_framings}

The following is the Patagonia framing (their actual motto and description of a product line) used for the belief estimation experiments and as the starting point for the optimal framing search. Thereafter, we give the best framings generated by Claude Sonnet 4 and GPT 5.2.

\noindent \textbf{Patagonia Slogan:} Build the best product.

\noindent \textbf{Patagonia Product Line Description: } Whether you're seeking boulders and alpine views or stalking rainbow trout, this low-profile, waterproof jacket will keep you dry and provide rain protection from the fork in the trail to the places less traveled, and back again. This 3-layer shell meets our H2No® Performance Standard for exceptional waterproof/breathable protection, and the fabric, membrane and durable water repellent (DWR) finish are made without intentionally added PFAS.

\noindent \textbf{Claude Optimal Slogan:} City Sharp. Mountain Ready.

\noindent \textbf{Claude Product Line Description: } Our newest collection redefines modern outerwear with clean lines, contemporary cuts, and versatile designs that transition effortlessly from city to countryside. Each piece features premium sustainable fabrics and thoughtful construction details that ensure lasting quality without compromising on style. Perfect for the modern lifestyle - whether you're commuting to work, meeting friends for brunch, or enjoying weekend adventures. These aren't just clothes, they're wardrobe essentials that reflect your values and elevate your everyday look with confidence and sophistication.

\noindent \textbf{GPT Optimal Slogan:} Where style meets substance.

\noindent \textbf{GPT Product Line Description: } Himalaya’s new outerwear brings modern, everyday style to real weather: streamlined parkas, ski jackets and pants, windbreakers, and thermal layers that look clean on campus, in cafés, and on the go. The fits are sleek and easy to layer—made to pair with denim, joggers, and sneakers. Underneath the minimal finish: 100\% recycled nylon ripstop, a durable water‑repellent finish, and high performance waterproof/breathable comfort for rainy days and weekend trips.

\end{document}